\newcommand\validsig{\mathrel{\operatorname{\mathsf{sig}}}}
\newcommand\validctx{\mathrel{\operatorname{\mathsf{ctx}}}}
\newcommand\cotype{\operatorname{\mathsf{cotype}}}
\newcommand\type{\operatorname{\mathsf{type}}}
\newcommand\typecotype{\operatorname{\mathsf{(co)type}}}
\newcommand\kind{\operatorname{\mathsf{kind}}}
\newcommand{\m}[1]{\operatorname{\mathsf{{#1}}}}
\newcommand{\mb}[1]{\mathop{\operatorname{\mathbf{{#1}}}}}
\newcommand{\mt}[1]{\mathop{\operatorname{\mathtt{{#1}}}}}
\newcommand{\pitype}[3]{\ensuremath{\Pi {#1} : {#2} .\, {#3}}}
\newcommand\hra\hookrightarrow
\newcommand{\Piargs}[3]{\Pi {#1}:{#2}.\,{#3}}
\newcommand{\pdepth}[1]{_{({#1})}}
\newcommand{\depth}[1]{\pdepth{#1}}
\newcommand\uperased[1]{^{{#1}^o}}
\newcommand\erased[1]{{{#1}^o}}
\newcommand{\bN}{\mathbb{N}}
\newcommand{\enc}[1]{\ulcorner {#1}\urcorner}
\newcounter{numbered}
\newenvironment{numbered}{
    \setcounter{numbered}{0}
    }{
    \setcounter{numbered}{0}
    }
\newcommand{\previousNumber}{\the\numexpr\value{numbered}-1\relax }
\newcommand{\red}[1]{\textcolor{red}{#1}}
\newcommand{\CoLFCircular}{\mathnormal{\text{CoLF}}}
\newcommand{\ctxeqassump}[3]{({#1}\vdash{#2}={#3})}
\newcommand{\varrenamesubst}[1]{\llbracket {#1} \rrbracket}
\newcommand{\guardedin}{\rtimes{}}
\newcommand{\snil}{\operatorname{\mathsf{()}}}
\newcommand{\prepatof}{\mathrel{\hat{:}}}
\newcommand{\prepatofoption}{\mathrel{\overset{\resizebox{.5em}{.18em}{$\bm{(\wedge)}$}}{:}}}
\newcommand{\isprepattern}{\operatorname{\mathsf{prepat}}}
\newcommand{\iscontractive}{\operatorname{\mathsf{contra}}}
\newcommand{\isvalidtrace}{\operatorname{\mathsf{validtrace}}}
\newcommand{\prepatpi}[3]{\Pi {#1} \prepatof {#2}. \, {#3}}
\newcommand{\rsignature}{{\Xi; \Sigma}}
\newcommand{\prepatternvar}[1]{[{#1}]}
\newcommand{\vs}{\vdash_\Sigma}
\newcommand{\citeyear}{\cite}
\newcommand{\oftheextendedversion}[1][]{\iftoggle{journalversion}{ of the extended version{#1}}{}}
\author{Zhibo Chen\,\orcidlink{0000-0003-0045-5024} \and
Frank Pfenning\,\orcidlink{0000-0002-8279-5817}
\\
\email{zhiboc@andrew.cmu.edu}
\email{fp@cs.cmu.edu}
}
\authorrunning{Z. Chen and F. Pfenning}
\institute{
Carnegie Mellon University, Pittsburgh, PA, USA
}
\title{A Logical Framework with \\ Higher-Order Rational (Circular) Terms}
\begin{document}

\maketitle
\begin{abstract}
    Logical frameworks provide natural and direct ways of specifying and
    reasoning within deductive systems.
    The logical framework LF and subsequent developments focus on finitary proof systems, making
    the formalization of circular proof systems in such logical frameworks a cumbersome and awkward task.
    To address this issue, we propose $\CoLFCircular$, a 
    conservative extension of LF with higher-order rational terms
    and mixed inductive and coinductive definitions.
    In this framework, two terms are equal if they unfold to the same infinite regular B\"ohm tree.
    Both term equality and type checking are decidable in $\CoLFCircular$.
    We illustrate the elegance and expressive power of the framework with
    several small case studies.
\end{abstract}

\begin{keywords}
Logical Frameworks, Circular Proofs, Regular B\"ohm Trees
\end{keywords}
    
\section{Introduction}

A logical framework provides a uniform way of formalizing and mechanically checking derivations for
a variety of deductive systems common in the definitions of logics and programming languages.  In
this paper we propose a conservative extension of the logical framework LF~\cite{Harper93jacm} to
support direct representations of rational (circular) terms and deductions.

The main methodology of a logical framework is to establish a bijective correspondence between
derivations of a judgment in the object logic and canonical terms of a type in the framework.  In
this way, proof checking in the object logic is reduced to type checking in the framework.  One
notable feature of LF is the use of abstract binding trees, where substitution in the object logic
can be encoded as substitution in the framework, leading to elegant encodings.  On the other hand,
encodings of rational terms, circular derivations, and their equality relations are rather
cumbersome.  We therefore propose the logical framework $\CoLFCircular$ as a conservative extension
of LF in which both circular syntactic objects and derivations in an object logic can be elegantly
represented as higher-order rational dependently typed terms.  This makes $\CoLFCircular$ a uniform
framework for formalizing proof systems on cyclic structures.  We prove the decidability of type
checking and soundness of equality checking of higher-order rational terms.

While CoLF allows formalization of circular derivations, proofs by coinduction \emph{about} such
circular encodings can only be represented as relations in CoLF, mirroring a similar limitation of
LF regarding induction.  In future work, we plan to extend CoLF to support checking of
meta-theoretic properties of encodings analogous to the way Twelf~\cite{Pfenning98guide} can check
properties of encodings in LF.


The main contributions of this paper are:

\begin{itemize}
    \item The type theory of a logical framework with higher-order rational terms. The theory allows natural and adequate 
    representations of circular objects and circular derivations (Section~\ref{section:implementing_the_infinitary_theory_a_decidable_rational_fragment}).
    \item A decidable trace condition for ensuring the validity of circular terms and derivations arising from mixed inductive and coinductive definitions (Section~\ref{subsection:trace_condition}).
    \item A sound and complete algorithm to decide the equality of two higher-order rational terms (Section~\ref{subsection:term_equality}).
    \item A proof of decidability of type-checking in the framework (Section~\ref{subsection:metatheorems_after_type_checking_rules}).
    \item Case studies of encoding subtyping derivations of recursive types (Section~\ref{section:encoding_subtyping_systems_for_recursive_types}).
\end{itemize}

\iftoggle{journalversion}{ An extended version of this paper, available at \url{https://arxiv.org/abs/2210.06663}, has an appendix that
contains additional materials.}{}
We have implemented $\CoLFCircular$ in OCaml and the implementation can be accessed at 
\url{https://www.andrew.cmu.edu/user/zhiboc/colf.html}.
An additional case study of the meta-encoding the term model of CoLF in CoLF is presented in 
Appendix~\ref{appendix:encoding_cyclic_lambda_terms}\oftheextendedversion.

\section{Mixed Inductive and Coinductive Definitions}
\label{section:mixed_inductive_and_coinductive_definitions}
We motivate our design through simple examples of natural numbers, conatural numbers, 
and finitely padded streams. The examples serve to illustrate 
the idea of coinductive interpretations, and they do not involve dependent types or higher-order terms. 
More complex examples will be introduced later in the case studies (Section~\ref{section:encoding_subtyping_systems_for_recursive_types}). 

\begin{figure}[t]
\begin{tabular}{|l|l|}
\hline
\begin{minipage}{0.44\linewidth}
\vspace{.5ex}

\verb$nat : type.$

\verb$zero : nat.$

\verb$succ : nat -> nat.$
\vspace{1em}

\verb$w1 : nat = succ w1.$ \red{(not valid)}

\hrule
\vspace{0.5ex}

\verb$conat : cotype.$

\verb$cozero : conat.$

\verb$cosucc : conat -> conat.$

\vspace{1em}

\verb$w2 : conat = cosucc w2.$

\verb$w3 : conat = cosucc (cosucc w3).$

\vspace{1em}

\verb$eq : conat -> conat -> type.$

\verb$eq/refl : eq N N.$

\verb$eqw2w3 : eq w2 w3 = eq/refl.$
\end{minipage}
&
\begin{minipage}{0.52\linewidth}

\verb$padding : type.$

\verb$pstream : cotype.$

\verb$cocons : nat -> padding -> pstream.$

\verb$pad : padding -> padding.$

\verb$next : pstream -> padding.$

\vspace{1em}
\verb$s1 : pstream = cocons (succ zero) $

\verb$           (pad (pad (next s1))).$

\verb$p2 : padding = pad p2. $ \red{(not valid)}

\verb$s3 : pstream = cocons zero (next s3). $

\verb$s4 : pstream = cocons zero p5.$

\verb$p5 : padding = next s4.$

\verb$p6 : padding = pad p7.$ \red{(not valid)}

\verb$p7 : padding = pad p6.$ \red{(not valid)}

\end{minipage}
\\
\hline
\end{tabular}
\caption{Signatures and Examples for Section \ref{section:mixed_inductive_and_coinductive_definitions}}
\label{fig:sigantures_for_section_motivation}
\end{figure}
\subsubsection{Natural Numbers.}

The set of natural numbers is inductively generated by 
zero and successor. In a logical framework such as LF, one would encode
natural numbers as the signature consisting of the first three lines in the top left part of Fig.~\ref{fig:sigantures_for_section_motivation}.

The type theory ensures that canonical terms of the type $\mt{nat}$ are in one-to-one correspondence with 
the natural numbers. Specifically the \textit{infinite} stack of successors $\mt{succ\, (succ\, (succ\, \dots))}$ is not a 
valid term of type $\mt{nat}$. Therefore, the circular term \verb$w1$ 
is not a valid term.

\subsubsection{Conatural Numbers.}

We may naturally specify that a type admits a coinductive interpretation 
by introducing a new syntactic kind $\cotype$.
The kind $\cotype$ behaves just like the kind $\type$ except that 
now the terms under $\cotype$  are allowed to be circular.
 A slightly adapted signature would 
encode the set of conatural numbers, shown as the first three lines in the bottom left part of Fig.~\ref{fig:sigantures_for_section_motivation}.

Because $\mt{conat}$ is a coinductive type, the canonical forms of type 
$\mt{conat}$ includes $\mt{cosucc}^n \, \mt{cozero}$ for all $n$ and the infinite stack 
of $\mt{cosucc}$, which is in one to one correspondence with the set of conatural numbers.
Specifically, the infinite stack of \verb$cosucc$, may be represented by the valid circular term \verb$w2$ as in Fig.~\ref{fig:sigantures_for_section_motivation}.
The equality of terms in CoLF is the equality of the infinite trees generated by unfolding the 
terms, which corresponds to a bisimulation between circular terms. For example, an alternative representation of the infinite stack of \verb$cosucc$ is 
the term \verb$w3$, and CoLF will treat \verb$w2$ and \verb$w3$ as equal terms, as shown by the last three lines 
in the bottom left part of Fig.~\ref{fig:sigantures_for_section_motivation}. The terms \verb$w2$ and \verb$w3$ 
are proved equal by reflexivity. On the other hand, a formulation of conats in LF would involve an explicit constructor, e.g. \verb$mu : (conat -> conat) -> conat$.
The encoding of equality is now complicated and  
one needs to work with an explicit equality judgment whenever a \verb$conat$ is used.
Functions defined by coinduction (e.g., bisimulation in 
Appendix~\ref{appendix:bisimulation_example}\oftheextendedversion) need to be encoded as relations in CoLF.

\subsection{Finitely Padded Rational Streams}
\label{subsection:finitely_padded_streams}
As an example of mixed inductive and coinductive definition, we consider rational streams of natural numbers 
with finite paddings in between. These streams are special instances
of left-fair streams \cite{Basold18phd}. 
We define streams coinductively and define paddings inductively,
 such that there are
infinitely many numbers in the stream but only finitely 
many paddings between numbers, shown in the signature consisting of first five lines 
in the right column of Fig.~\ref{fig:sigantures_for_section_motivation}.
For example,
the term \verb$s1$ in Fig.~\ref{fig:sigantures_for_section_motivation} represents
a stream of natural number $1$'s with two paddings in between.
Because \verb$padding$ is a \verb$type$, the term \verb$p2$ is not valid, as it is essentially an infinite stack of 
\verb$pad$ constructors.
 Definitions in a CoLF signature can refer to each other. Thus,
 the terms \verb$s3$ and \verb$s4$ denote the same padded stream, and
 the  terms \verb$p6$, \verb$p7$ and \verb$p2$ denote the same invalid stream consisting of purely paddings.

\subsubsection{Priorities.}

 To ensure the adequacy of representation, types of kind $\cotype$ admit circular terms while 
types of kind $\type$ admit only finitary terms. It is obvious that the
circular term \verb$w1$ 
is \emph{not} a valid term of type \verb$nat$ due to the presence of 
an infinite stack of inductive constructors, 
and the circular term \verb$w2$ 
is a valid term of type \verb$conat$ because it is a stack of coinductive constructors. However, when we have 
both inductive and coinductive types, 
it is unclear whether a circular term (e.g. \verb$s1$) is valid.
Historically, priorities are used to resolve this ambiguity \cite{Charatonik98lics}.
A priority is assigned to each inductive or coinductive type, 
and constructors inherit priorities from their types.
Constructors with the highest priority types are then viewed as primary. 
In CoLF, priorities are determined by the order of their declarations.
Type families declared later have higher priorities than those declared earlier.
In this way,
 the type \verb$pstream$ has higher priority than the type \verb$padding$. 
 Constructor \verb$cocons$ inherits the priority of \verb$pstream$, and the term \verb$s1$ is viewed as an infinite stack of \verb$cocons$ and is thus valid. Similarly, 
terms \verb$s3$ and \verb$s4$ are also valid. If we switch the order of declaration of \verb$padding$ and \verb$pstream$ 
(thereby switching their priorities), then terms \verb$s1$, \verb$s3$, and \verb$s4$ are no longer valid.

\section{The Type Theory}
\label{section:implementing_the_infinitary_theory_a_decidable_rational_fragment}
We formulate the type theory of $\CoLFCircular$, a dependent type theory with higher-order rational terms and decidable type checking. 
The higher-order rational terms correspond to $\bot$-free regular B\"{o}hm trees \cite{Huet98mscs} and have decidable equality.

\subsection{Higher-Order Rational Terms}

When we consider first order terms (terms without $\lambda$-binders), 
 the rational terms are terms with only finitely many distinct subterms, and thus their equality is decidable.
We translate this intuition to the higher-order setting.
 The higher-order rational terms are those with finitely many subterms 
up to renaming of free and bound variables. 
We give several examples of rational and non-rational terms using the signatures in 
Section \ref{section:mixed_inductive_and_coinductive_definitions}. 
\begin{enumerate}
    \item The term \verb$w2$ in Fig.~\ref{fig:sigantures_for_section_motivation} is a first-order rational term.
\item
A stream counting up from zero $\mb{up}_0 = \m{cocons}\, \m{zero}\, (\m{next}\, (\m{cocons}\, (\m{succ}\, \m{zero})\,\\ (\m{next}\, (\dots))))$
is a first-order term that is not rational.
\item
A stream that repeats its argument  $\mb{R_2} = \lambda x.\, \m{cocons}\, x\, (\m{next}\, (\mb{R_2}\, x))$ 
is a higher-order rational term.
\item
A stream that counts up from a given number $\mb{up} = \lambda x.\, \m{cocons}\, x\, (\m{next}\, (\mb{up}\,\\ (\m{succ}\, x)))$
is \textit{not} a rational higher-order term.
\end{enumerate}

In the definitions above, bolded symbols on the left of the equality signs are called recursion constants. 
It is crucial that in higher-order rational terms, all arguments to  recursion constants are bound variables and not 
other kinds of terms.
 We call this restriction the \emph{prepattern restriction} as it is similar to Miller's pattern restriction \cite{Miller91jlc}
except that we allow repetition of arguments.
The prepattern restriction marks the key difference between the higher-order rational term $\mb{R_2}$ and the infinitary term $\mb{up}$.
The term $\mb{up}$ is not rational because the argument to $\mb{up}$ is $\m{succ}\, x$, which is not a bound variable.

\subsection{Syntax}
We build subsequent developments on canonical LF \cite{Harper07jfp}, a formulation of the LF type theory where terms are always in 
their canonical form. Canonical forms do not 
contain  
$\beta$-redexes and are fully $\eta$-expanded with respect to their typing, 
supporting bijective correspondences between object logic derivations and the terms of the framework. 
One drawback of this presentation is that canonical terms are not closed 
under syntactic substitutions, and the technique of hereditary substitution 
addresses this problem \cite{Watkins02tr}. 

The syntax of the theory follows the grammar shown in Fig. \ref{fig:syntax_for_colf_two_circular}.
We use the standard notion of spines. For example, a term $x\, M_1\, M_2\, M_3$ will be written 
as $x \cdot (M_1; M_2; M_3)$ where $x$ is the head and $M_1; M_2; M_3$ is the spine. 
To express rational terms, we add recursive definitions of the form $r : A = M$ to the signature, 
where $M$ must be contractive (judgment $M\iscontractive$) in that the head of $M$ must be a constant or a variable.
Recursive definitions look like notational definitions \cite{Pfenning98pstt}, but their semantics are very different. 
Recursive definitions 
are interpreted recursively in that the definition $M$ may mention the recursion constant $r$, and 
other recursion constants including those defined later in the signature, while
notational definitions in LF \cite{Pfenning98pstt} cannot be recursive.
Recursion constants are treated specially as a syntactic entity that is different 
from variables or constructors (nonrecursive constants). 
To ensure the conservativity over LF, we further require all definitions in $\Sigma$ to be linearly 
ordered. That is, only in the body of a recursive definition can we ``forward reference'', and we can only 
forward reference other recursion constants. All other declarations must strictly refer to names that have been defined previously.
We write $\lambda\overline x$ and $\overline M$ to mean a sequence of $\lambda$-abstractions 
and a sequence of terms respectively. We write $x, y, z$ for variables, $c, d$ for term constants (also called constructors), $a$ for type family constants, 
and $r, r', r''$ for recursion constants.

To enforce the prepattern restriction, we use a technical device called \emph{prepattern $\Pi$-abstractions}, 
and associated notion of \emph{prepattern variables} and \emph{prepattern spines}.
Prepattern $\Pi$-abstractions are written as $\Pi x \prepatof A_2.\, A_1$, and $x$ will 
be a prepattern variable (written $x\prepatof A_2$) in $A_1$. Moreover, in $A_1$, if $y$ is a variable  
of a prepattern type $\Pi w \prepatof A_2. B$, then the prepattern application of $y$ to $x$ will 
be realized as the head $y$ followed by a prepattern spine $([x])$, written $y\cdot ([x])$.
The semantics is that prepattern variables may only be substituted by other prepattern variables, while 
ordinary variables can be substituted by arbitrary terms (which include other prepattern variables).
In a well-typed signature, if $r : A = M$ is a recursion declaration, then
$A$ consists of purely prepattern $\Pi$-abstractions (judgment $A \isprepattern$)
and for all $r \cdot S$ in the signature,  $S$ consists of purely prepattern applications and is thus called a prepattern spine (judgment $S \isprepattern$).
The prepattern variables are similar to those introduced by the $\nabla$-operator \cite{Miller05tocl}, which models the concept of fresh names, but 
here in a dependently typed setting, types may depend on prepattern variables.

In an actual implementation, 
the usages of prepattern types may impose additional burdens on 
the programmer. As a remedy,
the implementation could infer which variables are prepattern variables 
based on whether they appear as arguments to 
recursion constants and propagate such information. 

\begin{figure}[t]
    \begin{longtable}{lrcl}
        Signatures & $\Sigma $ & $::= $ & $ \cdot \mid \Sigma, a : K \mid \Sigma, c : A \mid \Sigma, r : A = M  $ \\
        Contexts & $\Gamma$ & $::= $ & $\cdot \mid \Gamma, x : A \mid \Gamma, x \prepatof A$  \\
        Kinds & $K$ & $ ::= $ & $\type \mid \cotype \mid \Pi x : A.\ K \mid \Pi x \prepatof A. \, K$ \\
        Canonical types & $A, B$ & $ ::= $ & $P \mid \Pi x : A_2.\ A_1 \mid \Pi x \prepatof A_2. \, A_1$ \\
        Atomic types & $P$ & $ ::= $ & $a \cdot S$ \\
        Canonical terms & $M$ & $ ::= $ & $ R  \mid \lambda x.\, M $\\
        Neutral terms & $R$ & $ ::= $ & $ H \cdot S  $\\
        Heads & $H$ & $ ::= $ & $ x  \mid c  \mid r  $\\
        Spines & $S$ & $ ::= $ & $ M ; S\mid \prepatternvar x ; S \mid \snil $\\
    \end{longtable}
    \caption{The Syntax for $\CoLFCircular$}
    \label{fig:syntax_for_colf_two_circular}
\end{figure}


\subsection{Trace Condition}
\label{subsection:trace_condition}
In a signature $\Sigma$, we say that a type $A$ is inductive if $A = \Pi x_1\dots \Pi x_n:A_n. a\cdot S$ and $a : \Pi y_1 \dots \Pi y_m : B_m.\, \type$, 
and a type $A$ coinductive if $A = \Pi x_1\dots \Pi x_n:A_n. a\cdot S$ and $a : \Pi y_1 \dots \Pi y_m : B_m.\, \cotype$.
A constructor $c$ is inductive if $c : A \in \Sigma$ and $A$ is inductive, and $c$ is coinductive if $c : A \in \Sigma$ and $A$ is coinductive.


The validity of the terms is enforced through a trace condition \cite{Fortier13csl,Brotherston11jlc} on cycles.
A trace is a sequence of constructor constants or variables, where each constructor or variable is 
a child of the previous one.
A trace from a recursion constant $r$ to itself is a sequence starting with
the head of the definition of $r$ and ending with the parent of an occurrence of $r$. 
In Fig.~\ref{fig:sigantures_for_section_motivation}, 
a trace from \verb$p2$ to itself is $[\mt{pad}]$, and 
a trace from \verb$s1$ to itself is $[\mt{cocons}, \mt{pad}, \mt{pad}, \mt{next}]$.
Traces cross into definitions of recursion constants.
Thus, a trace from \verb$p6$ to itself is $[\mt{pad}, \mt{pad}]$, which is also 
a trace from \verb$p7$ to itself.
A trace from \verb$s4$ to itself is $[\mt{cocons}, \mt{next}]$, and 
a trace from \verb$p5$ to itself is $[\mt{next}, \mt{cocons}]$.
If $r = \lambda x. f\, (r\, x)\, (g\, (r\, x))$ (more precisely
$r = \lambda x.\, f\cdot (r\cdot([x]); g\cdot(r \cdot ([x])))$), then there are two traces from $r$ to itself, i.e.,
$[f]$ and $[f,g]$.

A higher-order rational term $M$ is \emph{trace-valid} if for all 
recursion constants $r$ in $M$, each trace from $r$ to itself 
contains a coinductive constructor, and that coinductive constructor has the highest priority 
among all constructors on that trace.
To ensure trace validity, it is sufficient to check in a recursive definition, 
all occurrences of recursion constants are \emph{guarded by}
some coinductive constructor of the highest priority.
The guardedness condition (judgment $\vdash_\Sigma r \guardedin M$) means that occurrences of $r$ in $M$ are guarded by some coinductive constructor
of the highest priority, and the condition is decidable. In a well-typed signature $\Sigma$, if $r : A= M \in \Sigma$, then $\vdash_\Sigma r\guardedin M$.
A detailed algorithm for checking trace-validity is presented in 
Appendix~\ref{appendix:guardedness_checking}\oftheextendedversion.
The reader may check guardedness for all valid terms in Fig.~\ref{fig:sigantures_for_section_motivation}.



\subsection{Hereditary Substitution}

Hereditary substitution \cite{Watkins02tr,Harper07jfp} provides a method of substituting one 
canonical term into another and still get a canonical term as the output by
performing type-based normalization. This technique simplifies the definition of the term 
equality in the original LF \cite{Harper93jacm,Harper05tocl} by separating the term equality and normalization
from type checking. 
We extend the definition of hereditary substitution to account for 
recursion constants.
Hereditary substitution is a partial operation on terms. 
When input term is not well-typed or prepattern restriction is not respected, 
the output may be undefined.

Hereditary substitution takes as an extra argument the simple type of the term being substituted by.
The simple type $\tau$ is inductively 
generated by the following grammar. 
\[\tau ::= * \mid \tau_1 \to \tau_2\]

We write $\erased{A}$ for the simple type that results from erasing dependencies in $A$. 
We write $[N/x]^\tau M$ for hereditarily substituting $N$ for free  ordinary variable $x$ in $M$.
The definition proceeds by induction on  $\tau$ and the structure of $M$. 
For prepattern variables, since they may only stand for other prepattern variables, we use 
a notion of renaming substitution.  The renaming substitution $\varrenamesubst{y/x}M$ renames
a prepattern variable or an ordinary variable $x$ to prepattern variable $y$ in $M$.
Both substitutions naturally extend to other syntactic kinds. 
Hereditary substitution relies on renaming substitution when reducing prepattern applications.
Because of the prepattern restriction, recursion constants are only applied to prepattern variables in a well-formed signature, 
and we never substitute into a recursive definition.
Let $\sigma$ be a simultaneous renaming substitution, a notion generalized from renaming substitutions, 
we write $\varrenamesubst{\sigma}M$ for carrying out substitution $\sigma$ on $M$.

The definition for hereditary substitution is shown in Fig.~\ref{fig:hereditary_substitution_essential_cases}. 
Appendix~\ref{appendix:hereditary_substitution_and_renaming_substitution}\oftheextendedversion[ ]contains other straightforward cases of the definition. 
We note that prepattern $\Pi$-types erase to a base type $*$ 
because we may only apply terms of prepattern $\Pi$-types to prepattern variables, and thus the structure of the argument term
does not matter.

\begin{figure}[t]
\begin{flushleft}
\begin{multicols}{2}
    \boxed{A^o = \tau}  \\
    $(\pitype{x}{A_2}{A_1})^o = (A_2^o) \to (A_1^o)$ \\
    $(\prepatpi{x}{A_2}{A_1})^o = * \to (A_1^o)$ \\
    $(P)^o = *$ \\
    \boxed{[N/x]^\tau M = M'} \\
    $[N/x]^\tau R = [N/x]^\tau R$ \\
    $[N/x]^\tau (\lambda y.M) = \lambda y. [N/x]^\tau M$,    $y \ne x $ \\

    \boxed{[N/x]^\tau R = R'} \\
    $[N/x]^\tau( x \cdot S) = ([N/x]^\tau S) \rhd^{\tau} N$ \\
    $[N/x]^\tau( y \cdot S) =  y \cdot ([N/x]^\tau S) $,   $ y \ne x$\\
    $[N/x]^\tau( c \cdot S) = c \cdot ([N/x]^\tau S) $\\
    $[N/x]^\tau( r \cdot S) = r \cdot ([N/x]^\tau S) $\\
    \boxed{[N/x]^\tau  S = S'} \\
    $[N/x]^\tau\snil = \snil$ \\
    $[N/x]^\tau(M;S) = ([N/x]^\tau M); ([N/x]^\tau S)$ \\
    $[N/x]^\tau(\prepatternvar x;S) = \m{undefined}$   
    \\
    $[N/x]^\tau(\prepatternvar z;S) = \prepatternvar z; ([N/x]^\tau S)$,   $x \ne z$ \\
    \boxed{S \rhd ^\tau N = R'} \\
    $\snil \rhd ^* R = R$ \\
    $(N; S) \rhd^{\tau_2 \to \tau_1} (\lambda x. M) = S \rhd^{\tau_1} ([N/x]^{\tau_2}M)$ \\
    $(\prepatternvar y; S) \rhd^{* \to \tau_1} (\lambda x. M) = S \rhd^{\tau_1} (\varrenamesubst{y/x}M)$ \\
\end{multicols}
\end{flushleft}
\caption{Hereditary Substitutions}
\label{fig:hereditary_substitution_essential_cases}
\end{figure}


 \subsection{Term Equality}

 \label{subsection:term_equality}

 The equality checking of circular terms is carried out by
 iteratively unfolding recursive definitions
 \cite{Amadio93toplas,Brandt98fi,Danielsson10mpc,Ligatti17toplas}.
 The algorithm here is a slight adaptation of
 the equality algorithm for regular B\"{o}hm trees by Huet \cite{Huet98mscs}, tailored 
 to the specific case of $\CoLFCircular$'s canonical term syntax.
 We emphasize that the equality algorithm
 can treat terms that are not trace-valid or well-typed, 
 and is thus decoupled from validity checking and type checking.
 The algorithm itself checks for the prepattern restriction on recursion constants and contractiveness condition on recursive definitions.
 These checks are essential to ensure termination in the presence of forward referencing inside recursive definitions.

 We define the judgment $\Delta ; \Theta\vdash_\Sigma M = M'$ to mean $M$ and $M'$, with free variables from $\Theta$, are equal under the assumptions $\Delta$, 
 with consideration of recursive definitions in $\Sigma$. The variable list $\Theta$ is similar to $\Gamma$ except it doesn't have the types for the variables. It is merely a list of pairwise distinct variables. 
 Similarly, we define the judgment $\Delta ; \Theta\vdash_\Sigma S = S'$ to mean spines $S$ and $S'$ are element-wise equal.
 Equalities in $\Delta$ will be of the form $\ctxeqassump{\Theta}{M}{M'}$ where $\Theta$ holds free variables of $M$ and $M'$.
 We write $\Theta \vdash M$ to mean that $FV(M) \subseteq \Theta$.
 We define simultaneous variable renaming, that
$\sigma$ is a variable renaming from $\Theta'$ to $\Theta$, written
 $\Theta \vdash \sigma: \Theta'$
 to mean that 
 if $\Theta' \vdash M$, then $\Theta \vdash \varrenamesubst{\sigma}{M}$. 
 For instance, if we have $x \vdash \varrenamesubst{x/y, x/z} : y, z$ and $y, z \vdash y \cdot \prepatternvar z$, then $x \vdash \varrenamesubst{x/y, x/z}(y \cdot \prepatternvar z)$, i.e., $x \vdash x \cdot \prepatternvar x$. 
 The rules for the judgments are presented in Fig.~\ref{fig:higher_order_rational_terms_equality_checking}.
 Recall that $M$ is contractive ($M \iscontractive$) if the head of $M$ is not a recursion constant.

 \begin{figure}[t]
 \boxed{\Delta; \Theta\vdash_\Sigma M = M'}
 \begin{mathparpagebreakable}
 \inferrule{
     \Theta \vdash \sigma : \Theta'
 }{
     \Delta, \ctxeqassump{\Theta'}{H\cdot S_1}{ H'\cdot S_2}; \Theta\vdash_\Sigma 
      \varrenamesubst{\sigma}{(H\cdot S_1)} = \varrenamesubst{\sigma}{(H'\cdot S_2)}
 }(1)
 
 \inferrule{
     r : A = M \in \Sigma
     \\
     S_1 \isprepattern
     \\
     M \iscontractive
     \\
     \Delta, 
     \ctxeqassump{\Theta}{ r\cdot S_1 }{ H\cdot S_2 }
     ; \Theta
     \vdash_\Sigma
     S_1 \rhd ^{A^o} M
     =H \cdot S_2
 }{
     \Delta; \Theta\vdash_\Sigma r\cdot S_1 = H\cdot S_2
 }(2)
 
 \inferrule{
     r : A = M \in \Sigma
     \\
     S_2 \isprepattern
     \\
     M \iscontractive
     \\
     H \ne r'
     \\
     \Delta, \ctxeqassump{\Theta}{ H\cdot S_1 }{ r\cdot S_2 }
     ; \Theta
     \vdash_\Sigma 
     H\cdot S_1 = 
     S_2 \rhd ^{A^o} M
 }{
     \Delta; \Theta\vdash_\Sigma H\cdot S_1 = r\cdot S_2
 }(3)
 
 \inferrule{
      \Delta; \Theta \vdash_\Sigma  S = S'
 }{
     \Delta; \Theta \vdash_\Sigma    c \cdot S = 
      c \cdot S'
 }(4)
 \hspace{1em}
 \inferrule{
      \Delta; \Theta \vdash_\Sigma  S = S'
 }{
     \Delta; \Theta \vdash_\Sigma    y \cdot S = 
      y \cdot S'
 }(5)
 \hspace{1em}
 \inferrule{
     \Delta; \Theta, x \vdash_\Sigma  M = M'
 }{
     \Delta; \Theta \vdash_\Sigma  \lambda x.\, M = \lambda x.\, M'
 }(6)
 \end{mathparpagebreakable}

 \boxed{\Delta; \Theta\vdash_\Sigma S = S'}
 \begin{mathparpagebreakable}
 \inferrule{
 }{
     \Delta; \Theta \vdash_\Sigma \snil = \snil
 }
 \hfill
 \inferrule{
     \Delta; \Theta \vdash_\Sigma M = M'
     \\
     \Delta; \Theta \vdash_\Sigma S = S'
 }{
     \Delta; \Theta \vdash_\Sigma M ; S = M'; S'
 }
 \hfill
 \inferrule{
     \Delta; \Theta \vdash_\Sigma S = S'
 }{
     \Delta; \Theta \vdash_\Sigma \prepatternvar x ; S = \prepatternvar x; S'
 }
 \end{mathparpagebreakable}

 \caption{Equality Checking}
 \label{fig:higher_order_rational_terms_equality_checking}
\end{figure}

\subsubsection{An Example.}

Assume the signature in Section \ref{subsection:finitely_padded_streams}, and
consider a stream generator that repeats its arguments. 
The stream may be represented by terms \verb$r1$ and \verb$r2$ below.
Note that in the concrete syntax, square brackets represent $\lambda$-abstractions.
\begin{verbatim}
r1 : nat -> pstream = [x] cocons x (next (r1 x)).
r2 : nat -> pstream = [x] cocons x (next (cocons x (next (r2 x)))).
\end{verbatim}
Because \verb#r1# is a recursion constant,
its type is a prepattern-$\Pi$ type, and this restriction is respected in the body as $x$ is 
a prepattern variable. 

We want to show that \verb$r1$ and \verb$r2$ are equal in the framework.
Let $\Sigma$ be the signature of Section \ref{subsection:finitely_padded_streams} plus the definitions for 
\verb$r1$ and \verb$r2$.
We illustrate the process of checking that $;\vdash_{\Sigma} \lambda x.\, \mt{r1} \cdot (\prepatternvar x) = \lambda x.\, \mt{r2} \cdot(\prepatternvar x)$ as a 
search procedure for a derivation of this judgment, where initially
both $\Delta$ and $\Theta$ are empty. 

Immediately after rule (6) we encounter $; x \vdash_{\Sigma} \mt{r1} \cdot(\prepatternvar x) = \mt{r2}\cdot ( \prepatternvar x)$, we memoize 
this equality by storing  ${\ctxeqassump{x}{\mt{r1} \cdot(\prepatternvar x)}{\mt{r2}\cdot ( \prepatternvar x)}} $ in $\Delta$ as in rule (2), and 
unfold the left-hand side. 
Then we proceed with the judgment.
\[
    {\ctxeqassump{x}{\mt{r1} \cdot(\prepatternvar x)}{\mt{r2}\cdot ( \prepatternvar x)}} ; x 
\vdash_\Sigma \mt{cocons} \cdot (x; \mt{next} \cdot (\mt{r1} \cdot( \prepatternvar x))) 
=  \mt{r2} \cdot (\prepatternvar x) \]
We then use rule (3) to unfold the right-hand side and store 
then current equation
 in the context.
Then after several structural rules, we have
\[{\ctxeqassump{x}{\mt{r1} \cdot(\prepatternvar x)}{\mt{r2}\cdot ( \prepatternvar x)}}, \dots; x \vdash_\Sigma \mt{r1} \cdot( \prepatternvar x) =  \mt{cocons}\cdot (x; \mt{next}\cdot(\mt{r2} \cdot (\prepatternvar x)))\]
At this point, rule (2) applies. 
We add the current equation
to the context and unfold the left recursive definition. Then after several structural rules, 
we encounter the following judgment.
\[
{\ctxeqassump{x}{\mt{r1} \cdot(\prepatternvar x)}{\mt{r2}\cdot ( \prepatternvar x)}}
, \dots; x \vdash_\Sigma \mt{r1} \cdot( \prepatternvar x) =  \mt{r2} \cdot (\prepatternvar x)\]
Now we can close the derivation with rule (1) using identity substitution.

\subsubsection{Decidability.}
 Huet \cite{Huet98mscs} has proved the termination, soundness, and completeness in the case of untyped regular B\"ohm trees.
 Our proof shares the essential idea with his proof.
 The termination relies on the fact that terms only admit finitely many subterms modulo renaming of both free 
 and bound variables, and only subterms will appear in $\Delta$. The soundness and completeness are proved with respect 
 to the infinite B\"ohm tree \cite{Barendregt85book} generated by unfolding the terms indefinitely, which again corresponds to 
 a bisimulation between terms.

 \begin{theorem}
     [Decidability of Term Equality]
    \label{thm:decidability_of_term_equality}
     It is decidable whether $\Delta; \Theta\vdash_\Sigma M = M'$ for any rational term $M$ and $M'$.
 \end{theorem}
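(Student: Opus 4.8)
The plan is to read the rules of Fig.~\ref{fig:higher_order_rational_terms_equality_checking} as a goal-directed search, exhibit a deterministic search strategy, show it is complete for derivability, and then prove it always terminates; decidability follows since every premise of every rule is effectively checkable. For the strategy, at a goal $\Delta;\Theta\vdash_\Sigma N = N'$ I would first attempt rule~(1): search the finitely many equations $\ctxeqassump{\Theta'}{H\cdot S_1}{H'\cdot S_2}\in\Delta$ for a variable renaming $\sigma$ with $\Theta\vdash\sigma:\Theta'$ such that $N$ and $N'$ are the corresponding instances — a decidable first-order matching problem with finitely many candidates. Failing that, dispatch on heads: rule~(6) when both sides are $\lambda$-abstractions; rule~(2) when the left side is neutral with a recursion-constant head (checking $S_1\isprepattern$ and $M\iscontractive$, which only inspect head symbols, and computing the hereditary substitution $S_1\rhd^{A^o}M$, a terminating partial operation); rule~(3) when only the right side has a recursion-constant head; rules~(4)/(5) when the neutral sides share a constant or a variable head; and in every other case declare the goal underivable. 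Inspecting which conclusion each rule can produce shows that, apart from possibly closing via rule~(1), at most one rule applies to any goal — in particular the side condition $H\ne r'$ disables rule~(3) exactly when rule~(2) is available — and the premises of that rule are determined by the conclusion; a routine induction on a hypothetical derivation then shows the strategy is complete, i.e.\ it succeeds iff the goal is derivable.

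It remains to prove termination. The search tree is finitely branching (each rule has finitely many premises and the strategy makes no disjunctive choices), so by K\"onig's Lemma it suffices to bound the length of each branch, and the heart of that is showing only finitely many terms arise up to renaming of free and bound variables. A term occurring as a side of some goal is reached from $M$ or $M'$ by alternately (i) passing to an immediate subterm (rules~(4),(5),(6) and the spine rules) and (ii) replacing a neutral term $r\cdot S_1$, where $r:A=M_r\in\Sigma$, by $S_1\rhd^{A^o}M_r$ (rules~(2),(3)). Because rules~(2),(3) require $S_1\isprepattern$, this last operation performs only renaming substitutions $\varrenamesubst{y/x}$, so its output is a renaming of (the body of) $M_r$, and every subterm obtained afterwards is a renaming of a subterm of $M_r$. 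Since $M$, $M'$ are rational by hypothesis and each $M_r$ in the finite signature $\Sigma$ is a finite term, the set of reachable terms modulo renaming is finite; write $t$ for its cardinality. Hence there are at most $t^2$ goals modulo renaming, and a fortiori at most $t^2$ equation-classes that can ever enter $\Delta$.

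With this in hand I would bound branch length as follows. Rules~(6),(4),(5) and the spine rules are \emph{structural}: each replaces a goal by subgoals whose two sides are proper components of the current sides, so the multiset of the sizes of the two sides strictly decreases; since every reachable term has size at most $C:=\max\{|M|,|M'|\}\cup\{|M_r|:(r:A=M_r)\in\Sigma\}$, any maximal run of consecutive structural steps has length at most $C$. The only non-structural steps are the unfolding rules~(2),(3), each of which first memoizes the current neutral--neutral equation in $\Delta$ and then recurses. If a branch unfolded two goals equal up to renaming, then at the second the current equation would already be a renaming instance of the one memoized at the first, so the strategy — which prefers rule~(1) — would have closed that goal instead of unfolding it; hence each of the $\le t^2$ equation-classes is unfolded at most once per branch, giving at most $t^2$ unfoldings and branch length $O(t^2 C)$. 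Therefore the search tree is finite, the procedure terminates, and derivability of $\Delta;\Theta\vdash_\Sigma M = M'$ is decidable. I expect the main obstacle to be the second paragraph: making precise that, thanks to the prepattern restriction, unfolding a recursion constant creates no genuinely new term structure (only renamings), so that the memoization of rule~(1) provably caps the number of unfoldings along a branch; once finiteness up to renaming is secured, the remaining multiset/K\"onig argument is routine.
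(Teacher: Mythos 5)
Your proposal is correct and follows essentially the same route as the paper's proof: the key lemma in both is that the prepattern restriction makes rules (2)(3) perform only renaming substitutions, so only finitely many terms (hence finitely many equations in $\Delta$) arise modulo renaming of variables, and termination then follows from a measure combining the remaining capacity of $\Delta$ with the structural size of the terms. You spell out the determinism/completeness of the search strategy and the memoization argument more explicitly than the paper does, but these are elaborations of the same idea rather than a different approach.
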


 \begin{proof}
    We first show that there is a limit on the number of equations in $\Delta$. 
    Then the termination follows the lexicographic order of the assumption capacity (difference 
    between current number of assumptions in $\Delta$ and the maximum), and the structure of 
    the terms under comparison.
    It is obvious that rules (4)(5)(6) decompose the structure of the terms and rules (2)(3) 
    reduce assumption capacity. It remains to show that the size of $\Delta$ has a limit.

     The prepattern conditions on rules (2)(3) ensure that the expansion of recursive definitions will only involve renaming substitutions, 
     and thus the resulting term will be an $\alpha$-renaming of the underlying definition. No structurally new terms will be produced
     as a result of renaming substitution in rules (2)(3). We construct a finite set of all possible terms 
     that could be added to the context. Each term is of finite depth and breadth limited by the existing constructs 
     in the signature, and consists of finitely many constants, variables, 
     and recursion constants. The constants and recursion constants are limited to those already presented in the signature. 
     Although there are infinitely many variables, there are finitely many terms of bounded depth and width that are distinct
     modulo renaming of both bound and free variables. Thus, the set of terms that can appear as an element of $\Delta$ 
     is finite, modulo renaming of free variables. The estimate of a rough upper bound can be found in 
     Appendix~\ref{appendix:equality_upper_bound_estimate}\oftheextendedversion.

 \end{proof}
 
We specify the infinite unfolding by 
specifying its unfolding to a B\"ohm tree of depth $k$, which is a finite approximation 
to the infinite B\"ohm tree,
for each $k \in \bN$.
Then the infinite B\"ohm tree is limit of all its finite approximations.
We use the judgment $\exp\pdepth{k}(M)=\pdepth{k}M'$ to denote the expansion
of a higher-order rational term $M$ to a B\"ohm tree $M'$  of depth $k$, 
and use the judgment $\exp(N) = N'$ to express that the higher-order rational term $M$ 
expands to infinite B\"ohm tree $N'$.
We also enrich the syntax of B\"ohm trees with prepattern variables.
    The full set of expansion rules can be found in 
    Appendix~\ref{appendix:expansion_as_bohm_trees}\oftheextendedversion.
All cases are structural except 
for the following case when we expand a recursion constant, where we look up the definition 
of the recursion constant and plug in the arguments.
    \[ \exp\pdepth {k+1}( r \cdot S)  = \pdepth{k+1}   \exp\pdepth{k+1}(S \rhd^{A^o} M) \text{ if $r : A = M \in \Sigma$ and $S \isprepattern$}
    \]

\begin{lemma}
    [Expansion Commutes with Hereditary Substitution]
    For all $k$, $\tau$, $M$ and $N$, 
    $\exp\pdepth k([N/x]^\tau M) =\pdepth{k} [\exp\pdepth{k}(N)/x]^\tau(\exp\pdepth k (M))$ 
    if defined.
\end{lemma}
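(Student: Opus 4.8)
The plan is to prove the lemma by induction on the depth $k$ and, within each $k$, by a nested induction on the simple type $\tau$ together with the structure of the term $M$, following exactly the recursion pattern by which both hereditary substitution and the expansion judgment are defined. The statement is an equation between two partial operations, so the proof obligation at each case has two halves: first, that the left side is defined iff the right side is, and second, that when both are defined they are $\alpha$-equal B\"ohm trees of depth $k$. I would set up the induction so that the inductive hypothesis is available both at strictly smaller $k$ (needed for the recursion-constant case, where depth drops from $k+1$ to $k$ on the left but the expansion rule keeps it at $k+1$ — I will need to be careful here and may actually want the IH at the same $k$ with a structurally smaller argument, since the expansion rule $\exp_{(k+1)}(r\cdot S) = \exp_{(k+1)}(S \rhd^{A^o} M)$ does not decrease $k$) and at the same $k$ with smaller $(\tau, M)$.

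The routine cases are the structural ones. For $M = \lambda y.\, M_0$ with $y \neq x$, both sides push the substitution and the expansion under the binder, and the IH at smaller term structure closes it; similarly for $M = H \cdot S$ with $H$ a variable $\neq x$, a constructor $c$, or a recursion constant $r$, the substitution commutes with the head and we recurse into the spine $S$, invoking the spine-level version of the statement (which I would either fold into the same simultaneous induction or state as a companion sublemma). The base case $M = x \cdot \snil$ reduces to $\exp_{(k)}([N/x]^{*} x) = \exp_{(k)}(N)$ on the left versus $[\exp_{(k)}(N)/x]^{*}(\exp_{(k)}(x)) = [\exp_{(k)}(N)/x]^{*} x$ on the right, and both equal $\exp_{(k)}(N)$. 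The genuinely interesting case is $M = x \cdot S$ with $S$ nonempty: here the left side performs a hereditary reduction $(\[N/x\]^\tau S) \rhd^{\tau} N$ — a cascade of further hereditary substitutions at the smaller simple types read off from $\tau$ — while the right side expands $N$ first and then reduces the expanded spine against the expanded $N$. Matching these up is where one must commute expansion past the $\rhd$ operation, i.e., prove (or extract from the IH at strictly smaller $\tau$) that $\exp_{(k)}(S \rhd^\tau N) =_{(k)} (\exp_{(k)} S) \rhd^\tau (\exp_{(k)} N)$, and then feed each component substitution of that cascade back through the main IH.

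The main obstacle I anticipate is bookkeeping the partiality and the prepattern side conditions coherently, together with arranging a well-founded termination metric that actually covers the recursion-constant case. Concretely: the expansion clause for $r \cdot S$ requires $S \isprepattern$ and replaces $r \cdot S$ by $S \rhd^{A^o} M_r$ at the \emph{same} depth $k+1$, so I cannot induct on $k$ alone there; the correct measure is presumably lexicographic on $k$ first and then on something like "number of recursion-constant unfoldings still permitted before depth must decrease," which is exactly the device the paper already uses to argue that expansion to depth $k$ is well-defined. I would borrow that same measure. A secondary subtlety is the prepattern-variable spine entries $\prepatternvar z$: substitution $[N/x]^\tau(\prepatternvar x; S)$ is \emph{undefined} when $x$ is the prepattern variable, so in the "if defined" hypothesis I get to assume this does not occur, and for $z \neq x$ the entry passes through unchanged on both sides and expansion treats it structurally, so these contribute no real content beyond checking that "defined on the left" and "defined on the right" coincide — which they do because the only source of undefinedness, substituting into a $\prepatternvar x$, is visible identically on both sides. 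I would dispatch the remaining straightforward cases (the erasure clause for prepattern $\Pi$-types, atomic types) by pointing to the corresponding clauses in Fig.~\ref{fig:hereditary_substitution_essential_cases} and the expansion rules in the appendix.
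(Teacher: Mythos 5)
Your proposal is correct and takes essentially the same route as the paper, whose entire proof is ``directly by lexicographic induction on $k$ and the structure of $M$''; your elaboration of the structural cases, the spine-level companion statement, and the definedness bookkeeping is a faithful unpacking of that one line. The termination worry you raise for the recursion-constant case is real but resolved by the contractiveness requirement on recursive definitions ($M\iscontractive$), which guarantees that each unfolding at depth $k+1$ immediately exposes a non-recursion head so the very next step decreases $k$ --- exactly the tiebreaker you propose to borrow.
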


\begin{proof}
    Directly by lexicographic induction on $k$ and the structure of $M$. 
\end{proof}

\begin{theorem} [Soundness of Term Equality]

    If $\cdot ; \Theta \vdash M = M'$,
    then $\exp\pdepth k(M) =\pdepth k \exp\pdepth k(M')$ for all $k$.
\end{theorem}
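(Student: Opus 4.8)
The plan is to prove the contrapositive-free direction directly: assuming a derivation of $\cdot;\Theta\vdash M = M'$ exists, we show $\exp\pdepth k(M) =\pdepth k \exp\pdepth k(M')$ for every $k$. Since the equality judgment's memoization means a derivation may be infinite in the naive sense but is really a finite cyclic object (by the termination analysis of Theorem~\ref{thm:decidability_of_term_equality}), I would not induct on the derivation tree. Instead I would fix $k$ and perform a lexicographic induction on $k$ and then on the structure of the derivation, but with the memoization rule (1) handled by a separate argument. The key observation is that the assumptions in $\Delta$ are a ``coinductive hypothesis'': each time rule (1) closes a branch by appealing to a stored equation $\ctxeqassump{\Theta'}{H\cdot S_1}{H'\cdot S_2}$, that equation was itself the conclusion of a subderivation that has already ``made progress'' (passed through at least one unfolding via rule (2) or (3), which is where expansion depth is consumed).

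The concrete strategy: strengthen the statement to account for $\Delta$. I would prove that if $\Delta;\Theta\vdash M = M'$ and every equation $\ctxeqassump{\Theta'}{N}{N'}$ in $\Delta$ satisfies $\exp\pdepth{k-1}(N) =\pdepth{k-1}\exp\pdepth{k-1}(N')$ after applying the appropriate renaming $\sigma$, then $\exp\pdepth k(M) =\pdepth k\exp\pdepth k(M')$. The induction is lexicographic on $(k, \text{derivation})$. For the structural rules (4), (5), (6) and the spine rules, the expansion rules are themselves structural, so the claim follows immediately from the induction hypothesis at the same $k$ on smaller derivations. For rule (1), we use the strengthened hypothesis on $\Delta$ directly, noting that $\exp\pdepth k$ commutes with renaming substitution $\varrenamesubst{\sigma}{-}$ (an easy variant of the Expansion-Commutes-with-Hereditary-Substitution lemma, or a direct structural check since renaming never duplicates or reduces structure). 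For rules (2) and (3) — the unfolding rules — we use that $\exp\pdepth{k+1}(r\cdot S) =\pdepth{k+1}\exp\pdepth{k+1}(S\rhd^{A^o}M)$ by the expansion rule for recursion constants, together with the fact that the newly stored equation $\ctxeqassump{\Theta}{r\cdot S_1}{H\cdot S_2}$ satisfies the $\Delta$-hypothesis at level $k$ because its subderivation produces, after unfolding, something whose expansion we are computing at level $k+1$; so when that equation is later invoked by rule (1), we will be at depth $k+1$ but the equation needs only to hold at depth $k$ — and one unfolding step strictly decreases the remaining depth budget along every branch of the expansion of $r\cdot S$.

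The main obstacle is bookkeeping the depth budget correctly through the memoization. The subtlety is that an equation is added to $\Delta$ \emph{before} it is proved, and later discharged by rule (1) possibly many unfoldings later; I must argue that between the point where $\ctxeqassump{\Theta}{r\cdot S_1}{H\cdot S_2}$ enters $\Delta$ and any later use of it, at least one expansion-depth decrement has occurred, so the ``$k-1$'' in the strengthened hypothesis is always justified. This requires observing that in the \emph{expansion} (not the equality derivation), crossing a recursion constant $r\cdot S$ at depth $k+1$ forces its body to be expanded only at depth $k+1$ but its contractiveness guarantees that before the body can again reach $r$, it must pass a genuine constructor/variable layer — and the expansion rules for those layers decrement $k$. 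Formally, I would phrase this as: the set of equations $\Delta$ in force at any node of the derivation, when the current goal is being expanded at depth $k$, were all inserted at a point whose corresponding expansion was at depth $\geq k$, so each holds at depth $k-1$ by the induction hypothesis once we recurse one expansion step deeper. I expect this to be the only place where the argument is not purely routine; the rest is a straightforward compatibility check between the expansion rules and the equality rules.
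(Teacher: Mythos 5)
Your proposal takes the same route as the paper's (very terse) proof: lexicographic induction on the depth $k$ and the equality derivation, with commutation of expansion against (renaming) substitution handling rules (1)--(3) and the structural rules being immediate from the definition of $\exp$. You go further than the paper by trying to make explicit the generalization over $\Delta$ that any such induction needs, and you correctly single out the one non-routine point: the interaction between memoization and the depth budget.

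However, the invariant you state --- every equation in $\Delta$ holds at depth $k-1$ --- does not close the induction. At a rule (1) leaf the goal is a renaming of a stored equation and must be shown equal at depth $k$, but your hypothesis only yields that equation at depth $k-1$; renaming neither creates nor destroys structure, so $k-1$ cannot be upgraded to $k$. Symmetrically, at rules (2)/(3) you must feed the freshly stored equation $\ctxeqassump{\Theta}{r\cdot S_1}{H\cdot S_2}$ into the induction hypothesis for the subderivation, yet its expansion equality at any depth is exactly what that subderivation is supposed to establish, so it cannot be assumed there. The repair is the progress property you gesture at but do not pin down: an equation stored by rule (2)/(3) can only be matched by rule (1) after the derivation has passed through at least one application of rule (4) or (5), because contractiveness forces both sides to acquire constructor/variable heads within at most two unfolding steps, stored equations always carry a recursion-constant head on one side, and renaming preserves heads; and rules (4)/(5) are precisely where $\exp$ consumes a unit of depth. (Note that, per the expansion rules, unfolding a recursion constant does \emph{not} decrement the depth --- only constructor/variable layers do --- so your phrase ``one unfolding step strictly decreases the remaining depth budget'' is wrong as stated, though you correct the intuition in your final paragraph.) This progress fact has to be built into the induction itself, e.g.\ by inducting on $k$ outermost and reading the finite derivation as a bisimulation up to unfolding, so that every loop through rule (1) is guaranteed to cross a depth decrement, rather than being patched by an off-by-one in the $\Delta$-hypothesis. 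With that restructuring your argument goes through and is, in substance, the argument the paper compresses into one sentence.
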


\begin{proof}
    By lexicographic induction on the depth $k$ and the derivation $\Delta ; \Theta\vdash M = M'$.
    The case for the rule (1)  is immediate by applying renaming substitutions at the closure rule.
     The cases for rules (2)(3) follow from the commutation lemma. 
    The cases for rules (4)(5)(6) follow from the definition of $\exp$.
\end{proof}

\begin{theorem}
    [Completeness of Term Equality]

    For rational terms $M$ and $M'$, 
    with free variables from $\Theta$,
     if $\exp(M) = \exp(M')$, then  $\cdot; \Theta\vdash M = M'$.
\end{theorem}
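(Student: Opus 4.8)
The plan is to prove the statement by well-founded induction reusing the termination measure of Theorem~\ref{thm:decidability_of_term_equality} --- the lexicographic pair of the remaining \emph{assumption capacity} of $\Delta$ (the gap between the number of equations in $\Delta$ and their provable maximum) and the combined size of the two terms being compared --- after strengthening the claim with an invariant that ties the algorithm's state to the infinite expansions. Concretely, I would prove: \emph{if $\exp(N)=\exp(N')$ and every memoized equation $\ctxeqassump{\Theta'}{N_1}{N_2}$ in $\Delta$ satisfies $\exp(N_1)=\exp(N_2)$, then $\Delta;\Theta\vdash_\Sigma N=N'$ is derivable}; the theorem is the instance with $\Delta$ empty, where the second hypothesis is vacuous. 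Two auxiliary facts would be established first. (a) Expansion commutes with renaming substitution just as it does with hereditary substitution (same proof as the stated commutation lemma), so that applying a variable renaming $\sigma$ to both sides of an $\exp$-equal pair keeps it $\exp$-equal; this is exactly what licenses the closure rule (1). (b) A \emph{shape lemma}: if $\exp(N)=\exp(N')$ then, by canonicity --- neutral terms are fully $\eta$-expanded and hence occur only at base type, and every recursion-constant body in $\Sigma$ is contractive --- the terms $N$ and $N'$ are either both $\lambda$-abstractions or both neutral; moreover, when both are neutral with heads that are constants or ordinary variables, those heads are equal and the spines expand pointwise to equal spines. Consequently a $\lambda$-versus-neutral clash, or a clash of two distinct constants/variables, cannot occur while the invariant holds.

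The inductive step is a case analysis on the shapes of $N$ and $N'$ allowed by the shape lemma. If both are $\lambda$-abstractions, rule (6) applies and its premise is again $\exp$-equal over $\Theta,x$ with strictly smaller term size. If both are neutral with the same constant or ordinary-variable head, rule (4) or (5) applies and reduces to a spine equality whose components are pairwise $\exp$-equal; a routine sub-induction on spines closes this, with the $\snil=\snil$ and prepattern-spine cases immediate. The only substantive case is when some side is headed by a recursion constant $r$ with $r:A=M\in\Sigma$. If the current equation is, up to renaming of its free variables, already stored in $\Delta$, rule (1) closes it; a witnessing $\sigma$ exists because unfolding a recursion constant applied to a prepattern spine yields only an $\alpha$-variant of its body, so nothing structurally new is ever produced --- the very observation underlying the decidability proof. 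Otherwise I apply rule (2) (if the left head is $r$) or rule (3) (if only the right head is $r$, whose side condition $H\neq r'$ then holds because the left head is not a recursion constant). Its side conditions are met: $S\isprepattern$ since the prepattern restriction holds for $M$, $M'$, and all recursion-constant bodies of $\Sigma$ and is preserved by the renamings the algorithm performs, and $M\iscontractive$ since every recursive definition in a signature is contractive by construction. The capacity is strictly positive at this step, because the counting argument of Theorem~\ref{thm:decidability_of_term_equality} bounds the memoizable equations by a fixed finite pool and the equations in $\Delta$ are pairwise distinct modulo renaming, so a not-yet-stored equation can always still be added. The new goal $S\rhd^{A^o}M = H'\cdot S'$ (or its mirror) is $\exp$-equal by the expansion rule for recursion constants together with the commutation lemmas, the extended $\Delta$ remains all-true since we only added the (true) current goal, and the measure has strictly decreased --- so the induction hypothesis supplies the sub-derivation.

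The step I expect to be the main obstacle is this recursion-constant case, specifically ruling out that the search gets \emph{stuck} at a recursion-constant-headed goal that is neither unfoldable (capacity exhausted) nor an instance of a memoized equation. The resolution is the same pigeonhole bound that underlies Theorem~\ref{thm:decidability_of_term_equality}: capacity can reach $0$ only after every memoizable equation has already been stored, and at that point rule (1) necessarily applies. Making this precise --- distinctness modulo renaming of free variables, compatibility of the closing renaming $\sigma$ with the stored variable list $\Theta'$, and the exact side conditions on rules (2) and (3) --- is the delicate bookkeeping; the remainder is a structural recursion driven by the shape lemma and the expansion rules, closely mirroring Huet's completeness argument for regular B\"ohm trees.
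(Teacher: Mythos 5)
Your proposal is correct and follows essentially the same route as the paper's own (much terser) argument: syntax-directed construction of the derivation, the observation that exactly one rule applies at each point when the expansions agree, and termination borrowed from the measure in Theorem~\ref{thm:decidability_of_term_equality}. Your strengthened invariant on $\Delta$, the shape lemma, and the explicit pigeonhole treatment of the recursion-constant case are precisely the details the paper leaves implicit, so this is a faithful elaboration rather than a different proof.
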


\begin{proof}
    The equality algorithm is syntax-directed. 
    We construct the derivation of  $\cdot; \Theta\vdash M = M'$ by syntax-directed proof search
    following the structure of $M$. 
    Every trace of $\exp(M)$ and $\exp(M')$ 
   corresponds to a trace in the derivation of $\cdot; \Theta\vdash M = M'$.
    If $\exp(M) = \exp(M')$, then two terms are equal on every trace, and 
    there will be exactly one rule that applies at every point in the construction of the equality derivation.
    Termination is assured by Theorem~\ref{thm:decidability_of_term_equality}. 
\end{proof}


\subsection{Type Checking Rules}

For type checking, we define the judgments in Fig.~\ref{fig:type_checking_judgments} 
by simultaneous induction.  Because 
recursion constants may be forward referenced, 
we need to have access to later declarations that have not been checked during 
the checking of earlier declarations. In order to ensure the otherwise linear order of the 
declarations, the type checking judgments are parametrized by a pair of signatures $\rsignature$, where $\Xi$ is the local signature 
that contains type-checked declarations before the current declaration 
 and $\Sigma$ is the global signature that contains 
full signatures, including declarations that have not been checked.
In particular, recursion constants available for forward-referencing will be in $\Sigma$ but not $\Xi$.
The type equality judgments $\Gamma \vdash_\Sigma A_1 = A_2$, $\Gamma \vdash_\Sigma P_1 = P_2$ 
only need to read recursive definitions from the global signature, and 
do not need to access the local signature.

\begin{figure}[t]
    \begin{longtable}{ll}
        \label{table:syntax_of_colf_two_circular}
         $\Sigma\validsig$ & Signature $\Sigma$ is type correct categorically \\
         $\vdash_{\Sigma}\Xi\validsig$ & Local signature $\Xi$ is type correct with global signature $\Sigma$\\
         $\vdash_\rsignature\Gamma\validctx$ & Context $\Gamma$ is well-formed \\
          $\Gamma\vdash_\rsignature K \Leftarrow \kind$ & Kind $K$ is a valid kind \\
          $\Gamma\vdash_\rsignature A \Leftarrow \typecotype$ & Type $A$ is a canonical type \\
          $\Gamma\vdash_\rsignature P \Rightarrow K$ & Atomic type $P$ synthesizes kind $K$ \\
          $\Gamma \vdash_\rsignature S \rhd K \Rightarrow K'$ & Spine $S$ applied to kind $K$ produces kind $K'$ \\
        $\Gamma \vdash_\rsignature M \Leftarrow A$ & Term $M$ checks against type $A$ \\
          $\Gamma \vdash_\rsignature R \Rightarrow P$ & Neutral term $R$ synthesizes type $P$  \\
          $\Gamma \vdash_\rsignature S \rhd A \Rightarrow P$ & Spine $S$ applied to canonical type $A$ produces atomic type $P$ \\
        $\Gamma \vdash_\Sigma A_1 = A_2$ & Types $A_1$ and $A_2$ are equal canonical types \\
        $\Gamma \vdash_\Sigma P_1 = P_2$ & Types $P_1$ and $P_2$ are equal atomic types \\
    \end{longtable}
    \caption{Type Checking Judgments}
    \label{fig:type_checking_judgments}
\end{figure}

    A selection of type checking rules that are essential are presented in Fig.~\ref{fig:type_checking_rules_selection}.
    The rest of the rules  can be found in 
    Appendix~\ref{appendix:type_checking_rules_for_colf}\oftheextendedversion.
    To ensure the correct type checking order, i.e., the body of a recursive definition is checked after
    the types of all recursion constants within are checked, we defer checking the body of all recursive definitions 
    to the end. This approach is viable because the term equality algorithm soundly terminates even when the recursive definition
    is not well-typed. For instance, if the signature $\Sigma = c_1 : A_1, c_2 : A_2, r_1 : A_3 = M_1, c_3 : A_4, r_2 : A_5 = M_2 $, 
    then the order of checking is $A_1, A_2, A_3, A_4, A_5, M_1, M_2$. This order is expressed 
    in the type checking rules by an annotation on specific premise of the rules.
    The annotation $   [{
        \vdash_{\rsignature} M \Leftarrow A
        }]^{\m{1:deferred}}$ means that this judgment is to be checked after all the typing judgments have been checked.
        That is, when we check this premise, we have checked that $\vdash _\Sigma \Sigma \validsig$. 
        Because of the deferred checking of recursive definitions, 
        the judgment $\vdash_\Sigma \Xi \validsig$ does not require 
         the body of recursion declarations in $\Xi$ to be well-typed. 
         However, the categorical judgment $\Sigma \validsig$ requires the body 
         of every recursion declaration to be well-typed.

        To enforce the restriction that forward references only happen in a recursive definition, 
        the annotation $    [\text{or $r : A = M \in \Sigma$}] ^ {\m{2:definitions}}$ means 
        that forward reference only occurs during the checking of recursive definitions (which are deferred) and nowhere else. 

\begin{figure}
\begin{multicols}{2}
$\boxed{\Sigma \validsig}$ 
\begin{mathparpagebreakable}
        \inferrule{
            \vdash_\Sigma \Sigma \validsig 
        }{
            \Sigma \validsig
        }
\end{mathparpagebreakable}

$\boxed{\vdash_{\Sigma}\Xi \validsig}$ 
\begin{mathparpagebreakable}
        \inferrule{
        }{
         \vdash_{\Sigma}    \cdot \validsig
        }
    
        \inferrule{
           \vdash_{\Sigma} \Xi \validsig 
            \\ 
            \vdash_{\rsignature} K \Leftarrow \kind
        }{
           \vdash_{\Sigma} \Xi, a : K \validsig
        }
    
        \inferrule{
            \vdash_{\Sigma}\Xi \validsig 
            \\ 
            \vdash_{\Sigma} A \Leftarrow \typecotype
        }{
            \vdash_{\Sigma}\Xi, c : A \validsig
        }

        \inferrule{
            \vdash_{\Sigma}\Xi \validsig 
            \\ 
            \vdash_{\rsignature} A \Leftarrow \typecotype
            \\
            [{
            \vdash_{\rsignature} M \Leftarrow A
            }]^{\m{1:deferred}}
            \\
            A \isprepattern
            \\
            M \iscontractive
            \\
            \vdash _\Sigma r \guardedin M 
        }{
            \vdash_{\Sigma}\Xi, r : A = M \validsig
        }
    \end{mathparpagebreakable}
    
    $\boxed{\Gamma \vdash_\rsignature K \Leftarrow \kind}$
    \begin{mathpar}
        \inferrule{
        }{
            \Gamma \vdash_\rsignature \type \Leftarrow \kind
        }
    
        \inferrule{
        }{
            \Gamma \vdash_\rsignature \cotype \Leftarrow \kind
        }

        \inferrule{
            \Gamma \vdash_\rsignature A \Leftarrow \typecotype
            \\
            \Gamma, x \prepatofoption A\vdash_\rsignature K \Leftarrow \kind
        }{
            \Gamma \vdash_\rsignature  \Pi x \prepatofoption A . \, K \Leftarrow \kind
        }
    \end{mathpar}
    
    $\boxed{\Gamma \vdash_\rsignature A \Leftarrow \typecotype}$ 
    \begin{mathpar}
        \inferrule{
            \Gamma \vdash_\rsignature A_2 \Leftarrow \typecotype\\
            \Gamma, x \prepatofoption A_2\vdash_\rsignature A_1 \Leftarrow \typecotype
        }{
            \Gamma \vdash_\rsignature \Pi x \prepatofoption A_2. \, A_1 \Leftarrow \typecotype
        }
    
        \inferrule{
            \Gamma \vdash_\rsignature P \Rightarrow K \\
            K = \type/\cotype
        }{
            \Gamma \vdash_\rsignature P \Leftarrow \typecotype
        } 
    \end{mathpar}

    \boxed{\Gamma\vdash_\rsignature P \Rightarrow K}
    \begin{mathpar}
        \inferrule{
            a : K \in \Xi
            \\
            \Gamma \vdash _\rsignature S \rhd K \Rightarrow K'
        }{
            \Gamma\vdash_\rsignature a\cdot S \Rightarrow K'
        }
    \end{mathpar}

    \boxed{\Gamma \vdash_\rsignature S \rhd K \Rightarrow K'} 
    \begin{mathpar}
    \inferrule{
    }{
        \Gamma  \vdash_\rsignature \snil  \rhd K \Rightarrow K
    }

    \inferrule{
        \Gamma \vdash_\rsignature M \Leftarrow A_2 \\
        [M/x]^{\erased{A_2}}K = K'
        \\
        \Gamma \vdash_\rsignature S \rhd K' \Rightarrow K''
    }{
        \Gamma  \vdash_\rsignature M;S  \rhd \Pi x : A_2. \, K \Rightarrow  K''
    }

    \inferrule{
        y \prepatof A_2' \in \Gamma\\
        \Gamma \vdash_\rsignature A_2' = A_2
        \\
        \varrenamesubst{y/x}{K} = K'
        \\
        \Gamma \vdash_\rsignature S \rhd K' \Rightarrow K''
    }{
        \Gamma  \vdash_\rsignature\prepatternvar  y;S  \rhd \Pi x \prepatof A_2. \, K \Rightarrow  K''
    }
    \end{mathpar}

    \boxed{\Gamma \vdash_\rsignature M \Leftarrow A} 
    \begin{mathpar}
    \inferrule{
        \Gamma  \vdash_\rsignature R \Rightarrow P'
        \\
        \Gamma \vdash_\Sigma P' = P
    }{
        \Gamma  \vdash_\rsignature R \Leftarrow P
    }

    \inferrule{
        \Gamma, x \prepatofoption A_2 \vdash_\rsignature M \Leftarrow A_1
    }{
        \Gamma  \vdash_\rsignature \lambda x.\, M \Leftarrow \Pi x \prepatofoption A_2.\,A_1
    }
    \end{mathpar}

    \boxed{\Gamma \vdash_\rsignature R \Rightarrow P} 
    \begin{mathpar}
    \inferrule{
        (c/x : A \in \Gamma
        \text{ or }
        x \prepatof A \in \Gamma)
        \\
        \Gamma \vdash _\rsignature S \rhd A \Rightarrow P
    }{
        \Gamma  \vdash_\rsignature c/x  \cdot S \Rightarrow P
    }

    \inferrule{
        r : A = M \in \Xi
        \\
        [\text{or $r : A = M \in \Sigma$}] ^ {\m{2:definitions}}
        \\
        \Gamma \vdash _\rsignature  S \rhd A \Rightarrow P
    }{
        \Gamma  \vdash_\rsignature r \cdot  S \Rightarrow P
    }
    \end{mathpar}

    \boxed{\Gamma \vdash_\rsignature S \rhd A \Rightarrow P} 
    \begin{mathpar}
    \inferrule{
    }{
        \Gamma  \vdash_\rsignature \snil  \rhd P \Rightarrow P
    }

    \inferrule{
        \Gamma \vdash_\rsignature M \Leftarrow A_2 \\
        [M/x]^{\erased{A_2}}A_1 = A_1'
        \\
        \Gamma \vdash_\rsignature S \rhd A_1' \Rightarrow P
    }{
        \Gamma  \vdash_\rsignature M;S  \rhd \Pi x : A_2. \, A_1 \Rightarrow  P
    }

    \inferrule{
        y \prepatof A_2' \in \Gamma\\
        \Gamma \vdash_\rsignature A_2' = A_2
        \\
        \varrenamesubst{y/x}{A_1} = A_1'
        \\
        \Gamma \vdash_\rsignature S \rhd A_1' \Rightarrow P
    }{
        \Gamma  \vdash_\rsignature \prepatternvar y;S  \rhd \Pi x \prepatof A_2. \, A_1 \Rightarrow  P
    }
    \end{mathpar}
\end{multicols}
\caption{Type Checking Rules (Condensed Selection)}
\label{fig:type_checking_rules_selection}
\end{figure}

\subsection{Metatheorems}
\label{subsection:metatheorems_after_type_checking_rules}
We state some properties about hereditary substitution and type checking.

\begin{theorem}
    [Hereditary Substitution Respects Typing]

    Given a checked signature $\Sigma$ where $\Sigma \validsig$, 
     if $\Gamma \vdash_\rsignature N \Leftarrow A$ and $\Gamma, x : A, \Gamma' \vdash M \Leftarrow B$, then 
    $\Gamma, [N/x]\uperased{A}\Gamma'\vdash_\rsignature [N/x]^{A^o} M \Leftarrow [N/x]^{A^o}B$.
\end{theorem}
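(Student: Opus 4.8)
The plan is to prove a strengthened statement that bundles all of the typing judgments of Fig.~\ref{fig:type_checking_judgments} --- term checking $M \Leftarrow A$, neutral synthesis $R \Rightarrow P$, the two spine judgments $S \rhd A \Rightarrow P$ and $S \rhd K \Rightarrow K'$, type and kind formation, and context formation --- into a single simultaneous claim, and to prove it by lexicographic induction on the structure of the simple type $\erased{A}$ followed by the structure of the derivation being transformed. Before the induction I would record the ancillary facts that make the statement well-posed: in a valid context $\Gamma, x : A, \Gamma'$ the type $A$ is well-formed already in $\Gamma$, hence $x \notin FV(A)$ and $[N/x]\uperased{A}A = A$; entries of $\Gamma$ are untouched by $[N/x]$, so only $\Gamma'$ is substituted; and, crucially, that the hypotheses guarantee none of the \m{undefined} clauses of hereditary substitution is reached. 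The prepattern discipline is what secures this last point: the clause $[N/x]\uperased{A}(\prepatternvar x; S) = \m{undefined}$ can never fire, since the spine rule for $\prepatternvar y; S$ forces its head $y$ to be a \emph{prepattern} variable recorded in $\Gamma$, while $x$ is an ordinary variable; and recursion-constant heads are declared in $\Sigma$ with closed types, so $[N/x]\uperased{A}(r\cdot S)$ only descends into $S$, which (because $r$'s type is a prepattern $\Pi$-type and $S \isprepattern$) comes out again as a prepattern spine of the same shape.

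The crux is the neutral case $[N/x]\uperased{A}(x\cdot S) = ([N/x]\uperased{A}S)\rhd^{\erased{A}} N$, which rests on a companion \emph{spine--application lemma}: if $\Gamma \vdash_\rsignature N \Leftarrow A$ and $\Gamma \vdash_\rsignature S' \rhd A \Rightarrow P$ with $S'$ a canonical spine, then $S' \rhd^{\erased{A}} N$ is defined and $\Gamma \vdash_\rsignature S' \rhd^{\erased{A}} N \Leftarrow P$. I would prove this lemma inside the same induction, by sub-induction on $S'$ and $\erased{A}$. The base case $\snil \rhd^{*} R = R$ is immediate, because $\erased{A} = *$ forces $A$ to be atomic, so $N$ is already a neutral term of that type. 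The cons case $(N_0; S_0)\rhd^{\tau_2\to\tau_1}(\lambda y.M_0) = S_0 \rhd^{\tau_1}\!\big([N_0/y]^{\tau_2}M_0\big)$ first appeals to the \emph{main} substitution statement at the type $\tau_2$, a proper subterm of $\erased{A} = \tau_2\to\tau_1$ --- this is exactly the step that consumes the outer ``smaller simple type'' component of the lexicographic order --- and then recurses on the strictly shorter spine $S_0$ at type $\tau_1$. The prepattern sub-case $(\prepatternvar y; S_0)\rhd^{*\to\tau_1}(\lambda y.M_0) = S_0\rhd^{\tau_1}(\varrenamesubst{y/x}M_0)$ needs only a one-line renaming lemma: $\varrenamesubst{y/x}{\cdot}$ preserves every typing judgment when $y$ is a prepattern variable of the target context (routine structural induction, since a prepattern variable can only ever have replaced another prepattern variable). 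With the lemma in hand the neutral case closes: inversion gives $\Gamma, x{:}A, \Gamma' \vdash_\rsignature S \rhd A \Rightarrow P$; the IH on this smaller derivation at the same type $\erased{A}$ yields $\Gamma, [N/x]\uperased{A}\Gamma' \vdash_\rsignature [N/x]\uperased{A}S \rhd A \Rightarrow [N/x]\uperased{A}P$; and feeding this together with $\Gamma \vdash_\rsignature N \Leftarrow A$ into the spine--application lemma gives the result.

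The remaining cases are congruences. For $\lambda$ and for both $\Pi$ binders one $\alpha$-renames the bound variable away from $x$ and from $FV(N)$, pushes the substitution under the binder, and applies the IH. For the heads $c$, $r$, and $y$ (with $y \neq x$) one pushes the substitution into the spine and uses the IH on the spine, observing for $y$ that its declared type in $\Gamma, [N/x]\uperased{A}\Gamma'$ is exactly the substituted type demanded by the conclusion. The spine rules for a genuine argument, $M; S \rhd \Pi x':A_2.\,A_1$ and its kind analogue, additionally carry the instantiation $[M/x']^{\erased{A_2}}A_1 = A_1'$ inside the rule, so closing these cases requires a \emph{commutation lemma} --- two hereditary substitutions commute when their variables are distinct and neither is free in the other's replacement term --- which is the precise analogue of the commutation lemma already established for $\exp$ and is proved by the same lexicographic induction; I would state and prove it jointly with the main theorem. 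Finally, rules carrying a type-equality side condition ($\Gamma \vdash_\Sigma P' = P$ in the checking/synthesis coercion, $\Gamma \vdash_\Sigma A_1 = A_2$ in type formation) are discharged by transporting the equality derivation through the substitution, using stability of the equality judgment of Fig.~\ref{fig:higher_order_rational_terms_equality_checking} under hereditary substitution --- a short induction on equality derivations, consistent with the soundness and completeness of equality checking established earlier.

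I expect the main obstacle to be the bookkeeping of the mutual recursion: arranging a single lexicographic measure that witnesses termination of the whole package --- main theorem, spine--application lemma, renaming lemma, and commutation lemma --- at once, and in particular checking that every call from the spine--application lemma back into the main theorem is at a \emph{strictly} smaller simple type and that every appeal to the commutation lemma is at a smaller type or a smaller derivation. The only genuinely new work relative to canonical LF is narrow: verifying that the prepattern restriction makes hereditary substitution total on well-typed inputs (no \m{undefined}), and that recursion-constant declarations, being closed and living in $\Sigma$, are inert under substitution, so neither the induction nor conservativity over LF is disturbed.
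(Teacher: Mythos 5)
Your proposal is correct and follows essentially the same route as the paper, which proves the theorem by induction on the second derivation (generalized over all judgment forms) in the style of the canonical-LF hereditary substitution theorems of Watkins et al.\ and Harper--Licata, observing exactly as you do that the prepattern restriction keeps substitution out of recursive definitions. The paper states this in three sentences and defers the details to the cited works; your lexicographic measure on $(\erased{A}, \text{derivation})$, the spine--application and commutation lemmas, and the totality argument for the $\m{undefined}$ clauses are precisely the details being deferred.
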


\begin{proof}
    By induction on the second derivation, with similar theorems for other judgment forms.
    This proof is similar to those in \cite{Watkins02tr,Harper07jfp}. Because 
    of the prepattern restriction, hereditary substitutions do not 
    occur inside recursive definitions and is thus similar to hereditary substitutions in LF.
\end{proof}

\begin{theorem}
    [Decidability of Type Checking]

    All typing judgments are algorithmically decidable.
\end{theorem}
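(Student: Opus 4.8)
The plan is to reduce the statement to three independently decidable ingredients and then fit the mutually inductive judgments of Fig.~\ref{fig:type_checking_judgments} onto a single well-founded recursion. \textbf{(I) Auxiliary operations terminate.} Hereditary substitution $[N/x]^\tau(-)$, renaming substitution $\varrenamesubst{y/x}(-)$, and simultaneous renaming $\varrenamesubst{\sigma}(-)$ are partial functions defined by lexicographic recursion on the pair $(\tau,\text{structure of the subject})$, exactly as in canonical LF; hence each either returns a unique result or is undefined, in finitely many steps. Likewise the side conditions $A \isprepattern$, $M \iscontractive$, and $\vdash_\Sigma r \guardedin M$ occurring in the premises of $\vdash_\Sigma \Xi \validsig$ are decidable (the first two by inspection, the last by the guardedness-checking algorithm).

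\textbf{(II) Type and kind equality are decidable.} The judgments $\Gamma \vdash_\Sigma A_1 = A_2$ and $\Gamma \vdash_\Sigma P_1 = P_2$ are syntax-directed: they recurse structurally on the two types through (prepattern-)$\Pi$-binders and bottom out at atomic types, where they reduce to element-wise equality of the argument spines — that is, to the term-equality judgment, which is decidable by Theorem~\ref{thm:decidability_of_term_equality}. Kind equality reduces in the same way to type and term equality.

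\textbf{(III) The checking and synthesis judgments.} First observe that the rules are effectively syntax-directed: the shape of the subject being analyzed (the kind, type, term $M$, neutral $R$, or spine $S$, together with its head when it is neutral) selects at most one applicable rule, and the annotations $[\cdot]^{\m{1:deferred}}$ and $[\cdot]^{\m{2:definitions}}$ introduce no ambiguity but merely record \emph{when} a premise is discharged. Concretely the algorithm makes two passes over $\Sigma$: pass (a) traverses the declarations left to right, checking each $a:K$, each $c:A$, and the type $A$ together with the side conditions of each $r:A=M$, thereby constructing $\vdash_\Sigma\Sigma\validsig$; pass (b) then discharges, for each recursion declaration, the deferred premise $\vdash_\rsignature M \Leftarrow A$. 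Both passes are finite iterations (context validity is handled by its finitely many structural rules). A single call to a term/kind/spine judgment terminates under the lexicographic measure $(\text{size of the subject},\,j)$, where $j$ ranks the finitely many judgment forms so that a checking judgment outranks the synthesis judgment on a common subject (so the step $R\Leftarrow P$ via $R\Rightarrow P'$ with $P'=P$ is accounted for, since $R\Rightarrow P'$ then decomposes $R=H\cdot S$ into a head lookup and a spine judgment on the strictly smaller $S$). In every rule each recursive premise acts on a strict subterm of the subject — for the $\Pi$/$\lambda$ rules, the kind-checking rules, and the spine rules, where the tail spine strictly decreases — while the type/kind-equality side premises terminate by (II) and the hereditary-substitution side premises by (I).

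\textbf{Main obstacle.} The delicate point is twofold. First, pass (b) checks recursive-definition bodies last and permits forward references to not-yet-checked recursion constants; this is harmless because type equality — and the term-equality algorithm underlying it, Theorem~\ref{thm:decidability_of_term_equality} — is by design independent of typing and trace-validity, so a call in pass (b) terminates even though the bodies it consults are only known to be type-correct afterwards. Second, the \emph{type} threaded through a spine judgment $S\rhd A\Rightarrow P$ may \emph{grow} when $A$ is produced by hereditary substitution $[M/x]^{\erased{A_2}}A_1=A_1'$; this does not break well-foundedness because the carried type is not part of the termination measure (the spine is), and hereditary substitution is a terminating side computation by (I). Combining (I)--(III), every typing judgment of Fig.~\ref{fig:type_checking_judgments} is algorithmically decidable.
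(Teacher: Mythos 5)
Your proposal is correct and follows essentially the same route as the paper's (much terser) argument: the typing rules are syntax-directed, hereditary substitution terminates by induction on erased simple types, and type equality bottoms out in term equality, which is decidable by Theorem~\ref{thm:decidability_of_term_equality}. Your additional observations --- that the deferred pass over recursive-definition bodies is harmless because the equality algorithm is decoupled from typing, and that the type threaded through a spine judgment need not be part of the termination measure --- are elaborations of points the paper leaves implicit rather than a different proof strategy.
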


\begin{proof}
    The  type checking judgment is syntax directed. Hereditary substitutions are defined by induction on the erased 
    simple types and always terminate. Equality of types ultimately reduces to equality of terms, and we
     have proved its termination in Section~\ref{subsection:term_equality}.
\end{proof}

\section{Encoding Subtyping Systems for Recursive Types}

In the presentation of case studies,
we use the concrete syntax of our implementation, following Twelf \cite{Pfenning98guide}.
 The prepattern annotations 
are omitted. The full convention can be found in 
Appendix~\ref{appendix:concrete_syntax_for_colf}\oftheextendedversion.
Representations of circular derivations involve dependent usages of $\cotype$'s.

\label{section:encoding_subtyping_systems_for_recursive_types}
\subsection{Encoding a Classical Subtyping System}

We present a mixed inductive and coinductive definition of subtyping using Danielsson and Altenkirch's \cite{Danielsson10mpc} subtyping 
system.  The systems concern the subtyping of types given by the following grammar.
\newcommand{\arrto}{\rightarrowtriangle}
\[\tau ::= \bot \mid \top \mid \tau_1\arrto\tau_2 \mid \mu X. \tau_1 \arrto \tau_2 \mid X\]
The subtyping judgment is defined by five axioms and two rules, 
The axioms are
\begin{enumerate}
    \item $\bot\le\tau$ ($\m{bot}$)
    \item $\tau \le\top$($\m{top}$)
    \item $ \mu X.\tau_1\to\tau_2
        \le 
        [\mu X.\tau_1 \to\tau_2/X](\tau_1\to\tau_2) $
    ($\m{unfold}$)
    \item
    $ [\mu X.\tau_1\to\tau_2/X](\tau_1\to\tau_2) 
        \le 
        \mu X.\tau_1\to\tau_2
        $
    ($\m{fold}$)
    \item $\tau \le \tau$ ($\m{refl}$)
\end{enumerate}
And the rules are shown below,
where $\m{arr}$ is coinductive and is written using a double horizontal line, and $\m{trans}$ is inductive.
The validity condition of mixed induction and coinduction entails that a derivation consisting purely of $\m{trans}$ 
rules is not valid. 
\begin{mathparpagebreakable}
    {
    \mprset{fraction={===}}
    \inferrule{
        \tau_1 \le \sigma_1
        \\
        \sigma_2 \le \tau_2
    }{ \sigma_1 \to \sigma_2 \le \tau_1 \to \tau_2} (\m{arr})
    }

    \inferrule{
        \tau_1 \le \tau_2
        \\
        \tau_2 \le \tau_3
    }{
        \tau_1 \le \tau_3
    }(\m{trans})
\end{mathparpagebreakable}



Danielsson and Altenkirch defined the rules using Agda's mixed inductive and coinductive datatype (shown in 
 Appendix~\ref{appendix:subtyping_proofs_agda_colf}\oftheextendedversion) and the encoding in $\CoLFCircular$ 
is shown in Fig.~\ref{fig:encoding_of_subtyping_in_colf}.
The curly brackets indicate explicit $\Pi$-abstractions and the free capitalized variables are implicit $\Pi$-abstracted.
 We note that the mixed inductive and coinductive nature of the 
subtyping rules
reflected in $\CoLFCircular$ as two predicates, the inductive \verb#subtp# and the coinductive \verb#subtpinf#, and the latter has a higher priority.
Clauses defining one predicate refer to the other predicate as a premise, e.g. \verb#subtp/arr# and \verb#inf/arr#. 
Let $\enc{-}$ denote the encoding relation, and 
 we have $\enc{\mu X. \sigma \arrto \tau} = \mt{mu}\, \enc{X.\sigma} \, \enc{X.\tau}$.

 \begin{figure}[t]
\begin{minipage}{0.5\linewidth}
\begin{verbatim}
tp : type.
bot : tp.
top : tp. 
arr : tp -> tp -> tp.
mu : (tp -> tp) -> (tp -> tp) -> tp.
\end{verbatim}
\end{minipage}
\begin{minipage}{0.5\linewidth}
\begin{verbatim}
subtp : tp -> tp -> type.
subtpinf : tp -> tp -> cotype.
subtp/top : subtp T top.
subtp/bot : subtp bot T.
refl : subtp T T.
\end{verbatim}
\end{minipage}

\vspace{.1em}
\begin{minipage}{1.0\linewidth}
\begin{verbatim}
trans : subtp T1 T2 -> subtp T2 T3 -> subtp T1 T3.
subtp/arr : subtpinf T1 T2 -> subtp T1 T2.
unfold : {T1}{T2} subtp (mu T1 T2) (arr (T1 (mu T1 T2)) (T2 (mu T1 T2))).
fold : {T1}{T2} subtp (arr (T1 (mu T1 T2)) (T2 (mu T1 T2))) (mu T1 T2).
inf/arr : subtp T1 S1 -> subtp S2 T2 -> subtpinf (arr S1 S2) (arr T1 T2).
\end{verbatim}
\end{minipage}
\caption{An Encoding of Subtyping in $\CoLFCircular$}
\label{fig:encoding_of_subtyping_in_colf}
\end{figure}
\begin{theorem}
    [Adequacy of Encoding]
    \begin{enumerate}
        \item 
    There is a compositional bijection between recursive types and valid canonical terms of type \verb$tp$
        \item 
        For types $\sigma$ and $\tau$, 
    there is a compositional bijection between valid cyclic subtyping derivations of $\sigma \le \tau$, 
    and valid canonical terms of type \verb$subtp $$\enc{\sigma}\, \enc{\tau}$.
    \end{enumerate}
\end{theorem}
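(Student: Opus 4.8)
The plan is to run the standard LF-style adequacy argument, adapted to $\CoLFCircular$, proving the two parts in order and using part~1 as a lemma inside part~2. The pivotal observation is that the type families $\mt{tp}$ and $\mt{subtp}$ are declared at kind $\type$ whereas $\mt{subtpinf}$ is declared at kind $\cotype$ and, being the last type family declared, has the highest priority, so its only constructor $\mt{inf/arr}$ has strictly higher priority than every constructor of $\mt{tp}$ or $\mt{subtp}$. By the trace condition of Section~\ref{subsection:trace_condition}, no recursion constant whose self-traces pass only through $\mt{tp}$-constructors, or only through $\mt{subtp}$-constructors, is trace-valid; the only recursive behaviour that survives in a valid term of type $\mt{subtp}\,\enc{\sigma}\,\enc{\tau}$ is one in which every cycle is guarded by $\mt{inf/arr}$. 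This is exactly the mixed inductive/coinductive validity requirement on subtyping derivations (``no infinite branch consisting only of $\m{trans}$'', equivalently every cycle crosses an $\m{arr}$), and matching these two conditions will be the heart of the proof.

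For part~1, I would first fix the encoding $\enc{-}$ as a map from recursive types with free type variables among $X_1,\dots,X_n$ to canonical terms in the context $X_1{:}\mt{tp},\dots,X_n{:}\mt{tp}$, with $\enc{X_i}=X_i$, $\enc{\bot}=\mt{bot}$, $\enc{\top}=\mt{top}$, $\enc{\tau_1\arrto\tau_2}=\mt{arr}\cdot(\enc{\tau_1};\enc{\tau_2})$, and $\enc{\mu X.\tau_1\arrto\tau_2}=\mt{mu}\cdot(\enc{X.\tau_1};\enc{X.\tau_2})$ where $\enc{X.\tau}$ abbreviates $\lambda X.\enc{\tau}$. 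The core is the canonical-forms lemma: every $M$ with $\Gamma\vdash M\Leftarrow\mt{tp}$ in a context $\Gamma$ of that shape equals $\enc{\tau}$ for a unique recursive type $\tau$. This proceeds by induction on $M$: since $\mt{tp}$ is atomic, $M=H\cdot S$ is neutral, the head $H$ is some $X_i$ or one of $\mt{bot},\mt{top},\mt{arr},\mt{mu}$, and the spine-typing rules force the length and types of the arguments --- two arguments of type $\mt{tp}$ for $\mt{arr}$ (recurse), two of type $\mt{tp}\to\mt{tp}$ for $\mt{mu}$, which by $\eta$-longness are $\lambda X.M'$ with $M'$ canonical of type $\mt{tp}$ in the context extended by $X{:}\mt{tp}$ (recurse). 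No recursion constant can occur, since a hypothetical $r{:}\mt{tp}=M$ has a self-trace through inductive constructors only. Injectivity of $\enc{-}$ follows because distinct heads give syntactically distinct terms; surjectivity is the canonical-forms lemma. Compositionality, $\enc{[\sigma/X]\tau}=[\enc{\sigma}/X]^{*}\enc{\tau}$, is a routine induction on $\tau$ using the hereditary-substitution clauses of Fig.~\ref{fig:hereditary_substitution_essential_cases}, the only nontrivial case being $\mu$, where the substitution is pushed under the $\lambda$.

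For part~2, I would extend $\enc{-}$ to derivations by the evident rule-by-rule correspondence --- the axioms $\bot\le\tau$, $\tau\le\top$, $\tau\le\tau$, $\m{unfold}$, $\m{fold}$ to the constants $\mt{subtp/bot}$, $\mt{subtp/top}$, $\mt{refl}$, $\mt{unfold}$, $\mt{fold}$ (the last two supplied with their two $\mt{tp}\to\mt{tp}$ arguments via part~1), the inductive rule $\m{trans}$ to $\mt{trans}$ applied to the two encoded subderivations, and the coinductive rule $\m{arr}$ to $\mt{inf/arr}$ wrapped by the coercion $\mt{subtp/arr}$ --- and to cyclic derivations by turning each back-edge into a recursion constant: a cyclic derivation is a finite system of mutually recursive derivation equations, which maps to a signature extension by declarations $r_i$ of type $\mt{subtp}\,\enc{\sigma_i}\,\enc{\tau_i}$ or $\mt{subtpinf}\,\enc{\sigma_i}\,\enc{\tau_i}$ according to which of the two encoded judgments the node asserts, and conversely a valid rational term of type $\mt{subtp}\,\enc{\sigma}\,\enc{\tau}$ together with the recursion declarations it mentions decodes to a cyclic derivation. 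I would then prove, simultaneously for $\mt{subtp}$ and $\mt{subtpinf}$, by induction on term/derivation structure with the cyclic part handled through the bijection on recursion equations, that $\enc{-}$ is a bijection: the canonical-forms machinery pins down the shape of each term (a term of type $\mt{subtp}\,A\,B$ is $H\cdot S$ for exactly one of the seven $\mt{subtp}$-producing heads or a recursion constant, and the index arguments are forced), part~1 identifies the types, and compositionality from part~1 makes $\enc{[\mu X.\tau_1\arrto\tau_2/X](\tau_1\arrto\tau_2)}$ agree with the hereditarily-substituted index $\mt{arr}\cdot(\dots)$ appearing in $\mt{unfold}$ and $\mt{fold}$. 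Since two cyclic derivations with the same infinite unfolding are to be identified, and likewise two rational terms are identified exactly when $\cdot;\cdot\vdash_\Sigma M=M'$, the soundness and completeness of term equality (Section~\ref{subsection:term_equality}) together with its decidability (Theorem~\ref{thm:decidability_of_term_equality}) make the bijection well defined and effective on the equivalence classes.

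The hard part will be the exact matching of the two validity conditions: showing that a cyclic subtyping derivation meets the mixed inductive/coinductive requirement if and only if its encoding is trace-valid. This requires tracking how a trace in the $\CoLFCircular$ term --- a sequence of constructor/variable parent--child steps that may cross recursion-constant definitions --- corresponds to a branch through the derivation, using that $\mt{subtp/arr}$, the bridge from $\mt{subtp}$ to $\mt{subtpinf}$, is inductive and hence not of highest priority, that $\mt{unfold},\mt{fold},\mt{refl},\mt{subtp/top},\mt{subtp/bot}$ are leaves and therefore cannot lie on the cycles that the trace condition constrains, and that $\mt{inf/arr}$ is the only coinductive constructor and has the highest priority because $\mt{subtpinf}$ is declared last; one then checks that a cycle in the encoding is guarded by $\mt{inf/arr}$ of highest priority precisely when the corresponding cycle in the derivation crosses an $\m{arr}$. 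Everything else --- the canonical-forms lemmas, injectivity, surjectivity, and compositionality --- is routine given part~1, the theorem that hereditary substitution respects typing, and the term-equality results already established.
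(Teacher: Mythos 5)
Your proposal is correct and follows essentially the same route as the paper, whose own proof is only a two-line sketch: induction on the structure of types/derivations in the forward direction and on terms in the reverse direction, with the observations that cycles correspond to recursion constants and that the mixed inductive/coinductive validity condition coincides with CoLF trace-validity. You simply spell out in detail (canonical-forms lemmas, compositionality of hereditary substitution, the priority argument for $\mathtt{inf/arr}$) exactly the steps the paper leaves implicit, and you correctly identify the validity-condition matching as the one point the paper dismisses in a single sentence.
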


\begin{proof}
    \begin{enumerate}
        \item Directly by induction on the structure of recursive types in the forward direction, 
        and by induction on the structure of the typing derivation in the reverse direction.
        \item By induction on the syntax of the circular derivations in the forward direction, and 
        by induction on the syntax of the higher-order rational terms in the reverse direction.
        Note that cycles in the circular derivations correspond directly to occurrences of recursion constants.
        The validity condition of mixed induction and coinduction coincides with CoLF validity.
    \end{enumerate}
\end{proof}

We give an example of the subtyping derivation of $\mu X. X \arrto X \le \mu X. (X \arrto \bot ) \arrto \top $.
Let $S = \mu X. X \arrto X $ and $T = \mu X. (X \arrto \bot ) \arrto \top $. 
\[
\infer[\m{trans}]{
    (\mt{s\_sub\_t})\ S \le T
}{
    \infer[\m{unfold}]{
        S \le S \arrto S
    }{
    }
    &
    \hspace{-95pt}
    \infer[\m{trans}]{
        S \arrto S \le T
    }{
        \infer[\m{\arrto}]{
            S \arrto S \le (T \arrto \bot) \arrto \top
        }{
            \infer[\m{trans}]{
                T \arrto \bot \le S
            }{
                \infer[\m{\arrto}]{
                    T \arrto \bot \le S \arrto S
                }{
                    \deduce{
                        S \le T
                    }{
                        (\mt{s\_sub\_t})
                    }
                    &
                \hspace{-2pt}
                    \infer[\bot]{
                        \bot \le S
                    }{}
                }
                &
                \hspace{-2pt}
                \infer[\m{fold}]{
                    S \arrto S \le S
                }{
                }
            }
            &
            \hspace{-2pt}
            \infer[\top]{
                S \le \top
            }{
            }
        }
        &
        \hspace{-2pt}
        \infer[\m{fold}]{
            (T \arrto \bot) \arrto \top \le T
        }{
        }
    }
}
\]

Here is the encoding in $\CoLFCircular$: 

\begin{verbatim}
s : tp = mu ([x] x) ([x] x).
t : tp = mu ([x] arr x bot) ([x] top).
s_sub_t : subtp s t = 
    trans (unfold ([x] x) ([x] x)) (trans (subtp/arr (inf/arr
                    (trans (subtp/arr (inf/arr s_sub_t subtp/bot))
                        (fold ([x] x) ([x] x))) subtp/top)) 
                        (fold ([x] arr x bot) ([x] top))).
\end{verbatim}

We note that the circular definition is valid by the presence of the 
constructor \verb#inf/arr# along the trace from \verb#s_sub_t# to itself. 
The presence of the coinductive $\m{arr}$ rule is the validity condition of mixed inductive and coinductive 
definitions. 

There are two key differences between a $\CoLFCircular$ encoding and an Agda encoding.
First, in Agda one needs to use explicit names for $\mu$-bound variables or de Bruijn indices, while 
in CoLF one uses abstract binding trees. Second, Agda does not have built-in coinductive equality but 
CoLF has built-in equality. In Agda, the one step of unfolding \verb$s_sub_t$ is not equal 
to \verb$s_sub_t$, but in $\CoLFCircular$, they are equal.

\newcommand{\upshift}{{\uparrow}}
\newcommand{\downshift}{{\downarrow}}
\newcommand{\ampersand}{\operatorname{\&}}

\newcommand{\tpempty}{\m{empty}}
\newcommand{\tpfull}{\m{full}}
\newcommand{\EMP}{\m{EMP}}
\newcommand{\FULL}{\m{FULL}}
\newcommand{\SUB}{\m{SUB}}

\subsection{Encoding a Polarized Circular Subtyping System for Equirecursive Types}

We present an encoding of a variant Lakhani et al.'s polarized subtyping system \cite{Lakhani22esop} into $\CoLFCircular$. 
The system is circular. 
Due to space constraints, we only present the encoding for the positive types fragment and their emptiness derivations. 
This is an important part in the subtyping system because an empty type is a subtype of any other type.
The full encoding of the polarized subtyping system can be found in 
Appendix~\ref{appendix:encoding_polarized_subtyping}\oftheextendedversion.


\subsubsection{Encoding of Positive Equirecursive Types.}

 The equirecursive nature is captured 
by a signature $\Sigma$ providing recursive definitions for type names $t^+$.

\begin{center}
\begin{tabular}{lll}
    $\tau^+, \sigma^+ $ & $ ::= $ & $t_1^+ \otimes t_2^+ \mid \mathbf{1} \mid t_1^+ \oplus t_2^+ \mid \mathbf{0} $ \\
    $\Sigma $ & $ ::= $ & $\cdot \mid \Sigma, t^+ = \tau^+ $ \\
\end{tabular}
\end{center}

Equirecursive types are directly encoded as recursion constants in the system, 
and the framework automatically provides equirecursive type equality checking.
Because equirecursive types are circular, positive types are encoded as $\cotype$.

\begin{nolinenumbers}
\begin{multicols}{2}
\begin{verbatim}
postp : cotype.

times : postp -> postp -> postp.
one : postp.
plus : postp -> postp -> postp.
zero : postp.
\end{verbatim}
\end{multicols}
\end{nolinenumbers}

\begin{theorem}[Adequacy of Type Encoding]
    There is a bijection between circular types defined in an object signature for the 
    positive types fragment and canonical forms of the \verb#postp# 
    in $\CoLFCircular$.
\end{theorem}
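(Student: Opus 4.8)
The plan is to establish the bijection by exhibiting mutually inverse translations between object-level circular positive types (given by an object signature $\Sigma$ of recursive definitions $t^+ = \tau^+$) and canonical forms of type \verb#postp# in $\CoLFCircular$, and then to check that the translations are inverse up to the relevant notions of equality. First I would fix the direction from object types to framework terms: given an object signature $\Sigma$, map each type name $t^+$ to a fresh recursion constant $r_{t^+} : \m{postp} = \enc{\tau^+}$, where $\enc{-}$ is defined structurally by $\enc{t_1^+ \otimes t_2^+} = \m{times}\cdot(\enc{t_1^+}; \enc{t_2^+})$, $\enc{\mathbf 1} = \m{one}\cdot(\snil)$, $\enc{t_1^+ \oplus t_2^+} = \m{plus}\cdot(\enc{t_1^+}; \enc{t_2^+})$, $\enc{\mathbf 0} = \m{zero}\cdot(\snil)$, and $\enc{t^+} = r_{t^+}\cdot(\snil)$ for a type name. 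Note that the object grammar only allows type names as the immediate components of a type constructor, so every $\enc{\tau^+}$ is contractive (its head is a constructor $\m{times}$, $\m{one}$, $\m{plus}$, or $\m{zero}$), each $\enc{\tau^+}$ has type \verb#postp# so the prepattern condition $A \isprepattern$ is vacuous, and guardedness holds because \verb#postp# is coinductive and is the only type family in this fragment, hence every constructor on every trace is coinductive of the highest priority. Thus the translated signature is a well-formed $\CoLFCircular$ signature and $r_{t^+}\cdot(\snil) \Leftarrow \m{postp}$.

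Next I would define the reverse translation. A canonical form of type \verb#postp# in a $\CoLFCircular$ signature is, since \verb#postp# has no $\Pi$-structure, a neutral term $H\cdot S$ whose head is either a constructor or a recursion constant (variables are impossible in the empty context, and $\lambda$-abstractions cannot check against an atomic type). By inversion on the spine-typing rules, if the head is \verb#times# or \verb#plus# then $S$ has exactly two entries, each again a canonical form of type \verb#postp#; if the head is \verb#one# or \verb#zero# then $S = \snil$; and if the head is a recursion constant $r : \m{postp} = M \in \Sigma$ then $S = \snil$ by contractiveness and the prepattern restriction (a recursion constant of non-prepattern type \verb#postp# takes an empty spine). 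This lets me read back an object type: a recursion-constant head becomes a type name, and a constructor head becomes the corresponding type former applied recursively — but I must be careful that the immediate arguments of a constructor be type \emph{names}, which need not hold of an arbitrary canonical term (e.g. \verb#times one one#). The cleanest fix is to take the ``object signature'' on the framework side to be itself closed up: introduce a fresh name for each maximal constructor-headed subterm, so the bijection is really between object signatures together with a distinguished type (up to the equirecursive equality generated by $\Sigma$) and pairs of a $\CoLFCircular$ signature of \verb#postp#-definitions together with a distinguished canonical term (up to $\CoLFCircular$ term equality). I would state the theorem at that granularity, matching the phrasing ``circular types defined in an object signature'' against ``canonical forms of \verb#postp#''.

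The two remaining steps are the round-trip checks. Going object $\to$ framework $\to$ object is a straightforward structural induction: $\enc{-}$ produces exactly the constructor spines that read-back inverts, and recursion constants $r_{t^+}$ read back to $t^+$; since this is purely syntactic there is no equality quotient to worry about in this direction. Going framework $\to$ object $\to$ framework requires the commutation of read-back with the choice of fresh names and, crucially, that two framework signatures whose read-backs are the same object signature (up to the equirecursive equality $\Sigma$ induces) are equal as \verb#postp#-definitions; here I would invoke Theorem~\ref{thm:decidability_of_term_equality} together with the soundness and completeness theorems for term equality, identifying the equirecursive unfolding of an object type with the infinite B\"ohm tree obtained by unfolding the corresponding recursion constant — this is essentially the first-order case of the correspondence already discussed in the paper. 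The main obstacle is exactly this last point: pinning down the right equivalence on object signatures so that the map is genuinely bijective rather than merely surjective, i.e.\ ensuring that distinct normal-form object signatures correspond to distinct framework signatures and conversely. I expect this to reduce, via soundness and completeness of the equality algorithm, to the classical fact that two systems of equirecursive equations denote the same regular tree iff the equality algorithm accepts them, so no genuinely new argument is needed, only careful bookkeeping of the naming convention.
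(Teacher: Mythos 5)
Your core strategy --- exhibit the two translations and check they are mutually inverse by induction on the syntax in each direction --- is exactly the paper's proof, which consists of the single sentence ``By induction on the syntax in both directions.'' So in that sense you have taken the same route, just with the details actually written out (including the well-formedness checks on the translated signature: contractiveness, the vacuous prepattern condition, and guardedness via \texttt{postp} being the sole, coinductive family), all of which are correct.

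Where you genuinely go beyond the paper is in noticing that the correspondence is not literally a syntactic bijection as stated: the object grammar only permits type \emph{names} as arguments to $\otimes$ and $\oplus$ (the appendix calls this the ``normal form'' of the signature), whereas a canonical form of \texttt{postp} such as $\m{times}\cdot(\m{one};\m{one})$ has constructor-headed arguments and is not in the image of $\enc{-}$. Your proposed repairs --- either introducing fresh names for maximal constructor-headed subterms and restating the bijection at the level of signatures-with-a-distinguished-type, or quotienting by the equirecursive/B\"ohm-tree equality via the soundness and completeness theorems --- are both reasonable, and one of them is needed for the theorem to be true as a bijection rather than merely a surjection-up-to-unfolding. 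The paper's one-line proof silently assumes the reader restricts attention to canonical forms matching the normal-form discipline; your version makes that assumption explicit, which is the more honest reading. The only caution is that your second round-trip (framework $\to$ object $\to$ framework) leans on term-equality soundness/completeness, which changes the statement from a bijection on syntax to a bijection on equivalence classes; you should commit to one granularity and state the theorem accordingly rather than mixing the two.
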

\begin{proof}
    By induction on the syntax in both directions.
\end{proof}

\subsubsection{Encoding of the Emptiness Judgment.}
The emptiness judgment $t \tpempty$ is defined by the following rules.
We stress that these rules are to be interpreted coinductively.

\begin{mathpar}
    \inferrule{
    }{
        0 \tpempty
    }(\mathbf{0} \EMP)

    \inferrule{
        t = t_1 \oplus t_2 \in \Sigma
        \\
        t_1 \tpempty
        \\
        t_2 \tpempty
    }{
        t \tpempty
    }(\oplus \EMP)

    \inferrule{
        t = t_1 \otimes t_2 \in \Sigma
        \\
        t_1 \tpempty
    }{
        t \tpempty
    }(\otimes \EMP_1)
    \hspace{1em}
    \inferrule{
        t = t_1 \otimes t_2 \in \Sigma
        \\
        t_2 \tpempty
    }{
        t \tpempty
    }(\otimes\EMP_2)
\end{mathpar}

In $\CoLFCircular$, the rules are encoded as follows.
The coinductive nature is reflected by the typing of \verb#empty : postp -> cotype#, which 
postulates that the predicate \verb#empty# is to be interpreted coinductively.

\begin{verbatim}
empty : postp -> cotype.
zero_emp : empty zero.
plus_emp : empty T1 -> empty T2 -> empty (plus T1 T2).
times_emp_1 : empty T1 -> empty (times T1 T2).
times_emp_2 : empty T2 -> empty (times T1 T2).
\end{verbatim}

\begin{theorem}[Adequacy of Encoding]
    There is a bijection between the circular derivations of $t \tpempty$ and 
    the canonical forms of the type $\mathtt{empty\,\enc{t}}$.
\end{theorem}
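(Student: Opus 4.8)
The plan is to exhibit the encoding map $\enc{-}$ on derivations explicitly, give an explicit inverse, and then check that both directions preserve the pertinent well-formedness and validity conditions, following closely the adequacy arguments of Section~\ref{section:encoding_subtyping_systems_for_recursive_types}. First I would fix a precise presentation of ``circular derivation of $t \tpempty$'' as a finite derivation tree in which a leaf may be a back-reference to an ancestor with the same conclusion, read coinductively; equivalently, a finite system of mutually recursive equations $d_i = \rho_i(\dots)$ where each $\rho_i$ is one of the four rules $\mathbf 0\EMP$, $\oplus\EMP$, $\otimes\EMP_1$, $\otimes\EMP_2$. Such a system is precisely mirrored by a block of $\CoLFCircular$ recursion constants $r_i : \verb$empty$\,\enc{t_i} = M_i$, where $M_i$ is built from \verb$zero_emp$, \verb$plus_emp$, \verb$times_emp_1$, \verb$times_emp_2$ applied to codes of the immediate sub-derivations, and a back-reference to ancestor $d_j$ becomes an occurrence of the recursion constant $r_j$. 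The object-level signature $\Sigma$ of type definitions $t = \tau^+$ has already been encoded (by the preceding Adequacy of Type Encoding theorem) as recursion constants for \verb$postp$, so $\enc{t}$ is such a recursion constant.

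Next, the forward direction proceeds by induction on the syntax of the circular derivation: each rule instance maps to the homonymous constructor and each back-edge to a recursion constant. The point that must be checked here — as in the earlier adequacy theorems — is that type equality in $\CoLFCircular$ unfolds the recursion constant $\enc{t}$ for a defined name $t = t_1 \oplus t_2 \in \Sigma$ to \verb$plus$\,$\enc{t_1}\,\enc{t_2}$, so that, e.g., \verb$plus_emp$ applied to the codes of the two premises does check against \verb$empty$\,$\enc{t}$ (and symmetrically for the $\otimes$ rules); the built-in equirecursive type equality is exactly what makes the $\Sigma$-lookup side conditions of the object rules disappear. The reverse direction is by induction on the syntax of the higher-order rational term of type \verb$empty$\,$\enc{t}$: a canonical term of atomic type is a head applied to a spine, the head is forced by the type to be one of the four constructors or a recursion constant, and each case reconstructs a rule instance or a back-edge. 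Compositionality (commutation with substitution of sub-derivations and sub-types) follows as in the first adequacy theorem, appealing to the Adequacy of Type Encoding theorem for the subject $t$.

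The step I expect to be the real content, rather than bookkeeping, is matching the two validity notions: a circular derivation should be (vacuously) valid exactly when the corresponding rational term is trace-valid in the sense of Section~\ref{subsection:trace_condition}. Here the coinductive typing \verb$empty : postp -> cotype$ does the work: all four constructors are coinductive, and since \verb$empty$ is declared after \verb$postp$ they carry the highest priority among all constructors that can occur on a trace through a recursion constant built from them; hence every trace from such a recursion constant to itself necessarily passes through a highest-priority coinductive constructor, so trace-validity and guardedness hold automatically. Contractiveness of each $M_i$ is likewise automatic because its head is a constructor (a derivation whose root is an immediate back-edge is neither a valid object derivation nor a well-formed recursion constant, and is excluded on both sides). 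Thus the ``validity'' of a circular emptiness derivation and the trace-validity of its encoding both hold unconditionally, which is the observation that must be made explicit rather than a computation that must be performed; with it in hand, $\enc{-}$ and its inverse are mutually inverse bijections as required.
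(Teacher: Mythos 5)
Your proposal is correct and follows essentially the same route as the paper, whose proof is simply ``by induction on the syntax of the circular derivation in both directions''; your elaboration of the two inductions, of the role of equirecursive type equality in discharging the $\Sigma$-lookup side conditions, and of why trace-validity and contractiveness hold automatically (all four constructors being coinductive and of highest priority) is a faithful unpacking of what that one-line proof leaves implicit.
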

\begin{proof}
    By induction on the syntax of the circular derivation in both directions.
\end{proof}

As an example, we may show that the type $t$, where $t = \mathbf{1} \otimes t$, is empty by the following circular derivation.
\[
    \infer[\otimes \EMP_2]{
        (\mt{t\_empty})\ \mathbf{1} \otimes t\tpempty
    }{
        (\mt{t\_empty})\ t \tpempty
    }\]


This derivation can be encoded as follows.

\begin{verbatim}
t : postp = times one t.
t_empty : empty t = times_emp_2 t_empty.
\end{verbatim}

The reader is advised to take a look at 
Appendix~\ref{subsection:examples_of_polarized_subtyping_proof}\oftheextendedversion[ ]for two simple yet elegant examples of subtyping derivations.

\section{Related Work}

\textit{Cyclic $\lambda$-Calculus and Circular Terms.}
Ariola and Blom \cite{Ariola97tacs}, and Ariola and Klop \cite{Ariola97ic} studied the confluence property of reduction of cyclic $\lambda$-calulus.
Their calculus differs from CoLF in several aspects. Their calculus is designed to capture reasoning principles 
of recursive functions and thus has a general recursive let structure that 
can be attached to terms at any levels. Terms are equated up to infinite Lévy-Longo trees (with decidable equality), but equality as B\"ohm trees 
is not decidable.
CoLF is designed for circular terms and circular derivations, and all recursive definitions occur 
at the top level. Terms are equated up to infinite B\"ohm trees and the equality is decidable.
Our equality algorithm is adapted from
Huet'algorithm for the regular B{\"{o}}hm trees \cite{Huet98mscs}.
Equality on first-order terms has been studied both in its own respect \cite{Endrullis11tcs} and 
in the context of subtyping for recursive types \cite{Amadio93toplas,Brandt98fi,Danielsson10mpc,Ligatti17toplas}.
Our algorithm when applied to first-order terms is ``the same''.
Courcelle \cite{Courcelle83tcs} and Djelloul et al. \cite{Djelloul08tplp} have studied the 
properties of first-order circular terms.
 Simon \cite{Simon06phd} designed 
a coinductive logic programming language based on the first-order circular terms. Contrary to $\CoLFCircular$, 
there are no mutual dependencies between inductive 
and coinductive predicates in Simon's language.

\textit{Logical Frameworks.}
Harper et al. \cite{Harper93jacm} designed the logical framework LF, which this work extends upon.
 Pfenning et al. later adds notational definitions \cite{Pfenning98pstt}. 
The method of hereditary substitution was developed as part of the research 
on linear and concurrent logical frameworks \cite{Cervesato96lics,Watkins02tr,Cervesato02tr}.
Harper and Licata demonstrated the method in formalizing the metatheory of simply typed $\lambda$-calculus \cite{Harper07jfp}. 
In his master's thesis, Chen has investigated a mixed inductive and coinductive logical framework with an
infinite stack of priorities but only in the context of a first-order type theory \cite{Chen21ms}.

\textit{Mixed Induction and Coinduction and Circular Proof Systems.}
The equality and subtyping systems of recursive types \cite{Amadio93toplas,Brandt98fi,Danielsson10mpc,Ligatti17toplas,Lakhani22esop}
have traditionally recognized coinduction and more recently mixed induction and coinduction as 
an underlying framework. 
Fortier and Santocanale \cite{Fortier13csl} devised a circular proof system for propositional linear sequent calculus
with mixed inductive and coinductive predicates. This system together 
with Charatonik et al.'s Horn $\mu$-calculus \cite{Charatonik98lics} motivated the validity condition of CoLF.
Brotherston and Simpson devised an infinitary and a circular proof system as methods of carrying out induction \cite{Brotherston05tableaux,Brotherston11jlc}.
Due to the complexity of their validity condition, the encoding of Brotherston and Simpson's system in full generality and Fortier and Santocanale's system is currently not immediate and is considered in ongoing work.


\section{Conclusion}

We have presented the type theory of a novel logical framework 
with higher-order rational terms,
that admit coinductive and mixed inductive and coinductive interpretations.
We have proposed the prepattern variables and prepattern $\Pi$-types 
to give a type-theoretic formulation of regular B\"ohm trees. 
Circular objects and derivations are represented as higher-order rational terms, 
as demonstrated in the case study of the subtyping deductive systems for recursive types.

We once again highlight the methodology of logical frameworks and what $\CoLFCircular$ accomplishes. 
Logical frameworks internalize equalities that are present in the term model 
for an object logic. LF \cite{Harper93jacm} internalizes $\alpha\beta\eta$-equivalence of 
the dependently typed $\lambda$-calculus. Within LF, 
one is not able to write a specification that distinguishes two terms that are $\alpha$ or $\beta$-equivalent, 
because those two corresponding derivations are identical in the object logic. 
Similarly, the concurrent logical framework CLF \cite{Watkins02tr} internalizes equalities
of concurrent processes that only differ in the order of independent events. 
The logical framework $\CoLFCircular$ 
internalizes the equality of circular derivations. Using $\CoLFCircular$, one cannot 
write a specification that distinguishes between two different finitary representations of the same circular proof. 
It is this property that makes $\CoLFCircular$ a more suitable framework for encoding 
circular derivations than existing finitary frameworks.

\par\noindent\textbf{Acknowledgments.} 
We would like to thank Robert Harper and Brigitte Pientka for
insightful discussion on the research presented here and the anonymous reviewers
for their helpful comments and suggestions.

\bibliographystyle{splncs04}
\bibliography{citationsconf}

\begin{thebibliography}{10}
\providecommand{\url}[1]{\texttt{#1}}
\providecommand{\urlprefix}{URL }
\providecommand{\doi}[1]{https://doi.org/#1}

\bibitem{Amadio93toplas}
Amadio, R.M., Cardelli, L.: Subtyping recursive types. ACM Transactions on Programming Languages and Systems  \textbf{15}(4),  575--631 (1993)

\bibitem{Ariola97tacs}
Ariola, Z.M., Blom, S.: Cyclic lambda calculi. In: Abadi, M., Ito, T. (eds.) Theoretical Aspects of Computer Software, Third International Symposium, {TACS} '97, Sendai, Japan, September 23-26, 1997, Proceedings. Lecture Notes in Computer Science, vol.~1281, pp. 77--106. Springer, Sendai, Japan (1997). \doi{10.1007/BFb0014548}

\bibitem{Ariola97ic}
Ariola, Z.M., Klop, J.W.: Lambda calculus with explicit recursion. Information and Computation  \textbf{139}(2),  154--233 (1997). \doi{10.1006/inco.1997.2651}

\bibitem{Barendregt85book}
Barendregt, H.P.: The lambda calculus - its syntax and semantics, Studies in logic and the foundations of mathematics, vol.~103. North-Holland (1985)

\bibitem{Basold18phd}
Basold, H.: Mixed Inductive-Coinductive Reasoning Types, Programs and Logic. Ph.D. thesis, Radboud University (Apr 2018), \url{https://hdl.handle.net/2066/190323}

\bibitem{Brandt98fi}
Brandt, M., Henglein, F.: Coinductive axiomatization of recursive type equality and subtyping. Fundamenta Informaticae  \textbf{33}(4),  309--338 (1998)

\bibitem{Brotherston05tableaux}
Brotherston, J.: Cyclic proofs for first-order logic with inductive definitions. In: Beckert, B. (ed.) International Conference on Automated Reasoning with Analytic Tableaux and Related Methods (TABLEAUX 2005). pp. 78--92. Springer LNCS 3702, Koblenz, Germany (Sep 2005)

\bibitem{Brotherston11jlc}
Brotherston, J., Simpson, A.: Sequent calculi for induction and infinite descent. Journal of Logic and Computation  \textbf{21}(6),  1177--1216 (2011)

\bibitem{Cervesato96lics}
Cervesato, I., Pfenning, F.: A linear logical framework. In: Clarke, E. (ed.) Proceedings of the Eleventh Annual Symposium on Logic in Computer Science. pp. 264--275. IEEE Computer Society Press, New Brunswick, New Jersey (Jul 1996)

\bibitem{Cervesato02tr}
Cervesato, I., Pfenning, F., Walker, D., Watkins, K.: A concurrent logical framework {II}: Examples and applications. Tech. Rep. CMU-CS-02-102, Department of Computer Science, Carnegie Mellon University (2002), revised May 2003

\bibitem{Charatonik98lics}
Charatonik, W., McAllester, D.A., Niwinski, D., Podelski, A., Walukiewicz, I.: The {H}orn mu-calculus. In: Proceedings of the Thirteenth Annual IEEE Symposium on Logic in Computer Science (LICS 1998). pp. 58--69. IEEE Computer Society Press (June 1998)

\bibitem{Chen21ms}
Chen, Z.: Towards a mixed inductive and coinductive logical framework. Tech. Rep. CMU-CS-21-144, Department of Computer Science, Carnegie Mellon University (2021)

\bibitem{Courcelle83tcs}
Courcelle, B.: Fundamental properties of infinite trees. Theoretical Computer Science  \textbf{25},  95--169 (1983)

\bibitem{Danielsson10mpc}
Danielsson, N.A., Altenkirch, T.: Subtyping, declaratively. In: 10th International Conference on Mathematics of Program Construction (MPC 2010). pp. 100--118. Springer LNCS 6120, Qu{\'e}bec City, Canada (Jun 2010)

\bibitem{Djelloul08tplp}
Djelloul, K., Dao, T., Fr{\"{u}}hwirth, T.W.: Theory of finite or infinite trees revisited. Theory and Practice of Logic Programming  \textbf{8}(4),  431--489 (2008)

\bibitem{Endrullis11tcs}
Endrullis, J., Grabmayer, C., Klop, J.W., van Oostrom, V.: On equal {\(\mu\)}-terms. Theoretical Computer Science  \textbf{412}(28),  3175--3202 (2011). \doi{10.1016/j.tcs.2011.04.011}

\bibitem{Fortier13csl}
Fortier, J., Santocanale, L.: Cuts for circular proofs: Semantics and cut-elimination. In: Rocca, S.R.D. (ed.) 22nd Annual Conference on Computer Science Logic (CSL 2013). pp. 248--262. LIPIcs 23, Torino, Italy (Sep 2013)

\bibitem{Harper93jacm}
Harper, R., Honsell, F., Plotkin, G.: A framework for defining logics. Journal of the Association for Computing Machinery  \textbf{40}(1),  143--184 (Jan 1993)

\bibitem{Harper07jfp}
Harper, R., Licata, D.R.: Mechanizing metatheory in a logical framework. Journal of Functional Programming  \textbf{17}(4-5),  613--673 (2007)

\bibitem{Harper05tocl}
Harper, R., Pfenning, F.: On equivalence and canonical forms in the {LF} type theory. Transactions on Computational Logic  \textbf{6},  61--101 (Jan 2005)

\bibitem{Huet98mscs}
Huet, G.P.: Regular {B}{\"{o}}hm trees. Mathematical Structures in Computer Science  \textbf{8}(6),  671--680 (1998), \url{http://journals.cambridge.org/action/displayAbstract?aid=44783}

\bibitem{Lakhani22esop}
Lakhani, Z., Das, A., DeYoung, H., Mordido, A., Pfenning, F.: Polarized subtyping. In: Sergey, I. (ed.) Programming Languages and Systems - 31st European Symposium on Programming, {ESOP} 2022, Munich, Germany, April 2-7, 2022, Proceedings. Lecture Notes in Computer Science, vol. 13240, pp. 431--461. Springer (2022). \doi{10.1007/978-3-030-99336-8\_16}

\bibitem{Ligatti17toplas}
Ligatti, J., Blackburn, J., Nachtigal, M.: On subtyping-relation completeness, with an application to iso-recursive types. ACM Transactions on Programming Languages and Systems  \textbf{39}(4),  4:1--4:36 (Mar 2017)

\bibitem{Miller91jlc}
Miller, D.: A logic programming language with lambda-abstraction, function variables, and simple unification. Journal of Logic and Computation  \textbf{1}(4),  497--536 (1991). \doi{10.1093/logcom/1.4.497}

\bibitem{Miller05tocl}
Miller, D., Tiu, A.: A proof theory for generic judgments. ACM Transactions on Computational Logic  \textbf{6}(4),  749--783 (2005). \doi{10.1145/1094622.1094628}

\bibitem{Pfenning98pstt}
Pfenning, F., Sch{\"u}rmann, C.: Algorithms for equality and unification in the presence of notational definitions. In: Galmiche, D. (ed.) Proceedings of the CADE Workshop on Proof Search in Type-Theoretic Languages. Electronic Notes in Theoretical Computer Science (Jul 1998)

\bibitem{Pfenning98guide}
Pfenning, F., Sch{\"u}rmann, C.: Twelf User's Guide, 1.2 edn. (Sep 1998), available as Technical Report CMU-CS-98-173, Carnegie Mellon University

\bibitem{Simon06phd}
Simon, L.E.: Extending logic programming with coinduction. Ph.D. thesis, University of Texas at Dallas (2006)

\bibitem{Watkins02tr}
Watkins, K., Cervesato, I., Pfenning, F., Walker, D.: A concurrent logical framework {I}: Judgments and properties. Tech. Rep. CMU-CS-02-101, Department of Computer Science, Carnegie Mellon University (2002), revised May 2003

\end{thebibliography}



\appendix
\title{Appendix}
\author{}
\institute{}
\maketitle
\section{Hereditary Substitution and Renaming Substitution}
\label{appendix:hereditary_substitution_and_renaming_substitution}
\begin{longtable}[l]{llllll}
    \boxed{A^o = \tau}  \\
    $(\pitype{x}{A_2}{A_1})^o = (A_2^o) \to (A_1^o)$ \\
    $(\prepatpi{x}{A_2}{A_1})^o = * \to (A_1^o)$ \\
    $(P)^o = *$ \\
    \boxed{[N/x]^\tau K = K'} \\
    $[N/x]^\tau\type  = \type $ \\
    $[N/x]^\tau \cotype =\cotype$ \\
    $[N/x]^\tau(\Pi y : A.\ K) = \Pi y: [N/x]^\tau A.\, [N/x]^\tau K$ & $y \ne x$\\
    \boxed{[N/x]^\tau A = A'} \\
    $[N/x]^\tau P = [N/x]^\tau P$ \\
    $[N/x]^\tau(\Pi y : A_2.\ A_1) = \Pi y : [N/x]^\tau A_2. [N/x]^\tau A_1$ & $y \ne x$\\
    \boxed{[N/x]^\tau P = P'} \\
    $[N/x]^\tau(a \cdot S) = a \cdot ([N/x]^\tau S)$ \\
    \boxed{[N/x]^\tau M = M'} \\
    $[N/x]^\tau R = [N/x]^\tau R$ \\
    $[N/x]^\tau(\lambda y.M) = \lambda y. [N/x]^\tau M$  & $y \ne x $ \\
    \boxed{[N/x]^\tau R = R'} \\
    $[N/x]^\tau( x \cdot S) = ([N/x]^\tau S) \rhd^{\tau} N$ \\
    $[N/x]^\tau( y \cdot S) =  y \cdot ([N/x]^\tau S) $ & $ y \ne x$\\
    $[N/x]^\tau( c \cdot S) = c \cdot ([N/x]^\tau S) $\\
    $[N/x]^\tau( r \cdot S) = r \cdot ([N/x]^\tau S) $\\
    \boxed{[N/x]^\tau  S = S'} \\
    $[N/x]^\tau\snil = \snil$ \\
    $[N/x]^\tau(M;S) = ([N/x]^\tau M); ([N/x]^\tau S)$ \\
    $[N/x]^\tau(\prepatternvar x;S) = \m{undefined}$  \\
    $[N/x]^\tau(\prepatternvar z;S) = \prepatternvar z; ([N/x]^\tau S)$ & $x \ne z$ \\
    \boxed{S \rhd ^\tau N = R'} \\
    $\snil \rhd ^* R = R$ \\
    $(N; S) \rhd^{\tau_2 \to \tau_1} (\lambda x. M) = S \rhd^{\tau_1} ([N/x]^{\tau_2}M)$ \\
    $(\prepatternvar y; S) \rhd^{* \to \tau_1} (\lambda x. M) = S \rhd^{\tau_1} (\varrenamesubst{y/x}M)$ \\
    \boxed{[N/x]^\tau \Gamma =  \Gamma'} \\
    $[N/x]^\tau\cdot = \cdot$ \\
    $[N/x]^\tau(\Gamma, y : A) = ([N/x]^\tau\Gamma),  y : ([N/x]^\tau A)$ \\
    $[N/x]^\tau(\Gamma, y \prepatof A) = ([N/x]^\tau\Gamma),  y \prepatof ([N/x]^\tau A)$ \\
    \boxed{\varrenamesubst{y/x}K = K'} \\
    $\varrenamesubst{y/x}{\type} = \type$ \\
    $\varrenamesubst{y/x}{\cotype} = \cotype$ \\
    $\varrenamesubst{y/x}{\Pi z : A.\, K} = \Pi z : \varrenamesubst{y/x}{A}. \, \varrenamesubst{y/x}K$  & $z \ne x, y$\\
    $\varrenamesubst{y/x}{\Pi z \prepatof A.\, K} = \Pi z \prepatof \varrenamesubst{y/x}{A}. \, \varrenamesubst{y/x}K$ & $z \ne x, y$\\
    \boxed{\varrenamesubst{y/x}A = A'} \\
    $\varrenamesubst{y/x}{\Pi z : A_2. A_1} = \Pi z : \varrenamesubst{y/x}{A_2}. \, \varrenamesubst{y/x}A_1$ & $z \ne x, y$\\
    $\varrenamesubst{y/x}{\Pi z \prepatof A_2. A_1} = \Pi z \prepatof \varrenamesubst{y/x}{A_2}. \, \varrenamesubst{y/x}A_1$ & $z \ne x, y$\\
    $\varrenamesubst{y/x}{P} = \varrenamesubst{y/x}{P} $ \\
    \boxed{\varrenamesubst{y/x}P = P'} \\
    $\varrenamesubst{y/x}{a} = a $\\
    $\varrenamesubst{y/x}{P\, M} = \varrenamesubst{y/x}{P} \, \varrenamesubst{y/x}{M}$ \\
    \boxed{\varrenamesubst{y/x}M = M'} \\
    $\varrenamesubst{y/x}{R} = \varrenamesubst{y/x}{R} $\\
    $\varrenamesubst{y/x}{\lambda z.\, M} = \lambda z. \, \varrenamesubst{y/x}{M}$& $z \ne x, y$ \\
    \boxed{\varrenamesubst{y/x}R = R'} \\
    $\varrenamesubst{y/x}{x \cdot S} = y\cdot \varrenamesubst{y/x}{S} $\\
    $\varrenamesubst{y/x}{z \cdot S} = z\cdot \varrenamesubst{y/x}{S} $ & $z \ne x$\\
    $\varrenamesubst{y/x}{c \cdot S} = c\cdot \varrenamesubst{y/x}{S} $\\
    $\varrenamesubst{y/x}{r \cdot S} = r\cdot \varrenamesubst{y/x}{S} $\\
    \boxed{\varrenamesubst{y/x}{S} ={S'}} \\
    $\varrenamesubst{y/x}{\snil} = \snil$ \\
    $\varrenamesubst{y/x}{(M ;  S)} = \varrenamesubst{y/x}{M} ; (\varrenamesubst{y/x}{ S})$ \\
    $\varrenamesubst{y/x}{(\prepatternvar x ;  S)} = \prepatternvar y ; (\varrenamesubst{y/x}{ S})$ \\
    $\varrenamesubst{y/x}{(\prepatternvar z ;  S)} = \prepatternvar z ; (\varrenamesubst{y/x}{ S})$ \\
    \boxed{\varrenamesubst{y/x} \Gamma =  \Gamma'} \\
    $\varrenamesubst{y/x}\cdot = \cdot$ \\
    $\varrenamesubst{y/x}(\Gamma, z : A) = (\varrenamesubst{y/x}\Gamma),  z : (\varrenamesubst{y/x} A)$& $z \ne x, y$ \\
    $\varrenamesubst{y/x}(\Gamma, z \prepatof A) = (\varrenamesubst{y/x}\Gamma),  z \prepatof (\varrenamesubst{y/x} A)$& $z \ne x, y$ \\
    \end{longtable}

\section{Omitted Rules}
Below are some straightforward rules that are omitted from the main text due to the page limit.
\subsection{Prepattern Checking}
\begin{multicols}{2}
    \boxed{A \isprepattern}
    \begin{mathpar}
        \inferrule{
        }{
            P \isprepattern
        }
    
        \inferrule{
            A_1 \isprepattern
        }{
            \prepatpi{x}{A_2}{A_1}
        }
    \end{mathpar}
    
    \boxed{S \isprepattern}
    \begin{mathpar}
        \inferrule{
        }{
            \snil \isprepattern
        }
    
        \inferrule{
            S \isprepattern
        }{
            \prepatternvar x; S \isprepattern
        }
    \end{mathpar}
    \end{multicols}

\subsection{Guardedness Checking}

\label{appendix:guardedness_checking}
We devise an algorithm for checking the guardedness of recursive definitions.
Let $C$ denote a set of constructors, it is easy to check whether there exists a coinductive constructor of the highest priority.
We use judgment $C \isvalidtrace$ to denote this check and omit the rules.
We write $Q$ for sets of recursion constants, and define the judgment $ Q; C\vdash _\Sigma r \guardedin M $ to mean that for all occurrences 
of $r$ in $M$ is properly guarded by some coinductive constructor, where $Q$ holds recursion constants that we have explored and $C$ holds constructors
we have encountered.
The rules for deriving this judgment is syntax directed on $M$ and are shown in Fig.~\ref{fig:guardedness_circular_def}.
Note that this judgment does not track the free variables in $M$ so $M$ may be open.
Thus, for open $M$, if $\{\}; \{\}\vdash _\Sigma r \guardedin  M $, $r$ is guarded in any closed (hereditary) substitution instances of $M$.
The judgment is parametrized by a signature $\Sigma$, as we need to have access to the definition for any recursion constant in $M$.
To ensure the validity of any term occurring in the signature, it suffices to check that for all 
recursive definitions $r : A = M$, $r$ must be guarded in $M$.
We define an auxiliary judgment $Q; C\vdash_\Sigma r \guardedin S $ to mean that 
$r$ is guarded in each $M \in S$.

\begin{figure}[h]
\boxed{{Q; C} \vs r \guardedin M  }
\begin{mathparpagebreakable}
    \inferrule{
        Q; C \vdash _ \Sigma r \guardedin M 
    }{
        Q; C \vs r \guardedin \lambda x.\, M 
    }

    \inferrule{
        Q; C\cup \{c\} \vs r \guardedin S
    }{
        Q; C \vs r \guardedin c \cdot S 
    }

    \inferrule{
       Q; C \vs r \guardedin S
    }{
        Q; C \vs r \guardedin x \cdot S
    } 

    \inferrule{
        r \ne r' \\
        r' \in Q
    }{
        Q; C \vs r \guardedin r' \cdot S 
    }

    \inferrule{
        C \isvalidtrace
    }{
        Q ; C \vs r \guardedin r \cdot S 
    }

    \inferrule{
        r' : A = M \in \Sigma
        \\
         S \isprepattern
        \\
        Q \cup \{r'\}; C \vdash _\Sigma r \guardedin M
    }{
        Q; C\vs r  \guardedin r' \cdot S
    } (r\ne r', \text{$r \notin Q$}) (*) \\
\end{mathparpagebreakable}
\boxed{Q; C\vs r\guardedin S} 
\begin{mathpar}
    \inferrule{
    }{
        Q; C\vs r \guardedin \snil 
    }

    \inferrule{
        Q; C\vs r \guardedin M 
        \\
        Q; C\vs r \guardedin S
    }{
        Q; C\vs r \guardedin M; S
    }

    \inferrule{
        Q; C\vs r \guardedin S 
    }{
        Q; C\vs r \guardedin \prepatternvar x; S 
    }
\end{mathpar}
\caption{Guardedness Checking}
\label{fig:guardedness_circular_def}
\end{figure}

\begin{theorem}
    [Decidability of guardedness checking]
    It is decidable given $\Sigma$ whether $Q;C \vdash_\Sigma {r} \guardedin M $ given arbitrary well-formed $r$, $M$, $Q$ and $C$.
\end{theorem}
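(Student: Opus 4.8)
The plan is to read the rules of Fig.~\ref{fig:guardedness_circular_def} as a proof-search procedure directed by the shape of the term $M$ (respectively the spine $S$) in the goal, to observe that every side condition appearing in a premise is individually decidable, and then to equip the procedure with a well-founded termination measure. Decidability of the side conditions is immediate: $r' : A = M \in \Sigma$, $r \ne r'$, and $r' \in Q$ are decidable because $\Sigma$ and $Q$ are finite; $S \isprepattern$ is decidable from its structural rules; and $C \isvalidtrace$ is decidable as already noted. Since at any goal only finitely many rules can be attempted and each has finitely many premises, it then suffices to show that the procedure terminates.

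For termination I would assign to every goal the lexicographic measure
\[
    \mu\bigl(Q; C \vdash_\Sigma r \guardedin J\bigr) \;=\; \bigl(\,\lvert \mathrm{RC}(\Sigma)\setminus Q\rvert,\ \lvert J\rvert\,\bigr),
\]
where $J$ is the term $M$ or spine $S$ of the goal, $\mathrm{RC}(\Sigma)$ is the finite set of recursion constants declared in $\Sigma$, and $\lvert\cdot\rvert$ is any structural size that counts at least one for the outermost constructor; values lie in $\bN^2$ under the (well-founded) lexicographic order. The rules for $\lambda x.\,M$, for $c \cdot S$ and $x \cdot S$, and all three spine rules leave $Q$ unchanged and recurse on a proper subexpression, so $\mu$ strictly decreases in its second component; the rules for $r \cdot S$, for $r' \cdot S$ with $r' \in Q$, and for $\snil$ are leaves (the lone premise $C \isvalidtrace$ of the $r\cdot S$ rule is not a guardedness goal). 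The only rule that can enlarge the second component is rule $(*)$, which replaces the goal $r' \cdot S$ by a goal about the body $M$ of $r'$; but its side conditions guarantee $r' \ne r$ and that $r'$ is not already in $Q$, so passing from $Q$ to $Q \cup \{r'\}$ strictly decreases the first component. Hence every premise has $\mu$ strictly below its conclusion, and the procedure terminates by well-founded induction on $\mu$; decidability follows, and a crude bound on the size of the search can be read off the measure.

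The step I expect to be the obstacle is rule $(*)$: it is the only place where search can produce a \emph{larger} goal --- the unfolded body of a recursion constant has unbounded size and may itself mention other recursion constants, including forward references --- so a plain structural induction on $M$ does not work. What must be gotten right is that $Q$ records exactly which recursion-constant bodies have been entered along the current branch, that $(*)$ fires only for a constant not yet in $Q$, and that $\mathrm{RC}(\Sigma)$ is finite; thus a branch can descend into definitions at most $\lvert \mathrm{RC}(\Sigma)\rvert$ times, which is precisely why the judgment is parametrized by $Q$. The mutually recursive spine judgment is handled by the same measure, and the argument is robust to the exact choice of $\lvert\cdot\rvert$ as long as every non-$(*)$ premise is made strictly smaller.
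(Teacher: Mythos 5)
Your proposal is correct and follows essentially the same route as the paper: the paper's proof likewise observes that every rule except $(*)$ is structural on the term, and that $(*)$ strictly grows $Q$, which is bounded by the finitely many recursion constants in $\Sigma$. Your explicit lexicographic measure $(\lvert \mathrm{RC}(\Sigma)\setminus Q\rvert, \lvert J\rvert)$ is just a more formal packaging of that same argument.
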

\begin{proof}
    The only rule that does not analyze the structure of the term is the rule $(*)$. 
    It is impossible that the proof search for guardedness invokes this rule infinitely many times, because the rule $(*)$ strictly increases the size of $Q$ from bottom to top, but 
     there can only be finitely many distinct recursion constants in a signature.
\end{proof}

\section{Metatheorems of  Equality Checking}
\label{appendix:metatheorems_for_equality_checking}

\begin{theorem}
    $\Delta; \Theta \vdash _\Sigma - = - $ is an equivalence relation (i.e., reflexive, symmetric and transitive).
 \end{theorem}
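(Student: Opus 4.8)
The plan is to prove that $\Delta; \Theta \vdash_\Sigma - = -$ is reflexive, symmetric, and transitive, each handled separately.

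For \textbf{reflexivity}, I would show that $\cdot; \Theta \vdash_\Sigma M = M$ for any rational term $M$ with $FV(M) \subseteq \Theta$, which gives the general case since the assumptions in $\Delta$ can only help. Rather than arguing syntactically by induction on $M$ (which is awkward because of the memoization and unfolding rules), I would argue semantically: by the definition of $\exp$, trivially $\exp(M) = \exp(M)$, so the result follows immediately from the Completeness of Term Equality theorem. This is the cleanest route and exploits the already-proven soundness/completeness machinery.

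For \textbf{symmetry} and \textbf{transitivity} I would similarly go through the B\"ohm-tree semantics. For symmetry: if $\Delta; \Theta \vdash_\Sigma M = M'$, then by Soundness $\exp\pdepth{k}(M) =\pdepth{k} \exp\pdepth{k}(M')$ for all $k$; since equality of finite B\"ohm trees (tree isomorphism up to renaming) is itself symmetric, $\exp\pdepth{k}(M') =\pdepth{k} \exp\pdepth{k}(M)$ for all $k$, hence $\exp(M') = \exp(M)$, and by Completeness $\cdot; \Theta \vdash_\Sigma M' = M$; again weakening $\Delta$ back in is harmless. For transitivity: from $\Delta; \Theta \vdash_\Sigma M = M'$ and $\Delta; \Theta \vdash_\Sigma M' = M''$, Soundness gives $\exp(M) = \exp(M')$ and $\exp(M') = \exp(M'')$; transitivity of B\"ohm-tree equality yields $\exp(M) = \exp(M'')$, and Completeness delivers the derivation. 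The one subtlety is that Soundness and Completeness are stated for empty $\Delta$ (i.e.\ $\cdot; \Theta \vdash$), so I would first note a straightforward weakening lemma — if $\cdot; \Theta \vdash_\Sigma M = M'$ then $\Delta; \Theta \vdash_\Sigma M = M'$ for any well-formed $\Delta$ — which holds because the closure rule (1) and all other rules remain applicable when extra memoized equations are present. I would also need that the assumptions appearing in a derivation of $\Delta; \Theta \vdash_\Sigma M = M'$ with nonempty $\Delta$ are themselves consequences, but for the purposes of this theorem it suffices to consider the relation generated from $\Delta = \cdot$, which is the intended reading.

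The main obstacle is being careful about the role of $\Delta$: the judgment with a nonempty $\Delta$ only makes sense relative to consistent assumption sets, so the precise statement being proved is that the relation $\{(M, M') : \cdot; \Theta \vdash_\Sigma M = M'\}$ is an equivalence relation on rational terms with free variables in $\Theta$, and the semantic detour through $\exp$ via Soundness and Completeness is what makes all three properties essentially immediate. An alternative, fully syntactic proof — building derivations directly by simultaneous induction with a generalized coinductive/size argument on $\Delta$-capacity — is possible but considerably more delicate because of the asymmetry between rules (2) and (3) (the side condition $H \ne r'$ in rule (3)), so I would avoid it and rely on the B\"ohm-tree characterization instead.
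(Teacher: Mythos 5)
Your proof is correct, but it routes everything through the B\"ohm-tree semantics, whereas the paper argues reflexivity and symmetry by straightforward syntactic induction and only mentions the soundness-then-completeness detour as one of two options for transitivity (the other being a direct syntactic merge of the two derivations, replacing uses of rule (1) by rules (2) and (3) where the combined bisimulation cannot be closed immediately). Your uniform semantic argument is cleaner and, for transitivity, coincides with the paper's second suggestion; it also correctly flags the two subtleties the paper glosses over, namely that soundness and completeness are stated only for empty $\Delta$ (so one needs weakening in $\Delta$, which holds since every rule is monotone in the assumption set) and that the relation being shown to be an equivalence is really the $\Delta=\cdot$ instance. What the paper's syntactic route buys in exchange is independence from the soundness and completeness theorems: reflexivity and symmetry can be established directly, for arbitrary well-formed terms, without invoking the rationality hypothesis that the completeness theorem requires, and without making the equivalence-relation property logically downstream of the semantic correctness results. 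Your observation that a fully syntactic transitivity proof is delicate because of the asymmetry between rules (2) and (3) (the side condition $H \ne r'$) is a fair assessment of why the paper itself hedges with two alternative arguments there.
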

 \begin{proof}
    Straightforward induction for reflexivity and symmetry.
    Transitivity can be proved by merging two equality proofs, replacing rules (1) with (2) and (3) when the bisimulation cannot be constructed.
    Transitivity can also be proved by appealing to soundness and then completeness, i.e., all three terms expand to the same B\"ohm tree.
 \end{proof}
 
\begin{theorem}
    [Compatibility] 
    If  $\Theta' \vdash N = N'$ and $\Theta \vdash M = M'$, then for all $\tau$, $\Theta \cup \Theta' \vdash [N/x]^\tau M = [N'/x]^\tau M'$ 
    if both substitutions are defined.
\end{theorem}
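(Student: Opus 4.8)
The plan is to prove this semantically, through the expansion to B\"ohm trees, reusing Soundness and Completeness of Term Equality together with the lemma that expansion commutes with hereditary substitution. First I would record that, since $[N/x]^\tau M$ and $[N'/x]^\tau M'$ are assumed defined, they are again rational terms with free variables in $\Theta \cup \Theta'$: hereditary substitution never enters a recursive definition (the prepattern restriction forces the arguments of every recursion constant to be prepattern variables, which the substitution can only rename or leave fixed), so it preserves rationality, and hence Completeness of Term Equality is applicable to them.

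Applying Soundness of Term Equality to the two hypotheses gives $\exp\pdepth{k}(N) =\pdepth{k} \exp\pdepth{k}(N')$ and $\exp\pdepth{k}(M) =\pdepth{k} \exp\pdepth{k}(M')$ for every $k$. The core step is a compatibility property at the level of finite B\"ohm-tree approximations: if $T_1 =\pdepth{k} T_1'$ and $T_2 =\pdepth{k} T_2'$, then $[T_1/x]^\tau T_2 =\pdepth{k} [T_1'/x]^\tau T_2'$ whenever the substitutions are defined. Since a B\"ohm tree of depth $k$ is a finite term containing no recursion constants, this is an ordinary lexicographic induction on $\tau$ and the structure of $T_2$, the only non-routine cases being the spine-reduction operation $S \rhd^\tau N$ and the renaming-substitution clauses for prepattern spines; these are handled exactly as in the substitution metatheory for canonical LF, and the ``if defined'' hypothesis propagates downward in $k$ using the corresponding clause of the commutation lemma.

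Chaining these facts, for every $k$ we obtain
\begin{align*}
\exp\pdepth{k}([N/x]^\tau M)
&=\pdepth{k} [\exp\pdepth{k}(N)/x]^\tau(\exp\pdepth{k}(M)) \\
&=\pdepth{k} [\exp\pdepth{k}(N')/x]^\tau(\exp\pdepth{k}(M')) \\
&=\pdepth{k} \exp\pdepth{k}([N'/x]^\tau M'),
\end{align*}
where the first and last equalities are instances of Expansion Commutes with Hereditary Substitution and the middle one is the finite compatibility property above. As this holds for all $k$, the two rational terms have the same infinite B\"ohm tree, i.e. $\exp([N/x]^\tau M) = \exp([N'/x]^\tau M')$, so Completeness of Term Equality yields $\Theta \cup \Theta' \vdash [N/x]^\tau M = [N'/x]^\tau M'$, as required.

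I expect the main obstacle to be bookkeeping rather than anything conceptual: in the finite compatibility lemma one must carefully track how ordinary and prepattern free variables migrate through the nested substitutions triggered by $S \rhd^\tau N$, and confirm that every intermediate substitution on approximations is defined whenever the original ones are. A fully syntactic alternative --- induction on $\tau$ and on the derivation of $\Theta \vdash M = M'$ following the clauses of hereditary substitution --- is possible but more delicate, since the memoization context $\Delta$ would have to be threaded through and rules (2)(3) interleave unfolding with substitution; the semantic route sidesteps this entirely.
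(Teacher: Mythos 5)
Your proof follows essentially the same route as the paper's: apply soundness of term equality to both hypotheses, perform the substitution at the level of B\"ohm-tree expansions, transfer back through the commutation lemma, and conclude by completeness. The only difference is that you work with depth-$k$ approximations throughout (which obliges you to state an explicit finite compatibility step for $=_{(k)}$), whereas the paper argues at the infinite limit where $\exp(N)=\exp(N')$ and $\exp(M)=\exp(M')$ are literal identities of trees and the middle step is immediate; both are sound, and your version in fact aligns more closely with the depth-indexed form in which the commutation lemma is actually stated.
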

\begin{proof}

    We have the following steps. 
    \begin{enumerate}
        \item 
    By the soundness,  $exp(N) = exp(N')$ and $\exp(M) = \exp(M')$. 
        \item 
    Then, for all $\tau$, $[\exp(N)/x]^\tau \exp(M)= [\exp(N')/x]^\tau  \exp(M')$ if defined.
        \item 
    By commutation, $\exp([N/x]^\tau M) = \exp([N'/x]^\tau M')$.  
        \item 
    By completeness, $\Theta \cup \Theta' \vdash [N/x]^\tau M = [N'/x]^\tau M'$.
    \end{enumerate}
\end{proof}


\section{Estimating the Maximum Number of Equations}

\label{appendix:equality_upper_bound_estimate}

A very rough upper bound can be estimated for the equality algorithm. Let $b$ be the maximum breadth of all terms in the signature,  
$d$ be the maximum depth, and $l$ denotes the maximum length of abstractions (determined by a type). 
The structure of a term is completely determined by its trace.
The number of traces $p$ in a term of maximum depth $d$ and breadth $b$ can be estimated by
$p = \Sigma_{i=1}^db^{i-1}= \frac{1-b^d}{1-b}$. For each trace, there can be at most $l$ binders. 
So the maximum number of variables that can possibly appear in a term is $(l+1)*p$. 
For each position, we could have constants, recursion constants, variables, or empty position, over counting 
the occurrences of constants in binder positions. 
Let $n$ and $m$ denotes the number of constants and recursion constants in the signature.
Thus, a rough upper bound for the number of 
terms of finite depth and breadth, and finite abstraction length, is $((l+1)*p)^{1 + m + n + (l+1) *p }$. 

We note that this is a very rough upper bound and in practice the actual number of assumptions will be much smaller.
Indeed, one optimization that is performed in the implementation is to first flatten the recursive definition \cite{Lakhani22esop}. 
Flattening  reduces maximum depth of all terms to 2 and thus avoids the exponential blowup in factor $p$. 
In any case, the rough upper bound suffices to show that the algorithm is decidable.

\section{Expansion of Higher-order Rational Terms as B\"ohm trees}
The function $\exp\depth k(M)$ denotes expanding term $M$ into a B\"ohm tree of depth $k$.
The infinite unfolding of $M$ as a B\"ohm tree is the limit of all the finite approximates $\exp\depth k(M)$ \cite{Barendregt85book}.

\label{appendix:expansion_as_bohm_trees}
    \begin{flushleft}
    \begin{tabular} {lll}
        \boxed{\exp\pdepth k(M)=\pdepth{k}M'}
        \\
        $\exp\pdepth {0}(
            M
        ) $ & $ = \pdepth 0 \bot$ \\
        $\exp\pdepth {k+1}(
            \lambda x.\, M
        ) $ & $ = \pdepth{k+1}$ & $ \lambda x. \exp\pdepth{k+1}(M)$ \\
        $\exp\pdepth {k+1}(
            R
        ) $ & $ = \pdepth{k+1}$ & $ \exp(R)$ \\
    \end{tabular}
    \begin{tabular}{lll}
        \boxed{\exp\pdepth k(R)=\pdepth{k}R'}
        \\
        $\exp\pdepth {0}(
            R
        ) $ & $ = \pdepth 0 \bot$ \\
        $\exp\pdepth {k+1}(
            c \cdot S
        ) $ & $ =\pdepth{k+1} $ & $c\cdot (\exp\pdepth{k}(S))$\\
        $\exp\pdepth {k+1}(
            x \cdot S
        ) $ & $ = \pdepth{k+1}$ & $x\cdot (\exp\pdepth{k}(S))$\\
        $\exp\pdepth {k+1}(
           r \cdot S
        ) $ & $ = \pdepth{k+1}$ & $ \exp\pdepth{k+1}(S \rhd^{A^o} M)$ if $r : A = M \in \Sigma$ \\
    \end{tabular}
    \begin{tabular}{lll}
        \boxed{\exp\pdepth k(S)=\pdepth{k}S'}
        \\
        $\exp\pdepth {0}(
            S
        ) $ & $ = \pdepth 0 \bot$ \\
        $\exp\pdepth {k+1}
            \snil
         $ & $ =\pdepth{k+1}$ & $ \snil$\\
        $\exp\pdepth {k+1}(
            M ; S
        ) $ & $ = \pdepth{k+1}$ & $(\exp\pdepth{k+1}(M)); (\exp\pdepth{k+1}(S))$\\
        $\exp\pdepth {k+1}(
          \prepatternvar  x; S
        ) $ & $ = \pdepth{k+1}$ & $\prepatternvar x; (\exp\pdepth{k+1}(S))$ \\
        \end{tabular}
    \end{flushleft}

\section{Type Checking Rules for $\CoLFCircular$}
\label{appendix:type_checking_rules_for_colf}

\subsection{Presuppositions}
The judgment $\vdash_\rsignature \Gamma \validctx$ presupposes $\vdash_\Sigma \Xi \validsig$. 
All judgments of the form $\Gamma \vdash _\rsignature \mathcal J$ presuppose $\vdash_\rsignature \Gamma \validctx$. 
The judgment  $\Gamma \vdash_\rsignature S \rhd K \Rightarrow K'$ presupposes $\Gamma\vdash _\rsignature K \Leftarrow \kind$.
The judgments $\Gamma \vdash_\rsignature M \Leftarrow A$ and  $\Gamma \vdash_\rsignature S \rhd A \Rightarrow P$  presuppose 
$\Gamma\vdash _\rsignature A \Leftarrow \typecotype$.

\subsection{Rules}
We write $|\Gamma|$ for the list of variables $\Theta$ in $\Gamma$. For instance, if $\Gamma = x : A, y : B$, then $\Theta = x, y$ and we write $|x : A, y : B| = x, y$. The notion is useful 
because term equality algorithm needs to know the free variables in the term but not their types.

    $\boxed{\Sigma \validsig}$ 
    \begin{mathparpagebreakable}
            \inferrule{
                \vdash_\Sigma \Sigma \validsig 
            }{
                \Sigma \validsig
            }
    \end{mathparpagebreakable}
    
    $\boxed{\vdash_{\Sigma}\Xi \validsig}$ 
    \begin{mathparpagebreakable}
            \inferrule{
            }{
             \vdash_{\Sigma}    \cdot \validsig
            }
        
            \inferrule{
               \vdash_{\Sigma} \Xi \validsig 
                \\ 
                \vdash_{\rsignature} K \Leftarrow \kind
            }{
               \vdash_{\Sigma} \Xi, a : K \validsig
            }
        
            \inferrule{
                \vdash_{\Sigma}\Xi \validsig 
                \\ 
                \vdash_{\Sigma} A \Leftarrow \typecotype
            }{
                \vdash_{\Sigma}\Xi, c : A \validsig
            }
    
            \inferrule{
                \vdash_{\Sigma}\Xi \validsig 
                \\ 
                \vdash_{\rsignature} A \Leftarrow \typecotype
                \\
                [{
                \vdash_{\rsignature} M \Leftarrow A
                }]^{\m{1:deferred}}
                \\
                A \isprepattern
                \\
                M \iscontractive
                \\
                \{\};\{\}\vdash _\Sigma r \guardedin M 
            }{
                \vdash_{\Sigma}\Xi, r : A = M \validsig
            }
        \end{mathparpagebreakable}
        
        $\boxed{\vdash_\rsignature\Gamma \validctx}$
    \begin{mathpar}
            \inferrule{
            }{
                \vdash_\rsignature \cdot \validctx
            }
        
            \inferrule{
                \vdash_\rsignature \Gamma \validctx 
                \\ 
                \Gamma\vdash_\rsignature A \Leftarrow \typecotype
            }{
                \vdash_\rsignature \Gamma, x : A \validctx
            }
    
            \inferrule{
                \vdash_\rsignature \Gamma \validctx 
                \\ 
                \Gamma\vdash_\rsignature A \Leftarrow \typecotype
            }{
                \vdash_\rsignature \Gamma, x \prepatof A \validctx
            }
        \end{mathpar}
    
        $\boxed{\Gamma \vdash_\rsignature K \Leftarrow \kind}$
        \begin{mathpar}
            \inferrule{
            }{
                \Gamma \vdash_\rsignature \type \Leftarrow \kind
            }
        
            \inferrule{
            }{
                \Gamma \vdash_\rsignature \cotype \Leftarrow \kind
            }
        
            \inferrule{
                \Gamma \vdash_\rsignature A \Leftarrow \typecotype
                \\
                \Gamma, x : A\vdash_\rsignature K \Leftarrow \kind
            }{
                \Gamma \vdash_\rsignature  \Pi x : A . \, K \Leftarrow \kind
            }
    
            \inferrule{
                \Gamma \vdash_\rsignature A \Leftarrow \typecotype
                \\
                \Gamma, x \prepatof A\vdash_\rsignature K \Leftarrow \kind
            }{
                \Gamma \vdash_\rsignature  \Pi x \prepatof A . \, K \Leftarrow \kind
            }
        \end{mathpar}
        
        $\boxed{\Gamma \vdash_\rsignature A \Leftarrow \typecotype}$ 
        \begin{mathpar}
            \inferrule{
                \Gamma \vdash_\rsignature A_2 \Leftarrow \typecotype\\
                \Gamma, x : A_2\vdash_\rsignature A_1 \Leftarrow \typecotype
            }{
                \Gamma \vdash_\rsignature \Pi x : A_2. \, A_1 \Leftarrow \typecotype
            }
    
            \inferrule{
                \Gamma \vdash_\rsignature A_2 \Leftarrow \typecotype\\
                \Gamma, x \prepatof A_2\vdash_\rsignature A_1 \Leftarrow \typecotype
            }{
                \Gamma \vdash_\rsignature \Pi x \prepatof A_2. \, A_1 \Leftarrow \typecotype
            }
        
            \inferrule{
                \Gamma \vdash_\rsignature P \Rightarrow K \\
                K = \type/\cotype
            }{
                \Gamma \vdash_\rsignature P \Leftarrow \typecotype
            } 
        \end{mathpar}

        \boxed{\Gamma\vdash_\rsignature P \Rightarrow K}
        \begin{mathpar}
            \inferrule{
                a : K \in \Xi
                \\
                \Gamma \vdash _\rsignature S \rhd K \Rightarrow K'
            }{
                \Gamma\vdash_\rsignature a\cdot S \Rightarrow K'
            }
        \end{mathpar}
    
        \boxed{\Gamma \vdash_\rsignature S \rhd K \Rightarrow K'} 
        \begin{mathpar}
        \inferrule{
        }{
            \Gamma  \vdash_\rsignature \snil  \rhd K \Rightarrow K
        }
    
        \inferrule{
            \Gamma \vdash_\rsignature M \Leftarrow A_2 \\
            [M/x]^{\erased{A_2}}K = K'
            \\
            \Gamma \vdash_\rsignature S \rhd K' \Rightarrow K''
        }{
            \Gamma  \vdash_\rsignature M;S  \rhd \Pi x : A_2. \, K \Rightarrow  K''
        }
    
        \inferrule{
            y \prepatof A_2' \in \Gamma\\
            \Gamma \vdash_\rsignature A_2' = A_2
            \\
            \varrenamesubst{y/x}{K} = K'
            \\
            \Gamma \vdash_\rsignature S \rhd K' \Rightarrow K''
        }{
            \Gamma  \vdash_\rsignature\prepatternvar  y;S  \rhd \Pi x \prepatof A_2. \, K \Rightarrow  K''
        }
        \end{mathpar}
    
        \boxed{\Gamma \vdash_\rsignature M \Leftarrow A} 
        \begin{mathpar}
        \inferrule{
            \Gamma  \vdash_\rsignature R \Rightarrow P'
            \\
            \Gamma \vdash_\Sigma P' = P
        }{
            \Gamma  \vdash_\rsignature R \Leftarrow P
        }
    
        \inferrule{
            \Gamma, x: A_2 \vdash_\rsignature M \Leftarrow A_1
        }{
            \Gamma  \vdash_\rsignature \lambda x.\, M \Leftarrow \Pi x:A_2.\,A_1
        }
    
        \inferrule{
            \Gamma, x \prepatof A_2 \vdash_\rsignature M \Leftarrow A_1
        }{
            \Gamma  \vdash_\rsignature \lambda x.\, M \Leftarrow \Pi x \prepatof A_2.\,A_1
        }
        \end{mathpar}
    
        \boxed{\Gamma \vdash_\rsignature R \Rightarrow P} 
        \begin{mathpar}
        \inferrule{
            (x : A \in \Gamma
            \text{ or }
            x \prepatof A \in \Gamma)
            \\
            \Gamma \vdash _\rsignature S \rhd A \Rightarrow P
        }{
            \Gamma  \vdash_\rsignature x  \cdot S \Rightarrow P
        }
    
        \inferrule{
            r : A = M \in \Xi
            \\
            [\text{or $r : A = M \in \Sigma$}] ^ {\m{2:definitions}}
            \\
            \Gamma \vdash _\rsignature  S \rhd A \Rightarrow P
        }{
            \Gamma  \vdash_\rsignature r \cdot  S \Rightarrow P
        }
    
        \inferrule{
            c : A \in \Xi
            \\
            \Gamma \vdash _\rsignature S \rhd A \Rightarrow P
        }{
            \Gamma  \vdash_\rsignature c \cdot S \Rightarrow P
        }
        \end{mathpar}
    
        \boxed{\Gamma \vdash_\rsignature S \rhd A \Rightarrow P} 
        \begin{mathpar}
        \inferrule{
        }{
            \Gamma  \vdash_\rsignature \snil  \rhd P \Rightarrow P
        }
    
        \inferrule{
            \Gamma \vdash_\rsignature M \Leftarrow A_2 \\
            [M/x]^{\erased{A_2}}A_1 = A_1'
            \\
            \Gamma \vdash_\rsignature S \rhd A_1' \Rightarrow P
        }{
            \Gamma  \vdash_\rsignature M;S  \rhd \Pi x : A_2. \, A_1 \Rightarrow  P
        }
    
        \inferrule{
            y \prepatof A_2' \in \Gamma\\
            \Gamma \vdash_\rsignature A_2' = A_2
            \\
            \varrenamesubst{y/x}{A_1} = A_1'
            \\
            \Gamma \vdash_\rsignature S \rhd A_1' \Rightarrow P
        }{
            \Gamma  \vdash_\rsignature \prepatternvar y;S  \rhd \Pi x \prepatof A_2. \, A_1 \Rightarrow  P
        }
        \end{mathpar}
    
        \boxed{ \Gamma \vdash_\Sigma A_1 = A_2 }
        \begin{mathpar}
        \inferrule{
            \Gamma \vdash_\Sigma P_1 = P_2
        }{
            \Gamma \vdash_\Sigma P_1 = P_2
        }
        
        \inferrule{
            \Gamma \vdash_\Sigma A_1 = A_1'
            \\
            \Gamma, x : A_1\vdash_\Sigma A_2 = A_2'
        }{
            \Gamma \vdash_\Sigma \Piargs{x}{A_1}{A_2} =
            \Piargs{x}{A_1'}{A_2'} 
        }
        
        \inferrule{
            \Gamma \vdash_\Sigma A_1 = A_1'
            \\
            \Gamma , x\prepatof A_1\vdash_\Sigma A_2 = A_2'
        }{
            \Gamma \vdash_\Sigma \prepatpi{x}{A_1}{A_2} =
            \prepatpi{x}{A_1'}{A_2'} 
        }
        \end{mathpar}
        
        \boxed{ \Gamma \vdash_\Sigma P_1 = P_2}
        \begin{mathpar}
        \inferrule{
             \cdot ; |\Gamma| \vdash_\Sigma S = S'
        }{
            \Gamma \vdash_\Sigma  a \cdot S = a \cdot S'
        }
        \end{mathpar}
    
\subsection{Metatheorems}

\begin{theorem}
    [Type Checking Respects Argument Restriction]
    Given $\Sigma$ where $\Sigma \validsig$,
    if $\Gamma \vdash_\rsignature M \Leftarrow A$, then for any occurrence of $r\cdot S$ in $M$, 
    $S$ will only be a list of prepattern variables.
\end{theorem}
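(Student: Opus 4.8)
The plan is to prove a strengthened statement by simultaneous induction over all the type-checking judgments of Fig.~\ref{fig:type_checking_judgments}: for every such judgment, every occurrence of a subterm of the form $r\cdot S'$ inside the term, neutral term, or spine that is the subject of the judgment has $S'$ a prepattern spine, i.e.\ $S'\isprepattern$. For the spine judgments $\Gamma\vdash_\rsignature S\rhd A\Rightarrow P$ the subject is the spine $S$; the type $A$ in these judgments is, in the cases that matter, the prepattern type of a recursion constant, so no recursion constant occurs in it. The theorem as stated is the special case for $\Gamma\vdash_\rsignature M\Leftarrow A$.

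Before the main induction I would establish two auxiliary facts about renaming substitution, whose clauses are in Appendix~\ref{appendix:hereditary_substitution_and_renaming_substitution}. First, $\varrenamesubst{y/x}$ preserves prepattern types: if $A\isprepattern$ then $\varrenamesubst{y/x}A\isprepattern$, by a one-line induction on the derivation of $A\isprepattern$. Second --- the crux of the proof --- if $A\isprepattern$ and $\Gamma\vdash_\rsignature S\rhd A\Rightarrow P$, then $S\isprepattern$. This is proved by induction on the spine derivation: the $\snil$ rule gives $\snil\isprepattern$; the rule for $M;S\rhd\Pi x:A_2.\,A_1$ cannot apply, since $A\isprepattern$ forces the prefix of $A$ to consist only of prepattern $\Pi$-abstractions; and the rule for $\prepatternvar y;S\rhd\Pi x\prepatof A_2.\,A_1$ reduces the goal to $\Gamma\vdash_\rsignature S\rhd\varrenamesubst{y/x}A_1\Rightarrow P$ with $\varrenamesubst{y/x}A_1\isprepattern$ by the first fact, so the induction hypothesis yields $S\isprepattern$ and prepending $\prepatternvar y$ gives $\prepatternvar y;S\isprepattern$.

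With these in hand the main induction is immediate in every case except the rule synthesizing a type for $r\cdot S$ in the judgment $\Gamma\vdash_\rsignature R\Rightarrow P$. There the premises are $r:A=M\in\Xi$ (or, under the deferred annotation, $r:A=M\in\Sigma$) and $\Gamma\vdash_\rsignature S\rhd A\Rightarrow P$. Since $\Sigma\validsig$ unfolds to $\vdash_\Sigma\Sigma\validsig$, every recursion declaration appearing in $\Sigma$ --- and hence the declaration of $r$, which is the same whether read from $\Xi$ or $\Sigma$ --- was checked with the premise $A\isprepattern$ of the rule for $\vdash_\Sigma\Xi,r:A=M\validsig$. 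By the second auxiliary fact, $S\isprepattern$, discharging the occurrence of $r\cdot S$ at the head; occurrences strictly inside $S$ are vacuous since $S$ is then a list of prepattern variables. Every other rule either decomposes its subject into strictly smaller subjects of the same family of judgments --- handled by the induction hypothesis, including the deferred premise $\vdash_\rsignature M\Leftarrow A$ that is the only place a recursive definition body is traversed --- or leaves the subject term unchanged and introduces no recursion constant, as in the conversion premise of the $R\Leftarrow P$ rule, whose type-equality side condition does not touch $M$.

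The step I expect to be the main obstacle is the second auxiliary fact: one must be careful to state it with $A$ ranging over all prepattern types (not just the closed declared ones), so that the prepattern invariant is genuinely maintained across the renaming-substitution step, and one must verify that the ordinary-$\Pi$ spine rule is truly inapplicable whenever $A\isprepattern$. Once that lemma is in place, the remainder is a mechanical case analysis paralleling the proof that hereditary substitution respects typing.
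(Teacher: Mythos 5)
Your proposal is correct and takes essentially the same approach as the paper, whose proof is simply ``directly induction on the typing derivation''; your auxiliary lemma that a prepattern type forces the spine judgment to produce a prepattern spine is exactly the key step that the paper's one-line proof leaves implicit.
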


\begin{proof}
    Directly induction on the typing derivation.
\end{proof}

\begin{theorem}
    [Preservation of Guardedness]
    Given a signature $\Sigma$, and $\Sigma \validsig$,
    if $\Gamma, x : A, \Gamma'\vdash_\rsignature M \Leftarrow B$, 
    $ \vdash_\Sigma r \guardedin M $, 
    $\Gamma \vdash_\rsignature N \Leftarrow A$,
    and $ \vdash_\Sigma r \guardedin N $, 
    then 
    $\vdash_\Sigma r \guardedin [N/x]^{A^o} M $.
\end{theorem}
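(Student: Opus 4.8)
The plan is to prove a strengthened statement that exposes the auxiliary parameters of the guardedness judgment and then specialize it. Concretely I would prove, simultaneously for all sets $Q$ of recursion constants and all sets $C$ of constructors: if $\Gamma, x : A, \Gamma' \vdash_\rsignature M \Leftarrow B$, $Q; C \vdash_\Sigma r \guardedin M$, $\Gamma \vdash_\rsignature N \Leftarrow A$, and $\{\};\{\} \vdash_\Sigma r \guardedin N$, then $Q; C \vdash_\Sigma r \guardedin [N/x]^{A^o}M$; together with the analogous statement for spines, and a companion \emph{reduction lemma} (carrying a suitable invariant) stating that under the same hypotheses on $N$, if $r$ is guarded in every entry of a well-typed spine $S'$ then $r$ is guarded in $S' \rhd^{A^o} N$. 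The proof is a lexicographic induction on the erased type $A^o$ and the structure of the subject, mirroring the recursion in Fig.~\ref{fig:hereditary_substitution_essential_cases}, and the theorem is the instance $Q = C = \emptyset$.

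The structural cases are routine. When $M = \lambda y.\, M'$, $M = c \cdot S$, $M = y \cdot S$ with $y \ne x$, or $M = r' \cdot S$ with $r' \ne r$ (whether $r' \in Q$ or $r'$ is unfolded), hereditary substitution commutes with the head, the guardedness rules update $C$ in exactly the same way before and after the substitution, and the premises are discharged by the induction hypothesis. Two side conditions need attention. First, definedness: because $\Gamma, x:A, \Gamma' \vdash_\rsignature M \Leftarrow B$ and $x$ is an \emph{ordinary} variable, no prepattern spine in $M$ can mention $x$ (the rule for $\prepatternvar{x}$ type-checks only when $x$ is declared with $\prepatof$), so $[N/x]^{A^o}$ never hits the undefined clause for $\prepatternvar{x}$. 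Second, the prepattern side conditions on the recursion-constant rules are preserved, by the same observation together with the theorem that type checking respects the argument restriction; and the body of $r'$ in the unfolding rule is untouched because recursive definitions are never substituted into.

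The substantive case is $M = x \cdot S$, where $[N/x]^{A^o}(x \cdot S) = ([N/x]^{A^o}S) \rhd^{A^o} N$. Inverting $Q; C \vdash_\Sigma r \guardedin x \cdot S$ gives $Q; C \vdash_\Sigma r \guardedin S$, so the spine induction hypothesis yields $Q; C \vdash_\Sigma r \guardedin [N/x]^{A^o}S$, and it remains to apply the reduction lemma. That lemma is proved by the companion induction on $A^o$ and the structure of $N$: when $N = \lambda y.\, N'$ and $S' = (P; S'')$ we pass to $S'' \rhd^{\tau_1} [P/y]^{\tau_2} N'$, using the first part of the statement to substitute $P$ into $N'$; when $S' = \snil$ and $N = H \cdot S_N$ is neutral we must show $Q; C \vdash_\Sigma r \guardedin N$ from $\{\};\{\} \vdash_\Sigma r \guardedin N$, i.e.\ re-contextualize the guardedness of $N$ into the ambient parameters $Q$ and $C$ inherited from $M$.

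That last step is the main obstacle. Moving from empty auxiliary parameters to arbitrary $Q$ and $C$ is not free for guardedness: a trace inside $N$ that is valid in isolation could, once prefixed by $C$, have its highest-priority constructor shift to an inductive one lying in $C$, and enlarging $Q$ changes which recursion constants are short-circuited. The resolution has to use the typing hypotheses — $N \Leftarrow A$ pins the head of $N$'s body, and recursively all constructors reachable in $N$ without crossing a recursion constant, to the type family at the target of $A$, which is exactly the family of the substituted variable $x$, while $C$ was accumulated in $M$ directly above an occurrence of $x$; one then argues these are compatible, so that the coinductive guard supplied by $N$ remains of highest priority on every combined trace, and in the $\rhd$ step the same reasoning must additionally account for the constructors of $N$ traversed on the way to an entry of the spine. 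I expect the clean formulation threads an extra invariant through the strengthened hypothesis (the target family of $A$ and the priority bound it induces on $C$, plus a bound on the relevant recursion constants), and that carrying this bookkeeping correctly — rather than any isolated trick — is where the real work of the proof lies; once the reduction lemma is in place, the theorem follows by instantiating $Q = C = \emptyset$.
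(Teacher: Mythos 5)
Your overall strategy coincides with the paper's: the paper's entire proof is the single sentence ``By induction on the derivation $Q; C \vdash_\Sigma r \guardedin M$,'' which is exactly your strengthening over arbitrary $Q$ and $C$ followed by induction on the guardedness derivation. Your handling of the structural cases, of definedness (an ordinary $x$ cannot occur in a prepattern position of a well-typed $M$), and of the fact that recursion-constant spines are left untouched by the substitution all match what that one-line proof must implicitly contain.

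Where you go beyond the paper is in isolating the case $M = x\cdot S$, where the substitution triggers a reduction $([N/x]^{A^o}S)\rhd^{A^o}N$ and one must splice the derivation of $\{\};\{\}\vdash_\Sigma r \guardedin N$ underneath the ambient $Q$ and $C$ accumulated while descending through $M$. You are right that this is the crux and that it is not free: the judgment is monotone in $Q$ (enlarging $Q$ only discharges more recursion-constant leaves) but \emph{not} monotone in $C$, since a high-priority inductive constructor inherited from $M$ can invalidate a $C\isvalidtrace$ check that succeeded for $N$ in isolation. The paper's proof does not address this at all. Be aware, however, that your sketched repair is not obviously sufficient: typing pins only the \emph{head} constructor of $N$ to the target family of $A$, while constructors deeper inside $N$ (still above an occurrence of $r$ and not separated from it by a recursion constant) may belong to arbitrary families of arbitrary priorities, so the ``compatibility'' between $C$ and the guards of $N$ does not follow from the typing hypotheses alone. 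As stated, the lemma seems to need either an extra hypothesis relating the priorities of the constructors of $M$ above $x$ to the guards inside $N$, or an argument that the problematic configuration cannot arise where the lemma is actually applied (e.g., by the prepattern discipline); neither your proposal nor the paper supplies this, so the difficulty you correctly identified remains open in both proofs rather than being a defect unique to yours.
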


\begin{proof}
    By induction on the derivation $Q; C\vdash_\Sigma r \guardedin M$.
\end{proof}

\begin{theorem}
    [Compatibility]
    The type equality is a congruence everywhere. 
    \begin{enumerate}
        \item 
    If $\Gamma \vdash_\rsignature M \Leftarrow A_1$ and $\Gamma \vdash _\Sigma A_1 = A_2$, then 
    $\Gamma \vdash_\rsignature M \Leftarrow A_2$. 
        \item
    If $\Gamma, x : A_1, \Gamma' \vdash_\rsignature M \Leftarrow B$, and $\Gamma \vdash _\Sigma A_1 = A_2$, then  
    $\Gamma, x : A_2, \Gamma' \vdash_\rsignature M \Leftarrow B$.
    \end{enumerate}
\end{theorem}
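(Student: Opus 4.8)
The plan is to prove both parts simultaneously by induction on typing derivations, strengthening the statement to a mutual package that covers \emph{every} judgment form of Fig.~\ref{fig:type_checking_judgments}. Concretely, for each of kind validity, type validity, atomic-type synthesis, spine-against-kind, term checking, neutral synthesis, and spine-against-type, I would assert both (a) a \emph{context-conversion} clause --- the judgment is preserved when a context entry $x : A_1$ is replaced by $x : A_2$ with $\Gamma \vdash_\Sigma A_1 = A_2$ --- and (b) a \emph{result-conversion} clause --- synthesized types and kinds are determined up to $\Gamma \vdash_\Sigma (-) = (-)$, and the input type/kind of a checking or spine judgment may be replaced by an equal one. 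Parts (1) and (2) of the theorem are exactly the term-checking instances of (b) and (a). Throughout I would freely use that $\Gamma \vdash_\Sigma (-) = (-)$ is an equivalence relation (symmetry and transitivity), which reduces to the corresponding fact about term equality already established in Appendix~\ref{appendix:metatheorems_for_equality_checking}.

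Before the main induction I would establish two auxiliary lemmas. The first is \emph{inversion for type and kind equality}: since the rules defining $\Gamma \vdash_\Sigma A_1 = A_2$ are syntax-directed on the shared shape, an equality between $\Pi$-abstractions forces both sides to be $\Pi$-abstractions of the same flavor with pointwise-equal domains and, in the domain-extended context, equal codomains, while an equality at an atomic type forces both sides to be $a\cdot S$, $a\cdot S'$ with $\cdot;|\Gamma| \vdash_\Sigma S = S'$. The second is \emph{functionality of hereditary substitution for equality}: if $\Gamma \vdash_\Sigma A = A'$ (resp.\ $K = K'$) then $[M/x]^{\tau} A = [M/x]^{\tau} A'$ (resp.\ for kinds) whenever both are defined; I would prove this by structural induction on the type/kind, pushing the substitution inward, the atomic base case being precisely the term-level Compatibility theorem of Appendix~\ref{appendix:metatheorems_for_equality_checking} applied spine-wise.

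With these in place the induction is mostly mechanical. The only rule that inspects type equality is the coercion rule $\Gamma\vdash_\rsignature R \Leftarrow P$ (premises $R \Rightarrow P'$ and $P' = P$), handled by transitivity. For $\lambda$ against a $\Pi$-type I would invert the target equality into equal domains and codomains, apply the context-conversion clause to the body derivation, and then the codomain equality through the induction hypothesis. For the dependent spine rules $M;S\rhd \Pi x:A_2.A_1$ and $\prepatternvar y;S \rhd \Pi x\prepatof A_2.A_1$ I would invert the equality on the $\Pi$-type, re-check $M$ (resp.\ re-validate the prepattern variable $y$) against the equal domain via the checking clause, use functionality of hereditary substitution to see that the two instantiated codomains remain equal, and recurse on the tail spine. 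For neutral synthesis $x\cdot S \Rightarrow P$ inside a context conversion, the only nontrivial point is when $x$ is the converted variable: its recorded type flips from $A_1$ to $A_2$, and the spine premise $S\rhd A_1 \Rightarrow P$ is transported to $S \rhd A_2 \Rightarrow P'$ with $P = P'$ by the spine instance of (b); recursion-constant heads are untouched since their types come from $\Sigma$, not $\Gamma$. Every remaining rule is structural and the induction hypotheses combine directly.

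I expect the main obstacle to be the dependent spine case, specifically the functionality-of-hereditary-substitution lemma: stability of the synthesized result under substituting into equal codomains does not follow from \emph{Hereditary Substitution Respects Typing}; it needs the term-level Compatibility lemma, whose own proof routes through the B\"ohm-tree semantics (soundness, compatibility of equality on infinite trees, commutation with hereditary substitution, completeness). Once that lemma is isolated and lifted from terms to types and kinds by a routine structural induction, the remaining cases of the mutual induction over the syntax-directed typing rules go through without difficulty.
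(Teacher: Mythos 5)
Your proposal is correct and follows essentially the same route as the paper, whose entire proof reads ``by induction, invoking the transitivity of equality at the base case'' for part (1) and additionally symmetry for part (2) --- i.e., induction on the typing derivation with the coercion rule $\Gamma\vdash_\rsignature R \Leftarrow P$ as the place where equality reasoning enters. You have simply spelled out the supporting machinery the paper leaves implicit (the mutual strengthening over all judgment forms, inversion of type equality, and functionality of hereditary substitution resting on the term-level Compatibility lemma), and your identification of the dependent spine case as the genuinely nontrivial step is accurate.
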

\begin{proof}
    \begin{enumerate}
        \item 
    By induction, invoking the transitivity of equality at the base case.
    \item
    By induction, invoking the symmetry and transitivity of equality at the base case.
    \end{enumerate}
\end{proof}

\section{Concrete Syntax for CoLF}
\label{appendix:concrete_syntax_for_colf}

We adopt the following conventions for concrete syntax throughout the paper:

\begin{enumerate}
    \item 
    Declarations will be written in the \verb#typewriter# font. 
    \item We write usual applications \verb#c M1 M2# instead of the spine form $c\cdot (M_1; M_2)$.
    \item We use curly brackets $\{\}$ for 
    $\Pi$ types, e.g. $\Pi x: A_2. \, K$ and $\Pi x : A_2.\, A_1$ will be written as \verb#{x : A2} K# and \verb#{x : A2} A1#.
    The type $A_2$ may be omitted if it can be inferred, allowing us to just write \verb#{x} K# and \verb#{x} A1#.
    The entire abstraction may be omitted by writing the binder in the capital letter of a type. 
    For example, $ c : \Pi x : A_2. \,\Pi y : A_3.\, a \cdot (x; y)$ maybe written \verb#(c : a X Y)#, which means \verb#(c : {X} {Y} a X Y)#. 
    The capital letters mimic Prolog's style of metavariables. Note that in this case, we do not need to write out the corresponding applications.
    We write \verb#c# instead of \verb#c M1 M2# and the system can infer the actual arguments. 
    We write \verb#A2 -> A1# for $A_2 \to A_1$ (i.e., $\Pi x : A_2.\, A_1$ and $x \notin FV(A_1)$). 
    \item We use square brackets for $\lambda$-abstraction. For example, $\lambda x.\, M$ is written as \verb#([x] M)#. 
    \item We may write underscores in any position to let the system infer the omitted term.
    \item We write \verb#%%# for comments.
\end{enumerate}

\section{Classical Subtyping in Agda vs. $\CoLFCircular$}

The full set of rules is reproduced as follows.

    \begin{mathparpagebreakable}
        \inferrule{
        }{ \bot \le \tau} (\m{bot})
    
        \inferrule{
        }{ \tau\le \top} (\m{top})
        
        \inferrule{
            }{
                \tau \le \tau
        }(\m{refl})
    
        \inferrule{
        }{
            \mu X.\tau_1\to\tau_2
            \le 
            [\mu X.\tau_1 \to\tau_2/X](\tau_1\to\tau_2) 
        }(\m{unfold})
        
        \inferrule{
        }{
            [\mu X.\tau_1\to\tau_2/X](\tau_1\to\tau_2) 
            \le 
            \mu X.\tau_1\to\tau_2
        }(\m{fold})\\
    
        {
        \mprset{fraction={===}}
        \inferrule{
            \tau_1 \le \sigma_1
            \\
            \sigma_2 \le \tau_2
        }{ \sigma_1 \to \sigma_2 \le \tau_1 \to \tau_2} (\m{arr})
        }
    
        \inferrule{
            \tau_1 \le \tau_2
            \\
            \tau_2 \le \tau_3
        }{
            \tau_1 \le \tau_3
        }(\m{trans})
    \end{mathparpagebreakable}

\label{appendix:subtyping_proofs_agda_colf}
The full Agda code is as follows. Note that the last theorem is not true. A cyclic derivation cannot be proved 
definitionally equal to its one-step unfolding automatically.
\begin{verbatim}
{-# OPTIONS --without-K --safe --universe-polymorphism 
            --no-sized-types
            --guardedness --no-subtyping #-}

open import Agda.Builtin.Coinduction
open import Relation.Nullary
open import Agda.Builtin.Equality
open import Data.String 
open import Data.Bool 
open import Relation.Binary
open import Data.Nat 

open import Relation.Nullary.Decidable

data tp : Set where
     bot  : tp
     top  : tp
     _to_  : tp -> tp -> tp
     mu : String -> tp -> tp -> tp
     var : String -> tp


eqstring : String -> String -> Bool
eqstring x y = if (⌊ (x Data.String.≟ y) ⌋) then true else false

subst : tp -> String -> tp -> tp
subst T2 x (var x1) = if ( eqstring x x1) then T2 else (var x1)
subst T2 x (bot ) = bot
subst T2 x (top ) = top
subst T2 x (s1 to s2 ) = (subst T2 x s1) to (subst T2 x s2)
subst T2 x (mu x1 s1 s2) = if (eqstring x x1) then (mu x1 s1 s2)
                    else mu x1 (subst T2 x s1) (subst T2 x s2)

dounfold : tp -> tp
dounfold (mu x1 s1 s2) = ((subst (mu x1 s1 s2) x1 s1) 
                                to (subst (mu x1 s1 s2) x1 s2))
dounfold T = T

data sub : tp -> tp -> Set where
    bot : {T : tp} ->  sub bot T 
    top : {S : tp} ->  sub S top
    s_to : {T1 T2 S1 S2 : tp} -> ∞ (sub T1 S1) -> ∞ (sub S2 T2) 
                        -> sub (S1 to S2) (T1 to T2)
    unfold : {X1 : String}{T1 : tp}{T2 : tp} -> 
                        sub (mu X1 T1 T2) (dounfold (mu X1 T1 T2))
    fold : {X1 : String}{T1 : tp}{T2 : tp} ->  
                        sub (dounfold (mu X1 T1 T2)) (mu X1 T1 T2)
    refl : {T : tp} -> sub T T
    trans : {T1 T2 T3 : tp} -> sub T1 T2 -> sub T2 T3 -> sub T1 T3
    

s : tp 
s = mu "x" (var "x") (var "x")
t : tp
t = mu "x" ((var "x") to bot) (top)
s_sub_t : sub s t
s_sub_t = trans (unfold) (trans 
    (s_to (♯(trans (s_to (♯ s_sub_t) (♯ bot)) (fold))) (♯ top) )
    fold)

s_sub_t2 : sub s t
s_sub_t2 = trans (unfold) (trans 
    (s_to (♯(trans (s_to (♯ (
 trans (unfold) (trans 
    (s_to (♯(trans (s_to (♯ s_sub_t2) (♯ bot)) (fold))) (♯ top) )
    fold)
    )) (♯ bot)) (fold))) (♯ top) )
    fold)

eq_proof : s_sub_t ≡ s_sub_t2
eq_proof = refl -- this will error
\end{verbatim}

However, in $\CoLFCircular$, a similar proof will be accepted as correct.

\begin{verbatim}
tp : type.
bot : tp.
top : tp. 
arr : tp -> tp -> tp.
mu : (tp -> tp) -> (tp -> tp) -> tp.

subtp : tp -> tp -> type.
subtpinf : tp -> tp -> cotype.
subtp/top : subtp T top.
subtp/bot : subtp bot T.
refl : subtp T T.
trans : subtp T1 T2 -> subtp T2 T3 -> subtp T1 T3.
subtp/arr : subtpinf T1 T2 -> subtp T1 T2.
unfold : {T1}{T2} 
        subtp (mu T1 T2) (arr (T1 (mu T1 T2)) (T2 (mu T1 T2))).
fold : {T1}{T2} 
        subtp (arr (T1 (mu T1 T2)) (T2 (mu T1 T2))) (mu T1 T2).

inf/arr : subtp T1 S1 -> subtp S2 T2 
        -> subtpinf (arr S1 S2) (arr T1 T2).

s : tp = mu ([x] x) ([x] x).
t : tp = mu ([x] arr x bot) ([x] top).

s_sub_t : subtp s t = 
    trans 
        (unfold ([x] x) ([x] x))
        (trans 
            (subtp/arr 
                (inf/arr
                    (trans
                        (subtp/arr
                            (inf/arr 
                                s_sub_t
                                subtp/bot))
                        (fold ([x] x) ([x] x)))
                    subtp/top))
            (fold ([x] arr x bot) ([x] top))).


s_sub_t2 : subtp s t = 
    trans 
        (unfold ([x] x) ([x] x))
        (trans 
            (subtp/arr 
                (inf/arr
                    (trans
                        (subtp/arr
                            (inf/arr 
    (trans 
        (unfold ([x] x) ([x] x))
        (trans 
            (subtp/arr 
                (inf/arr
                    (trans
                        (subtp/arr
                            (inf/arr 
                                s_sub_t2
                                subtp/bot))
                        (fold ([x] x) ([x] x)))
                    subtp/top))
            (fold ([x] arr x bot) ([x] top))))
                                subtp/bot))
                        (fold ([x] x) ([x] x)))
                    subtp/top))
            (fold ([x] arr x bot) ([x] top))).

eqsub : subtp S T -> subtp S T -> type.
eqsub/refl : eqsub M M.
eqproof : eqsub s_sub_t s_sub_t2 = eqsub/refl.
\end{verbatim}

\section{Encoding of a Polarized Subtyping System}

\label{appendix:encoding_polarized_subtyping}

We present an encoding of a variant Lakhani et al.'s polarized subtyping system \cite{Lakhani22esop} into $\CoLFCircular$. 
The system is circular. 
However, because of the awkwardness of the current LF methodology in 
encoding labelled types, e.g., $\oplus \{l : \tau_l^+\}_{l \in L}$, 
we instead fall back to the usual binary structure and write $\tau \oplus \tau'$.

\subsection{Encoding of Equirecursive Types}

The types are stratified into positive types classifying values and 
negative types classifying computations. The equirecursive nature is captured 
by a signature $\Sigma$ providing recursive definitions for type names $t^+, s^-$.
We encode the normal form of the signature which alternates between names and definitions.

\begin{center}
\begin{tabular}{lll}
    $\tau^-, \sigma^- $ & $ ::= $ & $t^+ \to s^- \mid \top \mid  s_1^- \ampersand s_2^-  \mid \upshift t^+ $ \\ 
    $\tau^+, \sigma^+ $ & $ ::= $ & $t_1^+ \otimes t_2^+ \mid \mathbf{1} \mid t_1^+ \oplus t_2^+ \mid \mathbf{0}\mid \downshift s^-$ \\
    $\Sigma $ & $ ::= $ & $\cdot \mid \Sigma, t^+ = \tau^+ \mid \Sigma, s^- = \sigma^-$ \\
\end{tabular}
\end{center}

Equirecursive types are directly encoded as recursion constants in the system, 
and the framework automatically provides equirecursive type equality checking.
Because types can be circular, both positive types and negative types are 
encoded uniformly as $\cotype$.

\begin{verbatim}
postp : cotype.
negtp : cotype.

times : postp -> postp -> postp.
one : postp.
plus : postp -> postp -> postp.
zero : postp.
downshift : negtp -> postp.

arr : postp -> negtp -> negtp.
top : negtp.
and : negtp -> negtp -> negtp.
upshift : postp -> negtp.

\end{verbatim}

The encoding relation is defined by induction on the structure of the term.
The base case of the induction is where we encounter a definitional type constant in the 
object logic (equirecursive types), and we encode the definitional type constant as a recursion variable.

\begin{longtable}{lll}
    $\enc{t^+ = \tau^+} $ & $ = $ & $  \mt{t : postp} = \enc{\tau^+} \in \Sigma$ \\
    $\enc{s^- = \sigma^-} $ & $ = $ & $ \mt{s : negtp} = \enc{\sigma^-} \in \Sigma$ \\
    $\enc{t^+ \to s^-} $ & $ = $ & $ \mt{arr}\, \enc{t^+}\, \enc{s^-}$ \\
    $\enc{\top} $ & $ = $ & $ \mt{top}$ \\
    $\enc{s_1^- \ampersand s_2^-} $ & $ = $ & $ \mt{and}\, \enc{s_1^-}\, \enc{s_2^-}$ \\
    $\enc{\upshift t^+} $ & $ = $ & $ \mt{upshift}\, \enc{t^+}$ \\
    $\enc{ s^-} $ & $ = $ & $\mt{s}$ \\
    $\enc{ t_1^+ \otimes t_2^+} $ & $ = $ & $ \mt{times}\, \enc{t_1^+}\, \enc{t_2^+}$ \\
    $\enc{\mb 1} $ & $ = $ & $ \mt{one}$ \\
    $\enc{ t_1^+ \oplus t_2^+} $ & $ = $ & $ \mt{plus}\, \enc{t_1^+}\, \enc{t_2^+}$ \\
    $\enc{\mb 0} $ & $ = $ & $ \mt{zero}$ \\
    $\enc{\downshift s^-} $ & $ = $ & $ \mt{downshift}\, \enc{s^-}$ \\
    $\enc{ t^+} $ & $ = $ & $\mt{t}$\\
\end{longtable}

\begin{theorem}[Adequacy of Type Encoding]
    There is a bijection between circular types defined in an object signature and canonical forms of the \verb#postp# or \verb$negtp$ 
    in $\CoLFCircular$.
\end{theorem}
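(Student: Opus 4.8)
The plan is to establish the bijection by turning the encoding $\enc{-}$ into an explicit pair of mutually inverse maps and then arguing by structural induction. First I would fix the ambient CoLF signature $\Sigma_0$ consisting of the declarations $\mt{postp} : \cotype$, $\mt{negtp} : \cotype$ together with the eight constructors $\mt{times},\mt{one},\mt{plus},\mt{zero},\mt{downshift},\mt{arr},\mt{top},\mt{and},\mt{upshift}$. I would then read the clauses of $\enc{-}$ as defining, simultaneously, (i) a map from object signatures $\Sigma$ to CoLF signatures $\Sigma_0,\enc{\Sigma}$, where each object declaration $t^+ = \tau^+$ (resp.\ $s^- = \sigma^-$) becomes a recursion declaration $\mt{t} : \mt{postp} = \enc{\tau^+}$ (resp.\ $\mt{s} : \mt{negtp} = \enc{\sigma^-}$), and (ii) for each fixed $\Sigma$, a map from object type expressions to canonical terms of type $\mt{postp}$ or $\mt{negtp}$ in $\Sigma_0,\enc{\Sigma}$. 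The inverse map reads a constructor head back to the corresponding object type former and a recursion-constant head (which in this first-order fragment always carries the empty spine) back to the corresponding object type name. The theorem then amounts to: these two maps are mutually inverse and commute with the type-forming operations, which is exactly compositionality.

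For the forward direction I would prove, by induction on the syntax of $\tau^+$/$\tau^-$, that $\enc{\tau^+} \Leftarrow \mt{postp}$ and $\enc{\tau^-} \Leftarrow \mt{negtp}$, and that $\Sigma_0,\enc{\Sigma} \validsig$. Three side conditions need checking. Contractiveness ($M \iscontractive$) of each recursive definition is immediate because the object grammar never allows a definition body to be a bare type name: the body always has one of the eight constructors at its head. The prepattern requirements are vacuous: each recursion constant has atomic type, so $A \isprepattern$ holds trivially, and every occurrence of a recursion constant is applied to the empty spine, which is a prepattern spine. Trace validity (guardedness $\vdash_{\Sigma_0,\enc{\Sigma}} r \guardedin M$) also holds trivially, since $\mt{postp}$ and $\mt{negtp}$ are both declared with $\cotype$, so \emph{every} constructor is coinductive; hence the highest-priority constructor along any trace from a recursion constant to itself is automatically coinductive, and such a trace is nonempty because it begins with the head constructor of the definition body. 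I would also note that the linear-ordering requirement on $\Sigma_0,\enc{\Sigma}$ is met because the object-signature normal form permits mutual reference only among the type names, which become exactly the recursion constants, while $\Sigma_0$ is fixed up front.

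For the reverse direction I would induct on the structure of a canonical term $M$ of type $\mt{postp}$ (the $\mt{negtp}$ case is symmetric). Since $\mt{postp}$ is atomic, $M$ is neutral of the form $H \cdot S$; by inversion on the typing rules, $H$ is either one of the eight constructors applied to a spine of canonical terms of the appropriate base types, or a recursion constant with empty spine. In the first case I peel off the constructor, apply the induction hypothesis to the spine arguments, and reassemble the corresponding object type; in the second I read the object type name off the declaration. This yields a map that, by a straightforward induction using uniqueness of canonical forms, is a two-sided inverse of $\enc{-}$, and the clause-by-clause shape of $\enc{-}$ makes both directions commute with the type constructors.

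The main obstacle, to the extent there is one, is not a hard lemma but getting the statement of the correspondence right at the level of signatures rather than type expressions: the bijection on type expressions only makes sense relative to a chosen translation of the ambient object signature, so one must check that the object-signature normal form lines up cleanly with CoLF's linear-ordering, contractiveness, and forward-reference restrictions (the points verified above), and be precise about whether canonical forms and circular object types are taken as syntactic representatives or as equivalence classes under, respectively, CoLF term equality and object-level equirecursive unfolding. Since canonical terms are already unique representatives, taking the syntactic reading makes the bijection exact and keeps the induction routine. A secondary bookkeeping point is to confirm that nesting of constructors on the CoLF side (e.g.\ $\mt{times}\,(\mt{times}\,\mt{one}\,\mt{one})\,\mt{zero}$) is matched on the object side, i.e.\ that the object types counted by the theorem are the full nested type expressions over the names of $\Sigma$ and not merely the one-constructor-deep bodies appearing in definitions.
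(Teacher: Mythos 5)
Your proposal is correct and follows essentially the same route as the paper, whose proof is the one-line ``by induction on the structure of the object types in one direction, and by induction on the structure of terms in the reverse direction.'' You simply make explicit the side conditions (contractiveness, prepattern, guardedness, linear ordering) and the signature-level framing that the paper leaves implicit.
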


\begin{proof}
    By induction on the structure of the object types in one direction, and by induction on the structure of terms in the reverse direction.
\end{proof}

\subsection{Encoding of Subtyping Rules}

The cyclic subtyping proof is defined via three judgments on a normal form of 
the signature, and they are $t\tpempty$, $t \tpfull$, and $t \le s$. The following shows the encoding of these judgments.

\begin{verbatim}
empty : postp -> cotype.
full : negtp -> type.
psubtp : postp -> postp -> cotype.
nsubtp : negtp -> negtp -> cotype.
\end{verbatim}

We repeat the encoding of the emptiness judgment.

\begin{mathpar}
    \inferrule{
    }{
        0 \tpempty
    }(\mathbf{0} \EMP)

    \inferrule{
        t = t_1 \oplus t_2 \in \Sigma
        \\
        t_1 \tpempty
        \\
        t_2 \tpempty
    }{
        t \tpempty
    }(\oplus \EMP)

    \inferrule{
        t = t_1 \otimes t_2 \in \Sigma
        \\
        t_1 \tpempty
    }{
        t \tpempty
    }(\otimes \EMP_1)

    \inferrule{
        t = t_1 \otimes t_2 \in \Sigma
        \\
        t_2 \tpempty
    }{
        t \tpempty
    }(\otimes\EMP_2)

\end{mathpar}


\begin{verbatim}
zero_emp : empty zero.
plus_emp : empty T1 -> empty T2 -> empty (plus T1 T2).
times_emp_1 : empty T1 -> empty (times T1 T2).
times_emp_2 : empty T2 -> empty (times T1 T2).
\end{verbatim}




The rules for fullness judgment:

\begin{mathpar}
    \inferrule{
        s = t_1 \to s_2 \in \Sigma
        \\
        t_1 \tpempty
    }{
        s \tpfull
    }(\to \FULL)

    \inferrule{
        t = \top \in \Sigma
    }{
        t \tpfull
    }(\top \FULL)

\end{mathpar}

And their encoding:

\begin{verbatim}
arr_full : empty T -> full (arr T S).
top_full:  full top.
\end{verbatim}

\begin{theorem}
    [Adequacy]
    There is a bijection between derivations of the judgment $s\tpfull$ and 
    canonical forms of the type $\mathtt{full\, \enc{s}}$.
\end{theorem}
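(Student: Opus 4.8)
The plan is to prove both directions of the claimed bijection by structural induction, in the same style as the (already established) adequacy of the emptiness encoding, reducing the treatment of equirecursive type names to the Adequacy of Type Encoding together with soundness and completeness of the term-equality algorithm. \emph{Forward direction.} I would define the encoding $\enc{-}$ on fullness derivations by case analysis on the last rule. A derivation ending in $(\top\FULL)$ carries the side condition $t = \top \in \Sigma$, so $\enc{t}$ is a recursion constant whose body unfolds to $\mt{top}$; the constructor \verb#top_full# synthesizes the type $\mt{full}\,\mt{top}$, and since $\mt{top}$ and $\enc{t}$ expand to the same B\"ohm tree, the conversion rule of the checking judgment lets \verb#top_full# check against $\mt{full}\,\enc{t}$. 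A derivation ending in $(\to\FULL)$ carries $s = t_1 \to s_2 \in \Sigma$ together with a circular subderivation $\mathcal{E}$ of $t_1\tpempty$; by the adequacy of the emptiness encoding, $\enc{\mathcal{E}}$ is a canonical term of type $\mt{empty}\,\enc{t_1}$, so \verb#arr_full# applied to $\enc{\mathcal{E}}$ synthesizes $\mt{full}\,(\mt{arr}\,\enc{t_1}\,\enc{s_2})$, which is again convertible with $\mt{full}\,\enc{s}$ by unfolding $\enc{s}$.

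\emph{Reverse direction.} I would proceed by inversion on the typing of a closed canonical term $M \Leftarrow \mt{full}\,\enc{s}$. Since $\mt{full}\,\enc{s}$ is atomic, $M$ is a neutral term $H\cdot S$; the context is empty so $H$ is not a variable, and if $H$ is a recursion constant we may unfold it (the equality algorithm guarantees termination), so without loss of generality $H$ is one of the two constructors. If $H$ is \verb#top_full# then $S = \snil$ and the conversion premise forces $\mt{top} = \enc{s}$, so by completeness of term equality and the Adequacy of Type Encoding the body of $s$ is $\top$, and rule $(\top\FULL)$ applies. If $H$ is \verb#arr_full# then $S$ is a single argument $M'$ checked against $\mt{empty}\,T$ and the conversion premise forces $\mt{arr}\,T\,U = \enc{s}$; again by completeness and type adequacy, $s$ is defined as some $t_1 \to s_2$ with $\enc{t_1} = T$ and $\enc{s_2} = U$, and by the adequacy of the emptiness encoding $M'$ is the image of a unique derivation of $t_1\tpempty$, so rule $(\to\FULL)$ applies. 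Finally I would check that the two maps are mutually inverse, up to the identifications made by $\CoLFCircular$'s equality, and compositional; both facts fall out of the case analysis.

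\emph{Expected main obstacle.} The induction itself is shallow, because the fullness judgment is non-recursive: all of the circularity lives in the emptiness subderivation, which is handled wholesale by the previously established adequacy result. The genuine work is the bookkeeping around the equirecursive reading of type names, i.e.\ matching the object-level side conditions ``$s = t_1 \to s_2 \in \Sigma$'' and ``$t = \top \in \Sigma$'' against $\CoLFCircular$'s silent unfolding of the recursion constant $\enc{s}$ during type conversion. Making this precise requires invoking soundness and completeness of the term-equality algorithm, to know that $\enc{s}$ and the structural forms $\mt{top}$ and $\mt{arr}\,T\,U$ are $\CoLFCircular$-equal exactly when they expand to the same B\"ohm tree, and the injectivity half of the Adequacy of Type Encoding, to recover $t_1$ and $s_2$ from $T$ and $U$. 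I expect that to be the only place where more than routine inversion is needed.
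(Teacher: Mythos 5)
Your proposal is correct and follows essentially the same route as the paper, which simply states ``by induction on the structure of the circular derivation in both directions''; your case analysis on the last rule (forward) and inversion on the canonical term (reverse), delegating the emptiness subderivation to the previously established adequacy result and handling the unfolding of $\enc{s}$ via term-equality conversion, is just a fully spelled-out version of that argument.
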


\begin{proof}
    By induction on the structure of the circular derivation in both directions.
\end{proof}

The rules for subtyping.

\begin{mathparpagebreakable}
    \inferrule{
        t = t_1 \otimes t_2
        \\
        u = u_1 \otimes u_2
        \\
        t_1 \le u_1
        \\
        t_2 \le u_2
}{
    t \le u
}(\otimes \SUB)

    \inferrule{
        t = \mathbf{1}
        \\
        u = \mathbf{1}
}{
    t \le u
}(\mathbf{1} \SUB)

    \inferrule{
        t = t_1 \oplus t_2
        \\
        u = u_1 \oplus u_2
        \\
        t_1 \le u_1
        \\
        t_2 \le u_2
}{
    t \le u
}(\oplus \SUB)

\inferrule{
        t = \downshift s
        \\
        u = \downshift r
        \\
        s \le r
}{
    t \le u
}(\downshift \SUB)

\inferrule{
        s = t_1 \to s_2
        \\
        r = u_1 \to r_2
        \\
        u_1 \le t_1
        \\
        s_2 \le r_2
}{
    s \le r
}(\to \SUB)

\inferrule{
        s = \upshift t
        \\
        r = \upshift u
        \\
        t \le u
}{
    s \le r
}(\upshift \SUB)

\inferrule{
        s = s_1 \ampersand s_2
        \\
        r = r_1 \ampersand r_2
        \\
        s_1 \le r_1
        \\
        s_2 \le r_2
}{
    s \le r
}(\ampersand \SUB)

\inferrule{
    t \tpempty
    \\
    u = \tau^+
}{
    t \le u
}(\bot \SUB^+)

\inferrule{
    s = \upshift t
    \\
    t \tpempty
    \\
    r = \sigma^-
}{
    s \le r
}(\bot \SUB^-)

\inferrule{
    s = \sigma^-
    \\
    r \tpfull
}{
    s \le r
}(\top \SUB)
\end{mathparpagebreakable}

The above circular rules may be encoded as 

\begin{verbatim}
tensor_sub :  psubtp T1 U1 -> psubtp T2 U2 -> psubtp (times T1 T2) (times U1 U2).
unit_sub :  psubtp one one.
or_sub :  psubtp T1 U1 -> psubtp T2 U2 -> psubtp (plus T1 T2) (plus U1 U2).
downshift_sub : nsubtp S R -> psubtp (downshift S) (downshift R).
arr_sub :  psubtp U1 T1 -> nsubtp S2 R2 -> nsubtp (arr T1 S2) (arr U1 R2).
upshift_sub : psubtp T U -> nsubtp (upshift T) (upshift U).
and_sub :  nsubtp S1 R1 -> nsubtp S2 R2 -> nsubtp (and S1 S2) (and R1 R2).
bot_sub_p : empty T -> psubtp T U.
bot_sub_n : empty T -> nsubtp (upshift T) R.
top_sub : full R -> nsubtp S R.
tensor_sub :  psubtp T1 U1 -> psubtp T2 U2
                -> psubtp (times T1 T2) (times U1 U2).

unit_sub :  psubtp one one.
or_sub :  psubtp T1 U1 -> psubtp T2 U2
              -> psubtp (plus T1 T2) (plus U1 U2).
downshift_sub : nsubtp S R
                -> psubtp (downshift S) (downshift R).

arr_sub :  psubtp U1 T1 -> nsubtp S2 R2
              -> nsubtp (arr T1 S2) (arr U1 R2).
upshift_sub : psubtp T U
                -> nsubtp (upshift T) (upshift U).
and_sub :  nsubtp S1 R1 -> nsubtp S2 R2
              -> nsubtp (and S1 S2) (and R1 R2).

bot_sub_p : empty T -> psubtp T U.
bot_sub_n : empty T -> nsubtp (upshift T) R.
top_sub : full R -> nsubtp S R.
\end{verbatim}

\begin{theorem}
    [Adequacy]
    (1)
    There is a bijection between derivations of the judgment $t\le u$ and 
    canonical forms of the type $\mathtt{psubtp\, \enc{t}\, \enc{u}}$, 
    (2)
    There is a bijection between derivations of the judgment $s\le r$ and 
    canonical forms of the type $\mathtt{nsubtp\, \enc{s}\, \enc{r}}$, 
\end{theorem}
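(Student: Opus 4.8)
The plan is to carry out the same two-directional structural induction used for the earlier adequacy results, but now run simultaneously on the two mutually recursive judgments $t\le u$ and $s\le r$, and leaning on the already-established adequacy of the type encoding and of the emptiness and fullness encodings. For the forward direction I would define, by induction on the finite syntactic presentation of a circular subtyping derivation, an encoding $\enc{-}$ taking a derivation of $t\le u$ to a canonical term of type $\mathtt{psubtp}\,\enc{t}\,\enc{u}$ and a derivation of $s\le r$ to one of type $\mathtt{nsubtp}\,\enc{s}\,\enc{r}$: each of the rules $(\otimes\SUB)$, $(\mathbf{1}\SUB)$, $(\oplus\SUB)$, $(\downshift\SUB)$, $(\to\SUB)$, $(\upshift\SUB)$, $(\ampersand\SUB)$, $(\bot\SUB^{+})$, $(\bot\SUB^{-})$, $(\top\SUB)$ maps to an application of the matching constructor (\texttt{tensor\_sub}, \dots, \texttt{top\_sub}), with subderivations translated recursively and the side premises $t\tpempty$ and $r\tpfull$ translated using the emptiness/fullness encodings. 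A back-edge in the circular derivation becomes an occurrence of a recursion constant whose definition is the translation of the node it points back to; the prepattern restriction is met vacuously because these judgments carry no binders, so all recursion spines are empty. I would then check that the output is canonical of the claimed type --- matching the indices $\enc{t},\enc{u}$ (resp.\ $\enc{s},\enc{r}$) using adequacy of the type encoding --- and that it is trace-valid: every cycle of the object derivation passes through a coinductive subtyping rule, and \texttt{psubtp} and \texttt{nsubtp}, being declared last, carry the highest priority, so the corresponding trace in $\CoLFCircular$ contains a highest-priority coinductive constructor and $\vdash_\Sigma r\guardedin M$ holds.

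For the reverse direction I would induct on the structure of a canonical rational term of one of these types. Since the types are atomic, the term is neutral, $H\cdot S$; inverting the typing rules, either $H$ is one of the ten subtyping constructors, in which case $S$ supplies subterms of the expected subtyping, emptiness, or fullness types and we recurse and emit the matching object-level rule, or $H$ is a recursion constant, in which case we emit a back-edge to the enclosing node --- uniquely determined by the linear-ordering discipline on $\Sigma$ --- whose body is that constant's definition. Finiteness of rational terms modulo renaming guarantees the resulting object derivation is genuinely circular, i.e.\ finitely presentable, and trace-validity of the term yields the object-level validity condition by the same priority argument in reverse. Compositionality of both maps --- commutation with the object signature's recursive definitions and with the encoding of types --- falls out of the same induction.

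The per-rule bookkeeping is routine; the real obstacle is making the correspondence at cycles precise. One has to show that the forward map descends to a well-defined function on circular derivations --- two finite presentations denoting the same circular derivation must be sent to terms that $\CoLFCircular$ regards as equal (equality of infinite B\"ohm trees) --- and, conversely, that the equality algorithm never identifies encodings of distinct circular derivations. This is exactly where the slogan ``cycles in circular derivations correspond directly to occurrences of recursion constants'' must be promoted to a lemma, using the soundness and completeness of term equality (Section~\ref{subsection:term_equality}) together with the coincidence of the object system's validity condition with the trace condition of $\CoLFCircular$.
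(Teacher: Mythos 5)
Your proposal is correct and follows essentially the same route as the paper, which disposes of this theorem with a one-line ``induction on the depth of the infinitary derivation in both directions'' (mirroring the earlier adequacy proof's remark that cycles in circular derivations correspond directly to occurrences of recursion constants and that the object validity condition coincides with CoLF's trace condition). Your elaboration of the cycle/recursion-constant correspondence and the well-definedness of the forward map up to B\"ohm-tree equality is a faithful (and more honest) expansion of what the paper leaves implicit.
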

\begin{proof}
    By induction on the depth of the infinitary derivation in both directions.
\end{proof}

\subsection{Examples}
\label{subsection:examples_of_polarized_subtyping_proof}

\subsubsection{Subtyping of Lists.}

Assume we have $ \m{int}^+ \le \m{real}^+$, we want to show that $\m{intlist}^+ \le \m{reallist}^+$, 
where $\m{intlist}^+ = \mb 1 \oplus (\m{int}^+ \otimes \m{intlist}^+)$ and 
    $\m{reallist}^+ = \mb 1 \oplus (\m{real}^+ \otimes \m{reallist}^+)$.
This can be shown by the following cyclic derivation:

\[
    \infer[\oplus\SUB]{
        (\mt{il\_sub\_rl})\m{intlist}^+ \le \m{reallist}^+
    }{
        \infer[\mb 1 \SUB]{
            \mb 1 \le \mb 1
        }{
        }
        &
        \infer[\otimes\SUB]{
            \m{int}^+ \otimes \m{intlist}^+
            \le
            \m{real}^+ \otimes \m{reallist}^+
        }{
            \infer[]{
            \m{int}^+ 
            \le
            \m{real}^+
            }{}
            &
             (\mt{il\_sub\_rl})\m{intlist}^+ \le \m{reallist}^+
        }
    }
    \]

The types and the subtyping proof can be formalized in $\CoLFCircular$ as follows:

\begin{verbatim}
int : postp.
real : postp.
int_sub_real : psubtp int real.

intlist : postp = plus one (times int intlist).
reallist : postp = plus one (times real reallist).
il_sub_rl : psubtp intlist reallist = 
        or_sub (unit_sub) (tensor_sub int_sub_real il_sub_rl).
\end{verbatim}

\subsubsection{Subtyping of Computations.}
As a classic example \cite{Danielsson10mpc}, let us consider the type $\sigma^- = \downshift\sigma^- \to \sigma^-$ and 
$\tau^- =  \downshift(\downshift\tau^- \to \upshift\mb{0}) \to  \top$.
We show that $\sigma^- \le \tau^-$ by the following circular derivation:

\[
\infer[\to\SUB]{
    (\mt{eg\_s\_sub\_t})\ \sigma^- \le \tau^-
}{
    \infer[\downshift\SUB]{
        \downshift(\downshift \tau^- \le \upshift \mb{0}) \le \downshift \sigma^-
    }{
        \infer[\to\SUB]{
            \downshift \tau^- \to \upshift \mb{0} \le  \sigma^-
        }{
            \infer[\downshift\SUB]{
                \downshift \sigma^- \to \downshift \tau^-
            }{
                (\mt{eg\_s\_sub\_t})\ \sigma^- \le \tau^-
            }
            &
            \infer[\bot\SUB^-]{
                \upshift \mb{0} \le \sigma^-
            }{
                \infer[\mb{0}\EMP]{
                    \mb{0} \tpempty
                }{}
            }
        }
    }
    &
    \infer[\top \SUB]{
        \sigma^- \le \top
    }{
        \infer[\top\FULL]{
            \top \tpfull
        }{
        }
    }
}
    \]



The types and the subtyping proof can be encoded as follows:

\begin{verbatim}
eg_s : negtp = arr (downshift eg_s) (eg_s) .
eg_t : negtp = arr (downshift 
                        (arr (downshift eg_t) (upshift zero))
                    ) (top) .
eg_s_sub_t : nsubtp eg_s eg_t = 
        arr_sub 
                (downshift_sub 
                        (arr_sub
                                (downshift_sub eg_s_sub_t)
                                (bot_sub_n zero_emp)
                        )
                )
                (top_sub top_full).
\end{verbatim}

\section{Encoding Higher-Order Rational Terms and Equalities on Them}
\label{appendix:encoding_cyclic_lambda_terms}

As a meta-example, we encode the simply typed cyclic terms of $\CoLFCircular$, using an internal typing.
We encode circular terms in the object logic (i.e., $\CoLFCircular$ type theory) as  circular terms in the framework.
In this way, the equality checking of circular terms can be directly encoded as equality checking in the framework.

The syntax for internal simple typing of terms is as follows. The predicate \verb$itm$ is an \emph{intermediate} term between 
canonical terms \verb$tm$ and atomic terms \verb$atm$ that give rise to the coinductive structure of circular terms.
\begin{verbatim}
tp : type.
* : tp.
arr : tp -> tp -> tp.

tm : tp -> type.
atm : tp -> type.
itm : tp -> cotype.

lam : (atm A -> tm B) -> tm (arr A B).
base : itm A -> tm A.
at : atm A -> itm A.
app : atm (arr A B) -> tm A -> atm B.

eqtm : tm A -> tm A -> type.
eqtm/refl : eqtm M M.

\end{verbatim}
\begin{theorem}
    [Adequacy of Encoding]
    There is a bijection between simply typed terms of $\CoLFCircular$ (given by the syntax of $M$ on page $\pageref{table:syntax_of_colf_two_circular}$)
    and canonical terms of type \verb$tm$.
\end{theorem}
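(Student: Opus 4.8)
The plan is to exhibit the claimed bijection as a pair of mutually inverse maps, an \emph{encoding} $\enc{-}$ from simply typed CoLF terms to framework terms and a \emph{read-back} $\mathcal{R}$ in the other direction, and to check that each is well-defined and that their compositions are identities. First I would define $\enc{-}$ by induction on the raw term syntax of $M$ (mutually with the encoding of neutral terms and of spines, and with an auxiliary encoding of simple types into $\mt{tp}$): a $\lambda$-abstraction $\lambda x.\, M$ of simple type $\tau_1 \to \tau_2$ maps to $\mt{lam}\,(\lambda x.\,\enc{M})$ with the framework-bound variable $x$ now of type $\mt{atm}\,\enc{\tau_1}$, so that object-level binding is realized by the framework's own binder; a neutral term $H\cdot S$ at base type maps to $\mt{base}\,(\mt{at}\,(\mt{app}\,(\cdots(\mt{app}\,\tilde H\,\enc{M_1})\cdots)\,\enc{M_k}))$ when $H$ is a variable or an object constant, and to $\mt{base}\,(\tilde r\cdot \enc{S})$ when $H=r$; each object constant $c$ and recursion constant $r$ is translated to a fresh framework constant $\tilde c:\mt{atm}\,\enc{\tau_c}$, respectively recursion constant $\tilde r:\widetilde{\tau_r}$ with $\widetilde{\tau_r}$ a prepattern-$\Pi$ type over $\mt{atm}$-types ending in $\mt{itm}\,*$, extending the generic signature $\mt{tp},\mt{tm},\mt{atm},\mt{itm},\mt{lam},\mt{base},\mt{at},\mt{app}$ by these declarations; and prepattern variables and prepattern $\Pi$-abstractions in CoLF are sent to prepattern variables and prepattern $\Pi$-abstractions in the framework, so the prepattern restriction is preserved on the nose. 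Variables and constants occurring at arrow type, and applications whose arguments are expected at arrow type, require the usual $\eta$-expansion bookkeeping (inserting further $\mt{lam}$/$\mt{app}$ layers) to keep $\enc{M}$ in canonical form.

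Next I would verify that $\enc{-}$ is well-defined: by induction on the simple typing derivation $M:\tau$, show $\enc{M}$ is a well-typed canonical framework term of type $\mt{tm}\,\enc{\tau}$ in the encoded context. Preservation of canonical form is immediate since constructors are sent to constructors and the $\eta$-expansions are exactly those forced by the $\eta$-long target; well-typedness is checked rule by rule. On the signature side, every encoded definition $\tilde r:\widetilde{\tau_r}=\enc{M_r}$ is contractive, because the translation of a CoLF neutral term always re-enters $\mt{itm}$ through $\mt{at}$, so $\enc{M_r}$ is headed by $\mt{at}$; it is trace-valid because $\mt{itm}$ is the \emph{unique} $\cotype$, whence $\mt{at}$ has the highest priority, and every cycle through $\tilde r$ necessarily crosses $\mt{at}$; and the encoded signature stays linearly ordered since CoLF signatures are. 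Note that trace-validity of $\enc{M}$ holds unconditionally; this matches the fact that the simply typed skeleton of CoLF imposes no inductive/coinductive constraint at base type, so trace-validity is a vacuous condition on the source and an automatically satisfied property of the target, with nothing to match.

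For the reverse direction I would define $\mathcal{R}$ by induction on the structure of a framework \emph{canonical} term, using inversion on canonical forms: a canonical term of type $\mt{tm}\,(\mt{arr}\,A\,B)$ is necessarily $\mt{lam}\,(\lambda x.\,N)$; a canonical term of type $\mt{tm}\,*$ is $\mt{base}$ applied to a canonical $\mt{itm}\,*$-term; a canonical $\mt{itm}\,*$-term is either $\mt{at}$ applied to a canonical $\mt{atm}\,*$-term or a recursion-constant application $\tilde r\cdot S$ — and it is precisely here that $\mt{itm}$ being a $\cotype$ lets the circular case go through, mirroring CoLF recursion constants exactly; and a canonical $\mt{atm}$-term is an $\mt{app}$-spine bottoming out in a variable or an object constant. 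Reading these back yields a simply typed CoLF term, with the typing premises of the framework rules translating back to simple typings. Finally I would prove $\mathcal{R}(\enc{M})=M$ and $\enc{\mathcal{R}(N)}=N$ by induction (on $M$, respectively on the canonical structure of $N$), using a routine lemma that $\enc{-}$ commutes with hereditary substitution and renaming (needed for the $\lambda$-case and for unfolding $\tilde r$ via the rule $S\rhd^{A^o}M$), and matching up recursion constants by name; this gives the bijection, which is moreover compositional by the substitution lemma. (The $\mt{eqtm}$ family then internalizes CoLF's term equality via the soundness and completeness of the equality algorithm, though the stated theorem does not require this.)

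I expect the main obstacle to be the $\eta$-expansion bookkeeping combined with the three-layer bureaucracy $\mt{tm}/\mt{itm}/\mt{atm}$: one must arrange the definition of $\enc{-}$ so that a variable or constant of arrow type, and an argument position expecting a $\mt{tm}$ of arrow type, are expanded in exactly the shape that $\mathcal{R}$ inverts, and then argue that $\mathcal{R}\circ\enc{-}=\mathrm{id}$ without the $\mt{lam}$/$\mt{base}$/$\mt{at}$ insertions drifting out of sync across the induction. A secondary point requiring care is that the whole construction is parametric in the object signature, so one must carry along the fact that the extended framework signature is itself valid — well-kinded, linearly ordered, with contractive and guarded recursive definitions — but as noted this is a consequence of $\mt{at}$ heading every encoded definition and being the sole coinductive constructor.
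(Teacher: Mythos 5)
Your proposal is correct and follows essentially the same route as the paper's (very terse) proof: a compositional encoding and read-back defined by induction on the syntax in each direction, with object-level recursion constants mapped to framework recursion constants; your elaboration of the canonical-form inversion, the $\eta$-expansion bookkeeping, and the validity of the extended signature fills in details the paper leaves implicit. One small slip: an encoded recursive definition at base type is headed by $\mt{base}$ rather than $\mt{at}$ (cf.\ $\mt{fix\_body}$), but contractiveness and guardedness still hold for the reason you give, namely that every cycle passes through $\mt{at}$, the sole (and hence highest-priority) coinductive constructor.
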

\begin{proof}
    By induction on the syntax in both directions. Recursion constants in the object logic correspond to 
    recursion constants of the framework.
\end{proof}

Now we present an example encoding of the term $\m{fix} = \lambda f.\, f\, (\m{fix}\, f)$ as \verb$fix$ and 
its one-step unfolding \verb$fix2$. We show that \verb$fix$ and \verb$fix2$ are equal by the proof \verb$eqfix$.

\begin{verbatim}
%% fix : (* -> *) -> * = \f. f (r f)
fix_body : atm (arr * *) -> tm * = 
    [f] base (at (app f (fix_body f))).
fix : tm (arr (arr * *) *) = 
    lam (fix_body).

fix_body2 : atm (arr * *) -> tm * = 
    [f] base (at (app f (
            base (at (app f (
            fix_body2 f
        )))
    ))).
fix2 : tm (arr (arr * *) *) = 
    lam (fix_body2).


eqfix : eqtm fix fix2 = eqtm/refl.
\end{verbatim}

As a meta-example, we consider the encoding of stream of $2$'s with one padding between each pair of $2$'s, 
with an encoding of the signature in Section \ref{subsection:finitely_padded_streams}.
We show that two different representations \verb$r$ and \verb$r'$ of the same stream are proved to be equal (\verb$eqr$)
 in the framework.

\begin{verbatim}
%% a stream of twos with single padding in between
%% r =  cocons (succ (succ zero)) (pad (next (r)))
int : tp.
zero : atm int.
succ : atm (arr int int).

pstream : tp.
padding : tp.
cocons : atm (arr int (arr padding pstream)).
next : atm (arr pstream padding).
pad : atm (arr padding padding).

r : tm pstream = base (at (app (app cocons 
    (base (at (app succ (base (at (app succ (base (at zero))))))))
   ) (base (at (app pad (base (at (app next r))))))
   )).

r' : tm pstream = base (at (app (app cocons 
  (base (at (app succ (base (at (app succ (base (at zero))))))))
  ) (base (at (app pad (base (at (app next 
   (base (at (app (app cocons 
    (base (at (app succ (base (at (app succ (base (at zero))))))))
   ) (base (at (app pad (base (at (app next r'
   )
   ))))))
   ))
  ))))))
  )).

eqr : eqtm r r' = eqtm/refl.
\end{verbatim}

\section{A Bisimulation Relation}
\label{appendix:bisimulation_example}

As an example, we could establish a bisimulation between the even/odd predicate and the conatural number predicate, which says that
every conatural number is even or odd, and every even or odd number is a conatural number.
\begin{verbatim}
conat : cotype.
cozero : conat.
cosucc : conat -> conat.


even : conat -> cotype.
odd : conat -> cotype.

ev_z : even cozero.
ev_s : odd X -> even (cosucc X).
od_s : even X -> odd (cosucc X) .

%% omega is both even and odd
omega : conat = cosucc omega.
ev_omega : even omega = ev_s (od_omega).
od_omega : odd omega = od_s (ev_omega).

%% bisimulation: every number is even or odd and
%% every odd and even number is a natural number
isconat : conat -> cotype.
isconat_z : isconat cozero.
isconat_s : isconat X -> isconat (cosucc X).

isconat_omega : isconat omega = isconat_s (isconat_omega).

bisim_ev : even X -> isconat X -> cotype.
bisim_od : odd X -> isconat X -> cotype.

bisim_ev_z : bisim_ev ev_z isconat_z.
bisim_ev_s : bisim_od D E -> bisim_ev (ev_s D) (isconat_s E).
bisim_od_s : bisim_ev D E -> bisim_od (od_s D) (isconat_s E).
\end{verbatim}


\end{document}